\newtheorem{mytheorems}{Theorem}[section]
\newtheorem{them}[mytheorems]{Theorem}
\newtheorem{lem}[mytheorems]{Lemma}
\newtheorem{cor}[mytheorems]{Corollary}
\newtheorem{claim}[mytheorems]{Claim}
\newtheorem{definition}[mytheorems]{Definition}
\newtheorem{obs}[mytheorems]{Observation}
\renewcommand{\epsilon}{\varepsilon}
\newcommand{\eps}{\varepsilon}
\renewcommand{\Pr}{\operatorname*{\mathbf{Pr}}}
\newcommand{\Ex}{\operatorname*{\mathbf{E}}}
\newcommand{\poly}{\operatorname{\mathrm{poly}}}
\newcommand{\polylog}{\poly\log}
\newcommand{\N}{\mathbb{N}}
\newcommand{\Z}{\mathbb{Z}}
\renewcommand{\tilde}{\widetilde}
\newcommand{\per}{\operatorname{\mathsf{per}}} 
\newcommand{\lcp}{\operatorname{\mathsf{lcp}}}
\newcommand{\rot}{\operatorname{\mathsf{rot}}}
\newcommand{\dd}{\mathinner{.\,.}}
\newcommand{\sA}{\mathsf{A}}
\newcommand{\sB}{\mathsf{B}}
\newcommand{\sC}{\mathsf{C}}
\newcommand{\lce}{\mathsf{LCE}}
\newcommand{\sJ}{\mathsf{J}}
\newcommand{\seed}{{\sigma}}
\newcommand{\seedDS}{{\sigma_{\mathsf{DS}}}}
\newcommand{\seedH}{{\sigma_H}}
\newcommand{\sQ}{\mathsf{Q}}
\newcommand{\DS}{\mathsf{DS}}
\newcommand{\cT}{\mathcal{T}}
\DeclarePairedDelimiter{\ket}{\lvert}{\rangle}
\begin{document}

\title{Quantum Speed-ups for String Synchronizing Sets,\\ Longest Common Substring, and $k$-mismatch Matching}
\author{
Ce Jin\thanks{cejin@mit.edu. Partially supported by NSF Grant CCF-2129139}\\MIT \and Jakob Nogler\thanks{jnogler@ethz.ch}\\ETH Zurich 
}

\date{}

	\setcounter{page}{0} \clearpage
\maketitle
	\thispagestyle{empty}

	\begin{abstract}
		\emph{Longest Common Substring (LCS)} is an important text processing problem, which has recently been investigated in the quantum query model.
		 The decisional version of this problem, \emph{LCS with threshold $d$}, asks whether two length-$n$ input strings have a common substring of length $d$.  The two extreme cases, $d=1$ and $d=n$, correspond respectively to Element Distinctness and Unstructured Search, two fundamental problems in quantum query complexity.
		However, the intermediate case $1\ll d\ll n$ was not fully understood. 
		
		We show that the complexity of LCS with threshold $d$ smoothly interpolates between the two extreme cases up to $n^{o(1)}$ factors:
\begin{itemize}
	\item LCS with threshold $d$ has a quantum algorithm in $n^{2/3+o(1)}/d^{1/6}$ query complexity and time complexity, and requires at least $\Omega(n^{2/3}/d^{1/6})$ quantum query complexity.
\end{itemize}
Our result improves upon previous upper bounds $\tilde O(\min \{n/d^{1/2}, n^{2/3}\})$  (Le Gall and Seddighin ITCS 2022, Akmal and Jin SODA 2022), and answers an open question of Akmal and Jin.

Our main technical contribution is a quantum speed-up of the powerful \emph{String Synchronizing Set} technique introduced by Kempa and Kociumaka (STOC 2019). 
It consistently samples $n/\tau^{1-o(1)}$ synchronizing positions in the string depending on their length-$\Theta(\tau)$ contexts, and each synchronizing position can be reported by a quantum algorithm in $\tilde O(\tau^{1/2+o(1)})$ time.  Our quantum string synchronizing set also yields a near-optimal LCE data structure in the quantum setting.

As another application of our quantum string synchronizing set, we study the \emph{$k$-mismatch Matching} problem, which asks if the pattern has an occurrence in the text with at most $k$ Hamming mismatches. 
Using a structural result of Charalampopoulos, Kociumaka, and Wellnitz (FOCS 2020), we obtain: 
\begin{itemize}
	\item $k$-mismatch matching has a quantum algorithm with $k^{3/4} n^{1/2+o(1)}$ query complexity and $\tilde O(kn^{1/2})$ time complexity.  We also observe a non-matching quantum query lower bound of $\Omega(\sqrt{kn})$.
\end{itemize}
	\end{abstract}
	\newpage

\section{Introduction}

String processing is one of the first studied fields in the early days of theoretical computer science, yet most of the basic problems in this field are still actively being studied today, not only because of numerous applications in  bio-informatics and data mining, but also due to their inherent theoretical interest. 
Inspired by the power of quantum computers, recent works investigated \emph{quantum algorithms} for many fundamental string problems,  such as pattern matching \cite{rameshvinay}, longest common substring  \cite{legall, aj22}, edit distance \cite{boroujeni2021approximating},  regular language recognition \cite{DBLP:conf/focs/AaronsonGS19,DBLP:conf/mfcs/AmbainisBIKKPSS20}, minimal string rotation \cite{ying,aj22}.  
These studies are also connected to topics such as quantum query complexity \cite{DBLP:conf/focs/AaronsonGS19,DBLP:conf/mfcs/AmbainisBIKKPSS20}, quantum fine-grained complexity \cite{legall,ccc20aaronson,DBLP:conf/stacs/BuhrmanPS21,DBLP:conf/innovations/BuhrmanLPS22}, quantum walks and history-independent data structures \cite{legall,aj22,DBLP:conf/innovations/BuhrmanLPS22,DBLP:journals/siamcomp/Ambainis07}.

\paragraph*{Longest Common Substring}
The starting point of our paper is the \emph{Longest Common Substring (LCS)} problem: given two strings $S_1,S_2\in \Sigma^n$, the task is to compute $\mathrm{LCS}(S_1,S_2)$, defined as the maximum possible length $d$ of their common substring $S_1[i\dd i+d-1]=S_2[j\dd j+d-1]$.\footnote{We remark that in the literature the same acronym could also refer to the Longest Common Subsequence problem. The difference is that a subsequence is not necessarily a contiguous part of the string. In this paper, we only consider the Longest Common Substring problem.}
 The classical computational complexity of LCS is relatively well-understood: it can be solved in $O(n)$ time using suffix trees \cite{DBLP:conf/focs/Weiner73,DBLP:conf/focs/Farach97}. In the more powerful \emph{quantum query model}, where the input strings are given as a quantum black box, recent works showed that LCS can have \emph{sublinear} algorithms: the first result was given by Le Gall and Seddighin \cite{legall}, showing that LCS can be solved by a quantum algorithm in $\tilde O(n^{5/6})$ query complexity and time complexity.\footnote{Throughout this paper, $\tilde O(\cdot),\tilde \Omega(\cdot ),\tilde \Theta(\cdot)$ hide $\polylog(n)$ factors, where $n$ is the input size.} This bound was later improved to $\tilde O(n^{2/3})$ by Akmal and Jin \cite{aj22}.

 Both \cite{legall} and \cite{aj22} actually considered the decisional problem, \emph{LCS with threshold $d$}, which takes an extra parameter $1\le d \le n$ and simply asks whether $\mathrm{LCS}(S_1,S_2)\ge d$ holds. This problem is particularly interesting from a quantum query complexity perspective,
 as its two extreme cases correspond to two fundamental problems in this field: the $d=1$ case asks whether there exist $i,j$ such that $S_1[i] = S_2[j]$, i.e., the (bipartite) element distinctness problem (also known as the claw finding problem), which has quantum query complexity $\Theta(n^{2/3})$ due to the celebrated results of Ambainis \cite{DBLP:journals/siamcomp/Ambainis07} and Aaronson and Shi \cite{DBLP:journals/jacm/AaronsonS04}.
  The $d=n$ case asks to find  $i$ such that $S_1[i]\neq S_2[i]$, which is equivalent to the unstructured search problem, with well-known quantum query complexity $\Theta(n^{1/2})$ \cite{DBLP:journals/siamcomp/BennettBBV97} achieved by Grover Search \cite{DBLP:conf/stoc/Grover96}. 

A natural question arises: \emph{what is the true quantum query complexity of LCS with threshold $d$ in the intermediate case $1\ll d\ll n$?}\,\,
Le Gall and Seddighin \cite{legall} showed an upper bound of $\tilde O(\min \{ n^{2/3}\cdot \sqrt{d}, n/\sqrt{d}\})$ (which is $\tilde O(n^{5/6})$ in the worst case $d=n^{1/3}$). Akmal and Jin \cite{aj22} showed an upper bound of $\tilde O(n^{2/3})$ regardless of the threshold $d$, which is only near-optimal in the unparameterized setting. However, no matching lower bound is known when $1\ll d\ll n$.

Our first main result is a quantum algorithm for LCS with threshold $d$ whose complexity smoothly interpolates between the two extreme cases $d=1$ and $d=n$ (up to subpolynomial factors). This answers an open question of Akmal and Jin \cite{aj22}. A schematic comparison of previous bounds and our new bound is in \cref{myfigure}.

\begin{figure}
\centering    
\includegraphics[width=250pt]{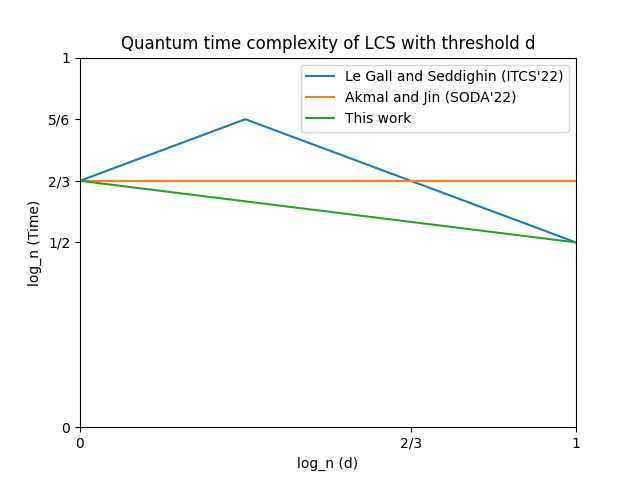}
\caption{Quantum time complexity of LCS with threshold $d$}
\label{myfigure}
\end{figure}

\begin{restatable}[LCS with threshold $d$, upper bound]{them}{lcsubrestate}
	\label{thm:lcsub}
    Given $S_1,S_2\in \Sigma^n$, there is a quantum algorithm that decides whether $S_1,S_2$ have a common substring of length $d$ in $\tilde O(n^{2/3}/d^{1/6-o(1)})$ quantum query complexity and time complexity.
\end{restatable}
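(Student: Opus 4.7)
The plan is to reduce LCS with threshold $d$ to a quantum claw-finding problem on a small, consistently sampled set of synchronizing positions. I set $\tau = \Theta(d)$ and invoke the quantum string synchronizing set from the paper's main technical theorem. This produces synchronizing sets $A_1,A_2 \subseteq [n]$ for $S_1,S_2$, each of size at most $N := n/\tau^{1-o(1)}$, equipped with a canonical label $\ell(\cdot)$ that depends only on the length-$\Theta(\tau)$ context of a position; any synchronizing position together with its label can be reported by a quantum procedure in $\tilde O(\tau^{1/2+o(1)})$ time.

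By the consistency property of synchronizing sets, whenever $S_1[i\dd i+d-1] = S_2[j\dd j+d-1]$, there is an offset $a$ with $i+a \in A_1$, $j+a \in A_2$, and $\ell(i+a) = \ell(j+a)$; conversely, a matching pair of labels witnesses, up to a short verification, a common substring of length $\geq d$ (using a single quantum LCE query from the near-optimal quantum LCE data structure announced in the abstract, if the raw label length $\Theta(\tau)$ is insufficient). The decision problem thus becomes: does there exist $(p,q) \in A_1 \times A_2$ with $\ell(p) = \ell(q)$? I solve this by quantum claw finding (Tani's variant of Ambainis's element-distinctness walk), which makes $\tilde O(N^{2/3})$ label-oracle queries. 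Composing with the $\tilde O(\tau^{1/2+o(1)})$-time label oracle gives
\[
\tilde O\bigl(N^{2/3}\cdot \tau^{1/2+o(1)}\bigr) \;=\; \tilde O\bigl((n/\tau)^{2/3}\cdot \tau^{1/2+o(1)}\bigr) \;=\; \tilde O\bigl(n^{2/3}/\tau^{1/6-o(1)}\bigr) \;=\; \tilde O\bigl(n^{2/3}/d^{1/6-o(1)}\bigr),
\]
which matches the bound in the theorem statement.

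Two standard technical subtleties need handling. First, synchronizing positions do not exist inside locally $\tau$-periodic regions, so any length-$\geq d$ common substring contained entirely inside such a region must be detected by an auxiliary routine. I would pre-identify the $\tilde O(n/\tau)$ maximal runs of period $\leq \tau/3$ in each string and Grover-search over pairs of runs, exploiting the rigid structure of periodic strings to compare each pair in polylogarithmic time via their period, root rotation, and length. Second, the label oracle is inherently bounded-error; amplifying its success probability to $1-1/\poly(N)$ via amplitude amplification (polylogarithmic overhead) lets the claw-finding routine treat it as essentially exact.

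The main obstacle is upgrading the bound from query complexity to time complexity. The claw-finding quantum walk maintains a data structure holding $\tilde O(N^{2/3})$ synchronizing positions and their labels, and each transition must insert, delete, and re-fingerprint in $\tilde O(\tau^{1/2+o(1)})$ amortized time while remaining history-independent. I would assemble this from the history-independent hashing and quantum-walk data-structure machinery already used by Akmal and Jin, combined with the quantum LCE structure supplied by this paper's synchronizing set. Verifying that all these ingredients compose correctly, and that the error accumulated across $\tilde O(N^{2/3})$ bounded-error subcalls remains $o(1)$, is the main bookkeeping effort.
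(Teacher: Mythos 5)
Your high-level plan (synchronizing-set anchors with $\tau=\Theta(d)$, then an Ambainis-style element-distinctness walk over the $n/\tau^{1-o(1)}$ anchors with a $\tau^{1/2+o(1)}$-time reporting oracle, giving $(n/\tau)^{2/3}\cdot\tau^{1/2+o(1)}=n^{2/3}/d^{1/6-o(1)}$) is the paper's approach, and your exponent arithmetic is right. But the central reduction has a genuine gap: anchored length-$d$ common substrings are \emph{not} collisions of any per-position label $\ell(\cdot)$. A common substring $S_1[i\dd i+d)=S_2[j\dd j+d)$ is anchored at $(p,q)=(i+h,\,j+h)$ for some shift $h\in[0\dd d)$ that is not known in advance, and the condition to detect is two-sided: the longest common \emph{suffix} of $S_1[1\dd p]$ and $S_2[1\dd q]$ plus $\lcp(S_1[p\dd n],S_2[q\dd n])$ must be at least $d$. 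Equality of the length-$\Theta(\tau)$ contexts of $p$ and $q$ neither implies nor is implied by this (the match may extend mostly to the left of the anchor, or mostly to the right), so plain claw finding on labels does not solve the problem even at the level of query complexity; the "short verification" you allow yourself cannot be deferred to after a collision is found, because there may be no collision at all in your sense. This is exactly what \cref{thm:aj} (Akmal--Jin) supplies: a quantum walk whose checking step maintains the stored anchors sorted by the suffix starting at the anchor and by the reversed prefix ending at it, and detects a pair whose two-sided extension reaches length $d$, in $\tilde O(m^{2/3}(\sqrt d+\mathcal T))$ time for an anchor set of size $m$ with reporting time $\mathcal T$. Substituting that theorem for your claw-finding step (and noting $\sqrt d$ and $\tau^{1/2+o(1)}$ are the same up to $d^{o(1)}$) repairs the argument and also disposes of your "main obstacle" paragraph, since the history-independent data structure is part of that black box.

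Your treatment of the periodic case also overshoots the budget. There can be $\Theta(n/\tau)$ maximal runs of period at most $\tau/3$, so even writing them down costs $\tilde O(n/d)$, and Grover over pairs of runs costs $\sqrt{(n/\tau)^2}=n/\tau=\tilde O(n/d)$ iterations; this exceeds $n^{2/3}/d^{1/6}$ whenever $d\ll n^{2/5}$. The paper instead folds the periodic case into the same anchor set: by \cref{lem:esa} it suffices to add, for each $\tau$-run, $O(1)$ positions derived from the first and last occurrences of its Lyndon root (the set $\sB$), each computable locally in $\tau^{1/2+o(1)}$ time via period-finding and minimal-rotation subroutines. This keeps the anchor set at size $n/\tau^{1-o(1)}$ and lets the single quantum walk of \cref{thm:aj} handle both the periodic and aperiodic cases within the same $n^{2/3}/d^{1/6-o(1)}$ bound. (Your use of two separate synchronizing sets for $S_1$ and $S_2$ rather than one for $S_1\$S_2$ is harmless provided both are built from the same random seed, and your error-amplification remark is standard.)
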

We also observe a quantum query lower bound that matches the upper bound up to $d^{o(1)}$ factors. Hence, we obtain an almost complete understanding of LCS with threshold $d$ in the quantum query model.
\begin{restatable}[LCS with threshold $d$, lower bound]{them}{lcslbrestate}
	\label{thm:lcslb}
For $|\Sigma|\ge \Omega(n/d)$, deciding whether $S_1,S_2\in \Sigma^n$ have a common substring of length $d$ requires $\Omega(n^{2/3}/d^{1/6})$ quantum queries.
\end{restatable}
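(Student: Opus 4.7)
The plan is to reduce from a string-valued variant of bipartite element distinctness and then multiply together the two known quantum query lower bounds: the $\Omega(N^{2/3})$ bound for ordinary element distinctness (Aaronson--Shi; Ambainis) and the $\Omega(\sqrt d)$ Grover-type bound for equality testing of two $d$-symbol strings. Given an instance of bipartite claw-finding with $N/2$ length-$d$ strings on each side, say $(s_i)_{1\le i\le N/2}$ and $(t_j)_{1\le j\le N/2}$ in $\Sigma^d$, I would construct
\begin{align*}
S_1 \;=\; \sharp_1\, s_1\, \sharp_2\, s_2\, \cdots\, \sharp_{N/2}\, s_{N/2},\qquad S_2 \;=\; \sharp'_1\, t_1\, \sharp'_2\, t_2\, \cdots\, \sharp'_{N/2}\, t_{N/2},
\end{align*}
where the $\sharp_i,\sharp'_j$ are pairwise distinct separator symbols lying outside $\Sigma$. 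Because every block has length exactly $d$ and each separator appears in only one of the two strings, any length-$d$ common substring can neither contain a separator nor span a block boundary, and must therefore equal a full block $s_i$ matching a full block $t_j$. Hence $\mathrm{LCS}(S_1,S_2)\ge d$ iff the claw-finding instance has a collision. The resulting strings have length $n=\Theta(Nd)$ over an alphabet of size $\Theta(N)=\Theta(n/d)$, matching the theorem's hypothesis on $|\Sigma|$, and each query to $S_1$ or $S_2$ is simulated by at most one query to the claw-finding input.

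Given this reduction, it suffices to argue that bipartite claw-finding on $N$ length-$d$ string elements requires $\Omega(N^{2/3}\sqrt d)$ symbol queries quantumly; substituting $N=\Theta(n/d)$ then yields $\Omega\bigl((n/d)^{2/3}\cdot d^{1/2}\bigr) = \Omega(n^{2/3}/d^{1/6})$. The upper bound $O(N^{2/3}\sqrt d)$ is achieved by Ambainis's element-distinctness walk augmented with Grover-based equality checks on length-$d$ elements, so the target is the correct quantity. For the lower bound itself, the plan is to combine the two ingredients above using the multiplicativity of the general quantum adversary bound under function composition (Reichardt; Lee--Mittal--Reichardt--Spalek--Szegedy), which is tight for quantum query complexity.

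The main obstacle is that bipartite claw-finding is not literally a direct composition: each element $s_i$ participates in $N/2$ equality subproblems, so those subproblems share inputs and the textbook composition theorem does not apply out of the box. I would handle this either by (a) a direct adversary argument on a planted-collision hard distribution in which the YES and NO ensembles differ in a single length-$d$ block, so that both the ``which block'' difficulty ($\Omega(N^{2/3})$) and the ``certify equality'' difficulty ($\Omega(\sqrt d)$) must be simultaneously overcome, or (b) by invoking the span-program / certificate-structure framework of Reichardt and Belovs, which is capable of handling adversary lower bounds with structured input sharing. The critical technical step is verifying that the two hardness contributions truly multiply rather than add; a useful sanity check is that the construction correctly recovers $\Omega(n^{2/3})$ at $d=1$ (pure element distinctness) and $\Omega(\sqrt n)$ at $d=n$ (pure string equality / Grover), smoothly interpolating between the two extremes.
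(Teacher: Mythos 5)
Your reduction from a string-valued bipartite claw-finding problem to LCS with threshold $d$ is sound, and your target bound $\Omega(N^{2/3}\sqrt d)$ with $N=\Theta(n/d)$ is the right quantity. But the proof has a genuine gap exactly where you flag it: the lower bound for claw finding on length-$d$ string elements is the entire content of the theorem, and you leave it as a plan. Composing element distinctness with \emph{string equality} is not a function composition in the sense required by the multiplicative composition theorems, because each equality gadget reads two elements and every element is shared across $N/2$ gadgets; neither of your proposed workarounds (a direct adversary argument on a planted-collision distribution, or the certificate-structure framework) is carried out, and (a) in particular amounts to re-deriving a composed adversary bound from scratch, which is not routine.

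The paper dissolves this obstacle by choosing a different inner gadget: instead of having each block carry a length-$d$ string that must be compared against other blocks, each block carries a \emph{single symbol hidden inside $d$ positions} via the promise problem $\mathsf{PSEARCH}_d$ (exactly one non-$\star$ position, output its symbol). The composed problem $\mathrm{ED}_m\circ\mathsf{PSEARCH}_d$ is then a genuine composition on \emph{disjoint} variable blocks — the outer element-distinctness instance is on the $m=n/d$ revealed symbols, and each inner instance touches its own $d$ variables — so the standard composition theorem applies out of the box and yields $\Omega(m^{2/3}\cdot\sqrt d)=\Omega(n^{2/3}/d^{1/6})$. The reduction to LCS then pads each block as $\star^{10d}\,a_{i,1}\cdots a_{i,d}\,\star^{10d}$ with distinct delimiters between blocks, so that a common substring of length $20d+1$ exists iff two blocks reveal the same non-$\star$ symbol, while any common substring of length $\ge 11d$ forces such a match. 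If you want to salvage your route, the cleanest fix is to replace your inner gadget with $\mathsf{PSEARCH}_d$ (or observe that your ``string-valued claw finding'' hard instances can be taken of exactly this form), rather than trying to prove a composition theorem for shared-input equality gadgets.
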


\paragraph*{Quantum string synchronizing sets} The main technical ingredient in our improved algorithm for  LCS with threshold $d$ (\cref{thm:lcsub}) is a quantum speed-up for constructing a \emph{String Synchronizing Set}, a powerful tool for string algorithms recently introduced by Kempa and Kociumaka \cite{kk19}. This technique was originally applied to small-alphabet strings compactly represented in word-RAM \cite{kk19},  and has later found numerous applications in other types of string problems \cite{cpm19,lcs2021,DBLP:conf/focs/KempaK20, DBLP:journals/corr/abs-2106-12725,aj22, dynamicsa}. 
   Informally speaking, for a length parameter $\tau\ge 1$ and a string $T \in \Sigma^n$ that does not have highly-periodic regions, a $\tau$-synchronizing set $\sA \subseteq [1\dd n]$ of string $T$ is a subset of positions in $T$ (called ``synchronizing positions'') that hit every length-$\Theta(\tau)$ region of $T$ at least once, and are ``consistent'' in the sense that any two identical length-$\Theta(\tau)$ substrings in $T$ should be hit at the same position(s) (in other words,  whether $i$ is included in $\sA$ only depends on the length-$\Theta(\tau)$ context around position $i$ in string $T$).
     A formal definition (with a more technical treatment of highly-periodic cases) is stated in \cref{defn:sync}, and a concrete example can be found in \cite[Figure 1]{cpm19}.
   Kempa and Kociumaka \cite{kk19} obtained a classical deterministic $O(n)$-time algorithm for constructing a $\tau$-string synchronizing set of optimal size $\Theta(n/\tau)$.

   String synchronizing sets have been naturally applied in LCS algorithms \cite{lcs2021,aj22}: for length threshold $d \gg\tau$, a length-$d$ common substring $S_1[i_1\dd i_1+d-1]=S_2[i_2\dd i_2+d-1]$ implies that these two regions contain consistently sampled synchronizing positions, i.e., there exists a shift $0\le h<d$ such that both $i_1+h$ and $i_2+h$ belong to the $\tau$-synchronizing set $\sA$ (this argument ignores the highly-periodic case, which can be dealt with otherwise). This allows us to focus on length-$d$ common substrings ``anchored at'' synchronizing positions in $\sA$, which could save computation if $\sA$ is sparse.

 To implement this anchoring idea, Akmal and Jin \cite{aj22} designed a sublinear quantum walk algorithm that finds an anchored length-$d$ common substring, assuming efficient local access to elements in the $\tau$-synchronizing set $\sA$.  
 However, this assumption is difficult to achieve, due to the high computational cost of $\sA$: in Kempa and Kociumaka's construction \cite{kk19}, elements in $\sA$ can only be reported in $O(\tau)$ (classical) time per element, which is much slower than required in Akmal and Jin's framework. To bypass this issue, Akmal and Jin composed Kempa and Kociumaka's construction \cite{kk19} with another input-oblivious construction called difference cover \cite{DBLP:conf/cpm/BurkhardtK03,DBLP:journals/tocs/Maekawa85} to reduce the reporting time, at the cost of greatly increasing the size of $\sA$. 

In this work, we directly resolve this issue faced by \cite{aj22}, by constructing a $\tau$-synchronizing set equipped with a faster quantum algorithm for reporting its elements. The sparsity of our construction is worse than optimal by only a $\tau^{o(1)}$ factor.
\begin{them}[Quantum String Synchronizing Set, informal]
Given string $T[1\dd n]$ and integer $1\le \tau\le n/2$, there is a (randomized) $\tau$-synchronizing set $\sA$ of size $n/\tau^{1-o(1)}$, such that each element of $\sA$ can be reported in $\tilde O(\tau^{\frac{1}{2}+o(1)})$ quantum query complexity and time complexity.
\label{thm:sync:infor}
\end{them}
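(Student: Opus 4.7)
The plan is to emulate Kempa-Kociumaka's classical local-minimum construction of a $\tau$-synchronizing set while replacing each costly primitive with a quantum-efficient analogue, bootstrapped through $O(\log\log n)$ geometrically-spaced scales $\tau_0 < \tau_1 < \cdots < \tau_L = \tau$ with $\tau_k \approx \tau_{k-1}^2$. Simultaneously at each scale $\tau_k$ I will maintain a $\tau_k$-synchronizing set $\sA_k$ reportable in $\tilde O(\tau_k^{1/2+o(1)})$ quantum time together with a quantum LCE oracle $\mathcal{L}_k$ that computes $\min\{\tau_k,\lcp(T[i\dd],T[j\dd])\}$ in $\tilde O(\tau_k^{1/4+o(1)})$ quantum time, the two being built in tandem.

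First I would construct $\mathcal{L}_k$ from $\sA_{k-1}$ and $\mathcal{L}_{k-1}$: by consistency of $\sA_{k-1}$, the length-$\tau_k$ windows at $i$ and $j$ decompose into $O(\tau_k/\tau_{k-1})$ aligned blocks delimited by $\sA_{k-1}$-positions, so the first mismatch can be localized via Grover's search over the block boundaries and then pinpointed inside a single mismatching block with $\mathcal{L}_{k-1}$. Next I would define $\sA_k$ by the standard rule that $i\in\sA_k$ iff $T[i\dd i+\tau_k-1]$ is strictly lex-minimal among $T[i'\dd i'+\tau_k-1]$ for $i'$ ranging over a length-$\Theta(\tau_k)$ window around $i$, treating highly-periodic windows by the usual auxiliary rule keyed on periodic-run boundaries (themselves detected by a quantum search for small periods). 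Consistency, hitting, and density $\Theta(1/\tau_k)$ up to $\tau^{o(1)}$ losses would then follow from the classical Kempa-Kociumaka analysis, with the $\tau^{o(1)}$ slack absorbed uniformly across the recursion levels.

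To report whether $i\in\sA_k$, I would invoke the hitting property of $\sA_{k-1}$ to restrict the candidate window to its intersection with $\sA_{k-1}$, which has only $\tilde O(\tau_k/\tau_{k-1})$ elements, and then apply Ambainis-style quantum minimum finding using $\mathcal{L}_k$ for each lexicographic comparison. Balancing the cost of min finding, each LCE query, and ancillary membership tests in $\sA_{k-1}$ gives the target $\tilde O(\tau_k^{1/2+o(1)})$; the key identity is that $\sqrt{\tau_k/\tau_{k-1}}\cdot\tau_k^{1/4+o(1)} = \tau_k^{1/2+o(1)}$ when $\tau_k = \tau_{k-1}^2$, so the two quantum-search factors combine on the nose rather than compounding.

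The main technical obstacle will be making the $O(\log\log n)$ recursion levels compound into only a $\tau^{o(1)}$ multiplicative overhead rather than a polynomial blowup: this forces the LCE oracle, the Karp-Rabin-style hashing, and the reporting routine to share the same underlying synchronizing data across scales, so that no level wastes even a $\poly\log$ factor that would cascade. A secondary obstacle is consistent tie-breaking when two length-$\tau_k$ windows happen to be equal, since an inconsistent choice of winner destroys the synchronizing property; and a final obstacle is keeping the quantum handling of periodic regions within the same $\tilde O(\tau^{1/2+o(1)})$ budget, which requires that short-period detection, boundary localisation, and deterministic offset placement all use the same block structure as $\mathcal{L}_k$ rather than rereading the window.
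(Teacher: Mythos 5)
There are two genuine gaps in your proposal, and each one on its own breaks the target bound.

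First, the LCE oracle $\mathcal{L}_k$ answering $\min\{\tau_k,\lcp(T[i\dd n],T[j\dd n])\}$ in $\tilde O(\tau_k^{1/4+o(1)})$ time cannot exist in this setting. Deciding whether $\mathcal{L}_k(i,j)=\tau_k$ is exactly testing equality of two length-$\tau_k$ strings, which requires $\Omega(\sqrt{\tau_k})$ quantum queries (it is unstructured search for a mismatch); the only way around this is a global preprocessing pass, which is precisely what \cref{thm:lce} and the lower bound \cref{cor:opt} trade off, and which you cannot afford here because each synchronizing position must be reportable from scratch in $\tilde O(\tau^{1/2+o(1)})$ time given only the seed. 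With the correct $\tilde\Theta(\sqrt{\tau_k})$ cost per lexicographic comparison, your minimum finding over $\sqrt{\tau_k/\tau_{k-1}}=\tau_k^{1/4}$ candidates costs $\tau_k^{3/4}$, not $\tau_k^{1/2}$. This is exactly the obstacle the paper identifies and circumvents: rather than comparing windows lexicographically, it builds a layered fingerprint $\phi$ (via Vishkin's deterministic sampling, \cref{lem:ds,lem:vishkin}) whose length-$\ell$ prefix costs only $\tilde O(\sqrt{\tau_\ell})$ to evaluate, and arranges the tournament so that expensive comparisons happen only $O(\sqrt{\tau_\ell/\tau_{\ell-1}})$ times per level, with $L=\Theta(\sqrt{\log\tau})$ levels of ratio $\tau^{1/L}$ (your doubly-exponential spacing $\tau_k=\tau_{k-1}^2$ would make even that sum polynomial). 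A further inconsistency: you define $\sA_k$ by minimality over \emph{all} positions in the window but compute membership by comparing only against $\sA_{k-1}$ positions, so the reported set is not the defined set.

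Second, choosing the minimizer by \emph{lexicographic} order gives no sparsity guarantee, so the "size $n/\tau^{1-o(1)}$" claim does not follow from the Kempa--Kociumaka analysis, which is an expectation bound for a \emph{random} hash. Concretely, on $T=1\,2\,3\cdots n$ over alphabet $[n]$ the windows $T[i\dd i+\tau_k)$ are lexicographically increasing in $i$, the string has no short periods, and every position is the lex-minimum of its forward window, so your rule selects $\Theta(n)$ positions. The paper's construction composes the deterministic-sampling fingerprint with approximate min-wise independent hash functions (\cref{defn:M}) precisely so that the probability a given position is a local minimizer is $\tau^{o(1)}/\tau$ (\cref{lem:window-count,lem:spar}); some randomization of the comparison key with this min-wise property is essential and is missing from your construction.
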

In comparison, all previous constructions of $\tau$-synchronizing sets use at least $\tilde \Omega(\tau)$ (classical) time for reporting one element, with no quantum speed-ups known.
A formal version of \cref{thm:sync:infor} will be given in \cref{thm:quantum-sync}, with more detailed specification on the sparsity and efficient reporting.  

By plugging our quantum string synchronizing set (\cref{thm:sync:infor}) into Akmal and Jin's quantum walk algorithm \cite{aj22} for LCS with threshold $d$, we improve their quantum query and time complexity  from $\tilde O(n^{2/3})$ to $\tilde O(n^{2/3}/d^{1/6-o(1)})$, hereby obtaining \cref{thm:lcsub}.

Given numerous known applications of string synchronizing sets in classical string algorithms \cite{kk19,cpm19,lcs2021,DBLP:conf/focs/KempaK20, DBLP:journals/corr/abs-2106-12725, dynamicsa}, we expect \cref{thm:sync:infor} to be a very useful tool  in designing quantum string algorithms. As a preliminary example, we observe that Kempa and Kociumaka's optimal LCE data structure \cite{kk19} can be adapted to the quantum setting using \cref{thm:sync:infor}.

\paragraph*{An LCE data structure with quantum speed-up}
In the LCE problem, one is given a string $T\in \Sigma^n$ and needs to preprocess it into a data structure $D$, so that later one can efficiently answer $\lce(i,j)$ given any $i,j\in [n]$, defined as the length of the longest common prefix between  $T[i\dd n]$ and $T[j\dd n]$. 

We show the following result by combining \cref{thm:sync:infor} and \cite{kk19}.

\begin{restatable}[Data structure for LCE queries]{them}{lcerestate}
Given string $T\in \Sigma^n$ and integer $1\le \tau \le n/2$, there is a quantum preprocessing algorithm that outputs in $\mathcal{T}_{\mathsf{prep}}=\tilde O(n/\tau^{\frac{1}{2}-o(1)})$ time a classical data structure $D$ (with high success probability), such that:  given any $i,j\in [n]$, one can compute $\lce(i,j)$ in $\mathcal{T}_{\mathsf{ans}}=\tilde O(\sqrt{\tau})$ quantum time, given access to $T$ and $D$.
\label{thm:lce}
\end{restatable}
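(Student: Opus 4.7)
The plan is to lift Kempa and Kociumaka's classical LCE data structure from~\cite{kk19} to the quantum setting by (i) replacing their deterministic synchronizing-set construction with our quantum sampler \cref{thm:sync:infor}, (ii) building the same classical auxiliary structures on top of $\sA$ during preprocessing, and (iii) handling the residual short-range comparisons at query time by Grover-based binary search.

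For preprocessing, invoke \cref{thm:sync:infor} with the given parameter $\tau$ to obtain a $\tau$-synchronizing set $\sA\subseteq[1\dd n]$ of size $|\sA|=n/\tau^{1-o(1)}$, and enumerate all its elements upfront in $\tilde O\bigl(|\sA|\cdot \tau^{1/2+o(1)}\bigr)=\tilde O(n/\tau^{1/2-o(1)})$ quantum time. Then classically assemble the two auxiliary structures of~\cite{kk19}: (a) a suffix-array-like order on the suffixes of $T$ starting at positions in $\sA$, together with an LCP array and RMQ structure, enabling $O(1)$-time $\lce(i',j')$ queries for $i',j'\in \sA$; and (b) a compact description of the highly-periodic regions of $T$ (the maximal runs of period $\le \tau/3$), which uses only $O(n/\tau)$ space by standard run combinatorics and supports $O(1)$-time queries whenever at least one endpoint lies in such a region.

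For a query on $(i,j)$, first dispatch the periodic subcase in $O(1)$ classical time using~(b). Otherwise, by the density property of $\sA$ the successor synchronizing positions $i'\ge i,\,j'\ge j$ satisfy $i'-i,\,j'-j=O(\tau)$; by the consistency of $\sA$, if $\lce(i,j)\ge 3\tau$ then $i'-i=j'-j$ and $\lce(i,j)=(i'-i)+\lce(i',j')$, where the second term is returned in $O(1)$ classical time by~(a). To distinguish the case $\lce(i,j)<3\tau$ (and to find the first mismatch in that case) from the case $\lce(i,j)\ge 3\tau$, run Grover-based binary search for the first mismatch in the length-$O(\tau)$ windows $T[i\dd i+O(\tau))$ vs.\ $T[j\dd j+O(\tau))$, which costs $\tilde O(\sqrt{\tau})$ quantum time and dominates the query cost.

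The main obstacle is building the sorted-order structure~(a) within the preprocessing budget $\tilde O(n/\tau^{1/2-o(1)})$: a naive comparison-based sort of the $|\sA|$ suffixes would incur $\Omega(|\sA|\cdot \tau)=\Omega(n)$ character comparisons, exceeding the budget for large $\tau$. I expect to bypass this by reusing Kempa and Kociumaka's bootstrapped suffix-sorting, which exploits the consistency of $\sA$ to reduce the order among synchronizing suffixes to sorting short canonical keys attached to each synchronizing position, together with a recursive call at a coarser scale. The keys depend only on the local $\Theta(\tau)$-sized structure around each element of $\sA$, which is already revealed during the quantum sampling of \cref{thm:sync:infor}; extracting them and performing the classical sort fits within $\tilde O(n/\tau^{1/2-o(1)})$ time, and the $O(n/\tau)$-sized run structure~(b) is assembled separately in $\tilde O(n/\tau)$ time, giving the overall preprocessing bound.
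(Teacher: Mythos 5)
Your proposal is correct and takes essentially the same route as the paper: enumerate $\sA$ from \cref{thm:quantum-sync} in $\tilde O(n/\tau^{1/2-o(1)})$ quantum time, then run Kempa and Kociumaka's preprocessing and query algorithms, replacing each basic operation on length-$\Theta(\tau)$ substrings (lexicographic comparison, pattern matching, period computation) by its $\tilde O(\sqrt{\tau})$-time quantum counterpart, which is exactly how the paper resolves the suffix-sorting obstacle you flag. The only quibble is your remark that the sorting keys are ``already revealed during the quantum sampling''---they are not classically recorded by the reporting algorithm, but this is immaterial since the $O(|\sA|\log|\sA|)$ Grover-based comparisons of length-$O(\tau)$ contexts still fit in the $\tilde O(n/\tau^{1/2-o(1)})$ budget.
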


To put \cref{thm:lce} in context, we remark that: (1) Without any preprocessing, an LCE query can be answered in $\mathcal{T}_{\mathsf{ans}} =\tilde O(\sqrt{n})$ quantum time, using binary search and Grover search. (2) In the classical setting, it is well known that LCE data structures can be constructed in $\mathcal{T}_{\mathsf{prep}} =O(n)$ time, supporting $\mathcal{T}_{\mathsf{ans}} = O(1)$ time per LCE query (see e.g., \cite[Section 2.1]{kk19}). In comparison, \cref{thm:lce} shows that in the quantum setting there can be a trade-off of $\mathcal{T}_{\mathsf{prep}} \cdot  \mathcal{T}_{\mathsf{ans}} \le n^{1+o(1)}$. Then, we show that this trade-off is near-optimal in almost all regimes of interest.
\begin{restatable}{them}{optrestate}
	\label{cor:opt}
	An LCE data structure as described in \cref{thm:lce} must have 
	\[(\mathcal{T}_{\mathsf{prep}} + \sqrt{n} )\cdot  (\mathcal{T}_{\mathsf{ans}} + 1) \ge \tilde \Omega(n),\]
	even if $\mathcal{T}_{\mathsf{prep}}$ and $\mathcal{T}_{\mathsf{ans}}$ only measure quantum query complexity.
\end{restatable}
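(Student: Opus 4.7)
The plan is to reduce from a quantum direct-product lower bound for parallel unstructured search. Fix a parameter $k\ge 1$ (to be tuned) and consider $k$ independent OR instances: inputs $y^{(1)},\dots,y^{(k)}\in\{0,1\}^{n/k}$, where the task is to return $\mathrm{OR}(y^{(i)})$ for every $i$ with constant overall success probability. The quantum direct-product theorem for unstructured search (e.g.\ Klauck--\v{S}palek--de~Wolf) yields an $\Omega(\sqrt{nk})$ lower bound on quantum queries to the concatenated input.

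I would encode the instances into a single LCE input. Let $\$,*$ be two fresh symbols outside $\{0,1\}$ and set
\[
T \;=\; 0^{n/k}\,\$\,y^{(1)}\,*\,y^{(2)}\,*\,\cdots\,*\,y^{(k)},
\]
a string of length $\Theta(n)$. Writing $p_i$ for the starting position of $y^{(i)}$ in $T$, a direct check shows that $\lce(1,p_i)$ equals the position of the first $1$ in $y^{(i)}$, and equals $n/k$ exactly when $y^{(i)}=0^{n/k}$ (the comparison then terminates at $\$$ vs.\ $*$). Hence one LCE query decides $\mathrm{OR}(y^{(i)})$, and each query to $T$ reduces to $O(1)$ queries to the $y^{(j)}$'s.

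Given an LCE data structure with quantum-query preprocessing $\cT_{\mathsf{prep}}$ and answer-time $\cT_{\mathsf{ans}}$, I can solve the $k$-parallel OR problem using $\cT_{\mathsf{prep}}+O(k\log k)\,\cT_{\mathsf{ans}}$ quantum queries: preprocess once, then invoke the data structure $k$ times with $O(\log k)$-fold amplification per call, so a union bound keeps the overall error below $1/3$. Combining with the direct-product bound yields, for every valid $k$,
\[
\cT_{\mathsf{prep}} + O(k\log k)\,\cT_{\mathsf{ans}} \;\ge\; \Omega(\sqrt{nk}).
\]
Choosing $k\approx n/(\cT_{\mathsf{ans}}\log n)^2$ (valid when $\cT_{\mathsf{ans}}\le\tilde O(\sqrt n)$) balances the two left-hand terms and forces $\cT_{\mathsf{prep}}\cdot\cT_{\mathsf{ans}}\ge\tilde\Omega(n)$, while in the complementary regime $\cT_{\mathsf{ans}}>\tilde\Omega(\sqrt n)$ the bound already follows from the $k=1$ instance. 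Combining the two regimes gives the claimed $(\cT_{\mathsf{prep}}+\sqrt n)(\cT_{\mathsf{ans}}+1)\ge\tilde\Omega(n)$.

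The main technical nuisance is the amplification loss: because the weak direct-product bound demands constant total success probability, the naive union bound forces each LCE call to succeed with probability $1-1/(3k)$, paying a $\log k$ factor that is only absorbed by the $\tilde\Omega$ in the final bound. One could eliminate this loss by invoking a strong direct-product theorem (tolerating success probability down to $2^{-\Omega(k)}$), sharpening the conclusion to $\Omega(n)$, but the weaker statement phrased in \cref{cor:opt} does not require this refinement.
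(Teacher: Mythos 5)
Your argument is correct, and the high-level skeleton matches the paper's: exhibit a $k$-parameterized family of problems with quantum query complexity $\Omega(\sqrt{kn})$ that is solvable by one preprocessing pass plus $\tilde O(k)$ LCE queries, deduce $\mathcal{T}_{\mathsf{prep}} + \tilde O(k)\,\mathcal{T}_{\mathsf{ans}} \ge \Omega(\sqrt{kn})$, and then tune $k \approx n/\tilde\Theta(\mathcal{T}_{\mathsf{ans}}^2)$ (handling the regime $\mathcal{T}_{\mathsf{ans}} \ge \tilde\Omega(\sqrt n)$ trivially). The instantiation, however, is genuinely different. The paper uses the single decision problem ``is $\|x\|_1 \le k$'' with its $\Omega(\sqrt{kn})$ polynomial-method lower bound (\cref{thm:lbham}, Beals et al./Paturi), reduced to the data structure via the already-established kangaroo-jumping routine (\cref{thm:candidatepos-sync}): one preprocessing plus $O(k)$ LCE calls on the string $x$ itself, with no gadget construction needed. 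You instead invoke a direct-product theorem for $k$-fold OR and build a bespoke string encoding ($0^{n/k}\$ y^{(1)} * \cdots * y^{(k)}$) so that each OR instance is decided by one LCE query against the all-zeros prefix; your encoding and the claim that $\lce(1,p_i)=n/k$ iff $y^{(i)}=0^{n/k}$ both check out. The paper's route is lighter-weight (a textbook lower bound for a single decision problem, no direct-product machinery, no union bound over $k$ calls since the amplification is absorbed inside one bounded-error algorithm), whereas yours is self-contained at the reduction level and would, as you note, even yield a clean $\Omega(n)$ (no polylog loss) if one invokes the strong direct-product theorem of Klauck--\v{S}palek--de Wolf, since exponentially small success probability then removes the need for per-call amplification. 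One cosmetic point: in the regime $\mathcal{T}_{\mathsf{ans}} \ge \tilde\Omega(\sqrt n)$ you do not even need the $k=1$ lower bound --- the target inequality $(\mathcal{T}_{\mathsf{prep}}+\sqrt n)(\mathcal{T}_{\mathsf{ans}}+1)\ge \tilde\Omega(n)$ holds there unconditionally.
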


The LCE data structure \cref{thm:lce} will be useful in our quantum algorithm for the $k$-mismatch string matching problem.

\paragraph*{$k$-mismatch string matching}
The \textit{string matching with $k$ mismatches problem}, also referred to as the \textit{$k$-mismatch matching problem}, is that of determining whether some substring of a text $T$ has Hamming distance at most $k$ from a pattern $P$. The concern of solving efficiently this problem goes far beyond theoretical interests. Multimedia, digital libraries, and computational biology widely employ algorithms for solving this problem, since a broader and less rigid concept of string equality is needed there. 

The $k$-mismatch problem in the classical setting has been studied intensively since the 1980s. Since then, several works were published \cite{LANDAU1986239, GG86, AALMPE00, CFPSS16, GPUP17, BKW19, ckw20}, constantly improving the running time bounds for the general case and for some specific ranges for the value of $k$.  However, this problem has not been studied in the quantum computational setting so far. Many difficulties arise when trying to adapt earlier classical string algorithms to the quantum case. For instance, in these algorithms FFT is heavily employed, which hardly adapts efficiently enough to the quantum setting. 
Fortunately, recent studies in string matching with $k$ mismatches have discovered several useful structural results.
   Bringmann, K\"unnemann and Wellnitz \cite{BKW19} brought up the fundamental question of what the solution structure of pattern matching with $k$ mismatches would look like. Given a pattern $P$ of length $m$ and a text of length $2m$, they found out that either the number of $k$-mismatch occurrences of $P$ in $T$ is bounded by $O(k^2)$, or $P$ is approximately periodic, meaning that there exists a string $Q$ such that the number of mismatches between $P$ and the periodic extension of $Q$ is $O(k)$. 
Charalampopoulus, Kociumaka and Wellnitz \cite{ckw20} improved the bound on the number of occurrences of matches in the non approximately periodic case from $O(k^2)$ to $O(k)$, and showed that it is tight. They offered a constructive proof and provided a meta-algorithm that can be adapted with appropriate subroutines to the quantum setting.

For the $k$-mismatch matching problem our first result is the following: 
\begin{them} 
\label{thm:mismatch}
Let $P \in \Sigma^m$, let $T \in \Sigma^n$, and let $k \in [1\dd m]$ be a threshold. Then, we can verify the existence of a $k$-mismatch occurrence of $P$ in $T$ (and report its starting position in case it exists) in $\tilde{O}(k\sqrt{n})$ query complexity and time complexity.
\end{them}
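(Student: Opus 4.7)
The plan is to combine the structural dichotomy of Charalampopoulos, Kociumaka, and Wellnitz (CKW) with Grover search, quantum threshold counting, and the LCE data structure of \cref{thm:lce}. First I would perform the standard reduction: decompose $T$ into $\Theta(n/m)$ overlapping blocks of length $2m$ shifted by $m$, so that every length-$m$ substring of $T$ lies inside some block and it suffices to detect a $k$-mismatch occurrence in some block. I would then invoke \cref{thm:lce} once on $T$ with $\tau = n/k^2$, producing (with high probability) a classical LCE data structure in quantum preprocessing time $\tilde O(k\sqrt{n})$ that afterwards answers every LCE query in $\tilde O(\sqrt{n}/k)$ quantum time.

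For each block, I would apply the CKW meta-algorithm, which returns one of two outputs: either (a) a list of $O(k)$ candidate starting positions together with a certificate that $P$ is not approximately periodic in the block, or (b) an approximate period $Q$ of short length $q$ such that all $k$-mismatch occurrences of $P$ in the block lie in $O(1)$ arithmetic progressions with common difference $q$. In case (a), summing over all blocks yields $O(nk/m)$ candidates, over which I would apply Grover search; a single candidate is verified by testing whether the Hamming distance between $P$ and its alignment in $T$ exceeds $k$, achievable in $\tilde O(\sqrt{mk})$ quantum time via quantum threshold counting. This yields a total of $\tilde O(\sqrt{nk/m}\cdot\sqrt{mk}) = \tilde O(k\sqrt{n})$. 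In case (b), each arithmetic progression is handled by identifying the $O(k)$ positions where $P$ deviates from the periodic extension of $Q$ and checking consistency with the block via LCE queries and Grover, again within the same budget.

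The CKW pattern-analysis phase itself must be executed within the $\tilde O(k\sqrt{n})$ budget. Classically it uses FFT-based Hamming-distance computations, which do not adapt well to the quantum query model; my plan is to replace every such step by an LCE-based next-mismatch search (answered via the data structure from \cref{thm:lce}) combined with Grover, so that the analysis per block uses $\tilde O(k)$ LCE queries and remains within budget when summed over all $O(n/m)$ blocks. The main obstacle will be the periodic case, where the number of $k$-mismatch occurrences can be $\Omega(k)$ per block and one cannot afford to enumerate each one individually. The key insight needed is that the occurrences within a progression share almost all of their mismatches with the template $Q^{\infty}$, so existence can be decided by verifying the template once and then localizing the $O(k)$ deviations with Grover over the block, which is both faithful to CKW's structural analysis and compatible with the time budget.
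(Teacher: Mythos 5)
Your overall strategy---the CKW structural dichotomy, block decomposition of $T$, Grover over $O(nk/m)$ candidates with an $\tilde O(\sqrt{km})$-per-candidate Hamming-threshold test, and mismatch-list comparison in the approximately periodic case---is the same skeleton as the paper's proof, and the headline calculation $\sqrt{nk/m}\cdot\sqrt{km}=k\sqrt n$ is the right one. However, there are two genuine gaps. First, you collapse CKW's three-way classification (breaks / repetitive regions / global approximate period) into a two-way one, and in doing so you skip the repetitive-regions case, which is where the real difficulty lies. There, the $O(k)$ candidate occurrences of a sampled repetitive region inside a block cannot simply be ``returned as a list'' and then Grover-searched: a single block may contain many such occurrences, each is only discovered by running a mismatch-localization subroutine whose running time depends on whether the occurrence is actually there, and enumerating them all before verification is too expensive. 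The paper resolves this with quantum search with \emph{variable times} (nested over candidates within a block and over blocks), plus a counting bound from CKW's Corollary~3.5 to control $\sum_i\tau_i=O(nk/m)$. Asserting that ``the CKW meta-algorithm returns a list of $O(k)$ candidate positions'' hides exactly this work.

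Second, your cost accounting for the analysis phase does not close. The structural decomposition of \cref{lem:decomp} depends only on $P$ and should be computed once, in $\tilde O(\sqrt{km})$, by finding successive mismatches against the periodic template via exponential search combined with Grover (the paper's \cref{lem:expsearch}); no LCE data structure is needed for the $\tilde O(k\sqrt n)$-time result at all (it only enters for the improved \emph{query} bound of \cref{thm:querymismatch}). Your plan instead runs an LCE-based analysis per block with $\tau=n/k^2$, so each LCE query costs $\tilde O(\sqrt n/k)$ and $\tilde O(k)$ queries per block cost $\tilde O(\sqrt n)$; summed (or even Grover-amplified) over $\Theta(n/m)$ blocks this is $\tilde O(n^{3/2}/m)$ (resp.\ $\tilde O(n/\sqrt m)$), which exceeds $k\sqrt n$ whenever $n\gg km$ (resp.\ $n\gg k^2m$). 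The fix is to abandon the global LCE structure here and use the length-$O(m)$, $n$-independent primitives of \cref{lem:expsearch} and \cref{lem:candidatepos}.
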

This algorithm can be modified to produce a new one having a slightly better query complexity (up to an $n^{o(1)}$ factor). 
\begin{them} 
\label{thm:querymismatch}
Let $P \in \Sigma^m$, let $T \in \Sigma^n$, and let $k \in [1\dd m]$ be a threshold. Then, we can verify the existence of a $k$-mismatch occurrence of $P$ in $T$ (and report its starting position in case it exists) in $\tilde O(k^{3/4}n^{1/2}m^{o(1)})$ query complexity and $\tilde{O}(k\sqrt{n})$ time complexity.
\end{them}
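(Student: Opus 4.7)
The plan is to refine the algorithm behind Theorem~\ref{thm:mismatch} by replacing each Hamming-distance probe with LCE-based kangaroo-jump verification that exploits the data structure of Theorem~\ref{thm:lce}, and by amplifying with Grover search both over candidate alignments and over text blocks.

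First, I would split $T$ into $O(n/m)$ overlapping length-$2m$ blocks and reduce the task to deciding, for a single block $T'$, whether $P$ has a $k$-mismatch occurrence in $T'$. A Grover search over the $O(n/m)$ blocks multiplies the inner cost by $\sqrt{n/m}$, so it suffices to design a bounded-error inner subroutine achieving $\tilde O(k^{3/4} m^{1/2+o(1)})$ quantum queries per block.

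For the inner subroutine I would invoke the Charalampopoulos--Kociumaka--Wellnitz structural dichotomy: either $P$ is approximately periodic, in which case I reuse the periodic-case handling that already powers Theorem~\ref{thm:mismatch}, or there are only $O(k)$ candidate alignments of $P$ inside $T'$. In the non-periodic branch I would prepare the quantum LCE data structure of Theorem~\ref{thm:lce} for $P$ and $T'$ with a parameter $\tau$ to be optimized, paying $\tilde O(m/\tau^{1/2-o(1)})$ quantum time upfront. Then each candidate alignment is verified via $O(k)$ kangaroo jumps, each a single LCE query costing $\tilde O(\sqrt{\tau})$, for $\tilde O(k\sqrt{\tau})$ total per candidate. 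Running Grover search over the $O(k)$ candidates costs $\tilde O(\sqrt{k}\cdot k\sqrt{\tau}) = \tilde O(k^{3/2}\sqrt{\tau})$ queries on top of preprocessing; balancing $m/\sqrt{\tau} \approx k^{3/2}\sqrt{\tau}$ gives $\tau \approx m/k^{3/2}$ and inner cost $\tilde O(k^{3/4} m^{1/2+o(1)})$, as desired.

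Composing with the outer Grover over $O(n/m)$ blocks yields $\sqrt{n/m}\cdot \tilde O(k^{3/4} m^{1/2+o(1)}) = \tilde O(k^{3/4} n^{1/2+o(1)})$ queries, matching the statement; the time bound $\tilde O(k\sqrt{n})$ is inherited unchanged from Theorem~\ref{thm:mismatch}, since the LCE-based verification only saves queries, while the periodic branch and other bookkeeping continue to dictate the running time. The main obstacles I anticipate are packaging the inner subroutine as a clean bounded-error quantum procedure so that the outer Grover call is legitimate (requiring routine but careful error reduction, since the LCE preprocessing of Theorem~\ref{thm:lce} only succeeds with high probability), and handling the degenerate regime $m \lesssim k^{3/2}$ where the balanced $\tau$ would drop below $1$, which can be resolved by falling back to on-the-fly $\tilde O(\sqrt{m})$ LCE computation via Grover search.
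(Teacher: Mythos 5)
Your core idea for the ``breaks'' branch is exactly the paper's: build the LCE structure with $\tau \approx m/k^{3/2}$, verify a candidate by $O(k)$ kangaroo jumps at $\tilde O(\sqrt{\tau})$ each (i.e.\ $\tilde O(k^{1/4}\sqrt{m})$ per candidate), and Grover over the $O(k)$ candidates per block. However, the proposal has a genuine gap in the approximately periodic branch. You say you will ``reuse the periodic-case handling that already powers Theorem~\ref{thm:mismatch},'' but that handling (\cref{lem:approxcase}) costs $\tilde O(k\sqrt{m})$ \emph{queries} per block, which composed with the outer Grover gives $\tilde O(k\sqrt{n})$ queries --- a factor $k^{1/4}$ over the claimed bound. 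Kangaroo verification does not rescue this case either: the periodic case has up to $\Theta(m)$ candidate positions, and even restricting to period-aligned candidates, $\sqrt{m/k}\cdot k^{1/4}\sqrt{m} = m/k^{1/4}$ can exceed $k^{3/4}\sqrt{m}$. The paper needs a separate argument (\cref{lem:approxmatching2}): either $|Q|$ is short enough that one reads all of $Q$ and then verifies candidates with zero further queries, or $|Q|\ge k$ so there are only $O(m/k)$ aligned candidates, each verifiable in $\tilde O(k)$ queries from precomputed mismatch lists, giving $\tilde O(\sqrt{km})$ per block. The repetitive-regions branch is glossed over in the same way: there the $O(k)$ candidates are not given for free but must be enumerated as $k'$-mismatch occurrences of a repetitive region, which again requires the approximate-period machinery and a variable-time quantum search because blocks contain varying numbers of occurrences.

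A second, smaller gap is the degenerate regime $m\lesssim k^{3/2}$. Your fallback of on-the-fly $\tilde O(\sqrt{m})$ LCE queries gives $\tilde O(k\sqrt{m})$ per candidate and $\tilde O(k^{3/2}\sqrt{m})$ per block after Grover over $O(k)$ candidates, which exceeds the budget $k^{3/4}\sqrt{m}$ by a factor $k^{3/4}$. The correct (and simpler) fix, used in the paper, is to read all $O(m)$ characters of the block and the pattern outright, which costs $O(m)\le O(k^{3/4}\sqrt{m})$ queries precisely when $m\le k^{3/2}$, and then run the time-efficient algorithm of \cref{thm:mismatch} query-free on the stored data.
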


We remark that $k$-mismatch matching is no easier than the problem of checking whether $T\in \{0,1\}^n$ has Hamming weight at most $k$, which  has a quantum query lower bound of $\Omega(\sqrt{kn})$ for small $k$ \cite{polynomial,Paturi92}. We leave it as an open question to close the gap between this lower bound and our upper bound for $k$-mismatch matching.

\subsection{Technical Overview}
\paragraph*{String synchronizing sets}
We outline our construction of $\tau$-synchronizing sets  with $\tilde O(\tau^{1/2+o(1)})$ quantum reporting time per element (\cref{thm:sync:infor}). For simplicity, here we only consider the non-periodic case, i.e., we assume that the input string $T[1\dd n]$ does not contain any length-$\tau$ substring with period at most $\tau/3$.

In this non-periodic case, our starting point is a simple randomized construction of \cite{kk19}, which is much simpler than their deterministic construction (whose sequential nature makes it more difficult to have a quantum speed-up).
Their idea is to \emph{pick local minimizers of a random hash function}. More specifically, sample a random hash value $\phi(S)$ for every $S\in \Sigma^{\tau}$,   and denote $\Phi(i):= \phi(T[i\dd i+\tau))$. Then, include $i$ in the synchronizing set if and only if $\min_{j\in [i\dd i+\tau]} \Phi(j)$ is achieved at $j\in \{i, i+\tau\}$. It is straightforward to verify that, (1) whether $i\in \sA$ is completely determined by $T[i\dd i+2\tau)$ and the randomness, and (2) every length-$\tau$ interval contains at least one $i\in \sA$. So $\sA$ is indeed an $\tau$-synchronizing set.
Then, the non-periodic assumption ensures that nearby length-$\tau$ substrings of $T$ are distinct, so that the probability of $i\in \sA$ is $1/\Omega(\tau)$ for every $i$, which implies the sparsity of $\sA$ in  expectation.

To implement the above idea in the quantum setting, the first challenge is to implement a hash function $\phi$ on length-$\tau$ substrings that can be evaluated in at most $\tilde O(\tau^{1/2+o(1)})$ quantum time.  
Naturally, the hash function should be random enough. A minimal (but not sufficient) requirement seems to be that $\phi$ should at least be able to distinguish two different strings $x,y\in \Sigma^\tau$ by outputting different values $\phi(x)\neq \phi(y)$ with good probability. However, it is not clear how such hash family can be implemented in only $\tilde O(\tau^{1/2+o(1)})$ quantum time. 
Many standard hash functions that have this property,  such as Karp-Rabin fingerprints, provably require at least $\Omega (\tau)$ query complexity.

To overcome this challenge, we observe that it is not necessary to use a hash family with full distinguishing ability.  In order for the randomized construction to work, we only need that the hash values of two heavily overlapping length-$\tau$ substrings to behave like random.
  Fortunately, this weaker requirement can be satisfied using the \emph{deterministic sampling} method of Vishkin \cite{DBLP:journals/siamcomp/Vishkin91}.  This technique was originally used for parallel algorithms for exact string matching \cite{DBLP:journals/siamcomp/Vishkin91}, and was later adapted into a quantum algorithm for exact string matching in $\tilde O(\sqrt{n})$ time by Ramesh and Vinay \cite{rameshvinay}, as well as some other types of exact string matching problem, e.g., \cite{DBLP:conf/cpm/GasieniecPR95,DBLP:conf/stacs/CrochemoreGPR95}. In the context of string matching, the idea is to carefully sample $O(\log n)$ positions in the pattern, so that a candidate match in the text that agrees on all the sampled positions can be used to rule out other nearby candidate matches, and hence save computation by reducing the number of candidate matches that have to undergo a linear-time full check against the pattern.
   In our situation, we use the quantum algorithm for deterministic sampling \cite{rameshvinay} in $\tilde O(\sqrt{\tau})$ time, and use these carefully sampled positions to build a hash function that is guaranteed to evaluate to different values on two heavily overlapping strings.
   To the best of our knowledge, this is the first application of deterministic sampling in a completely different context than designing string matching algorithms.

Having designed a suitable hash function $\phi$, a straightforward attempt to report a synchronizing position is to use the quantum minimum finding algorithm \cite{minimumfinding} on a length-$\tau$ interval $[j\dd j+\tau)$ to find the $i\in [j\dd j+\tau)$ with minimal hash value $\Phi(i)=\phi(T[i\dd i+\tau))$. This incurs $O(\sqrt{\tau})$ evaluations of the hash function, each taking $\tilde O(\sqrt{\tau})$ quantum time, which would still be $\tilde O(\tau)$ in total, slower than our goal of $\tilde O(\tau^{1/2+o(1)})$.

To obtain better quantum query and time complexity, we further modify our construction of the hash function $\phi$, so that  one can find the minimal hash value in a length-$\tau$ interval in a tournament tree-like fashion: The leaves are the $\tau$ candidates, each internal node picks the minimum hash value among its children using quantum minimum finding, and the root will be the minimum among all $\tau $ intervals. Here, the crucial point is to make sure that comparing two nodes in a lower level (corresponding to two closer candidates) can take less time.
 This makes sure that the total quantum time complexity of finding the minimum hash value is still $\tau^{1/2+o(1)}$.
We remark that this tournament tree structure of our algorithm is inspired by the Lexicographically Minimal String Rotation algorithm by Akmal and Jin \cite{aj22}, but our construction requires more technical details to deal with the case where a node in the tournament tree returns multiple minimizers. 

\paragraph*{$k$-mismatch matching}

Our work heavily draws from the structural insights into pattern matching with mismatches described by Charalampopoulus, Kociumaka and Wellnitz \cite{ckw20}, where an algorithm is given that analyzes the pattern $P$ for useful structure that can help bounding the number of $k$-mismatch occurrences. The algorithm constructs either a set $\mathcal{B}$ of $2k$ breaks, a set $\mathcal{R}$ of disjoint repetitive regions, or determines that $P$ is almost periodic. We adapt the algorithm to the quantum case, obtaining a subroutine taking $\tilde{O}(\sqrt{km})$ time. In order to find efficiently a $k$-mismatch occurrence of $P$ in $T$, we behave differently depending on the outcome of the subroutine. 

If the analysis resulted in breaks or repetitive regions, it is possible to identify $O(nk/m)$ candidate positions for finding a $k$-mismatch occurrence of $P$ in $T$. Using Grover search, we verify each of the candidate positions by finding one after another $k$ mismatches between the pattern and the text. If we do this in a naive way, this costs $\tilde{O}(k\sqrt{m})$ time. We show how to combine Grover search with an exponential search such that the required time can be reduced to $\tilde{O}(\sqrt{km})$. Intuitively, by using again Grover search we can verify all candidate positions in $\tilde{O}(\sqrt{nk/m} \cdot \sqrt{km}) = \tilde{O}(k\sqrt{n})$ time. The time analysis is however a little bit more tricky: finding the $O(nk/m)$ candidate positions involves having to work with subroutines which might take longer in case we found a candidate position and thus have to verify it. To overcome this issue, we have to use quantum search with variable times \cite{AA08, DBLP:conf/mfcs/CornelissenJOP20}.

Otherwise, we determined that the pattern $P$ has an approximate period $Q$, meaning that there are only $O(k)$ mismatches between $P$ and the periodic extension of $Q$. If there exists a $k$-mismatch occurrence $P$ in $T$, then there exists a long enough region in $T$ with few mismatches with a periodic extension of $Q$ as well. The key insight here is that all $k$-mismatch occurrences overlapping with this region have to align with the approximate period $Q$. This allows us to compare only the $O(k)$ positions in $P$ and $T$ which differ from $Q$ for each possible starting position of a $k$-mismatch occurrence. As there are at most $O(n)$ of them, we can use Grover's search over them to find a $k$-mismatch of $P$ in $T$ in $O(k\sqrt{n})$.

Lastly, we observe that, using the LCE data structure based on our quantum string synchronizing sets, we can use kangaroo jumping to verify a candidate position in $\tilde{O}(k^{1/4}\sqrt{m})$ query complexity, improving upon the prior verification algorithm. This leads to an algorithm for the $k$-mismatch problem requiring $\tilde{O}(k^{3/4}n^{1/2}m^{o(1)})$ query complexity and $\tilde{O}(k\sqrt{n})$ time complexity.

\subsection{Paper Organization}
In \cref{sec:prelim} we introduce basic definitions and useful lemmas that will be used throughout the paper.
 In \cref{sec:sync} we prove our main technical result, a string synchronizing set with quantum speedup.
Then in the next two sections we describe consequences of our quantum string synchronizing set combined with previous techniques: in \cref{sec:lcs} we describe the improved quantum algorithm for LCS with threshold $d$, and show a nearly matching lower bound. In \cref{sec:lce}, we describe the data structure for answering LCE queries, and show a nearly matching lower bound. Finally, in \cref{sec:kmis}, we describe our algorithm for the $k$-mismatch matching problem.
We conclude the paper with open questions in \cref{sec:open}.
\section{Preliminaries}
\label{sec:prelim}

\subsection{Notations and Basic Properties of Strings}
\label{sec:prelim1}
Throughout this paper, $\tilde O(\cdot),\tilde \Omega(\cdot ),\tilde \Theta(\cdot)$ hide $\polylog(n)$ factors, where $n$ is the input size.

Define sets $\N = \{0,1,2,3,\dots\}$, $\N^+ = \{1,2,3,\dots\}$ and $\Z = \{\dots,-1,0,1,\dots\}$. For every positive integer $n$, let $[n]= \{1,2,\dots,n\}$.  For integers $i\le j$,  let $[i\dd j]= \{i,i+1,\dots,j\}$ denote the set of integers in the closed interval $[i,j]$.   We define $[i\dd j),(i\dd j]$, and $(i\dd j)$ analogously.

We consider strings over a \emph{polynomially-bounded integer alphabet} $\Sigma = [1\dd n^{O(1)}]$.
A string $S \in \Sigma^n$ is a sequence of characters $S=S[1]S[2]\cdots S[n]$ from the alphabet $\Sigma$ (we use 1-based indexing). 
The \emph{concatenation} of two strings $S,T\in \Sigma^*$ is denoted by $ST$. Additionally, we set $S^k := S \cdots S$ for $k \in \N^+$ to be the concatenation of $k$ copies of $S$, and we denote with $S^*$ the infinite string obtained by concatenating infinitely many copies of $S$. We call a string $T$ \emph{primitive} if it cannot be expressed as $T = S^k$ for a string $S$ and an integer $k > 1$.

Given a string $S$ of length $|S|=n$, a \emph{substring} of $S$ is any string of the form $S[i\dd j] = S[i]S[i+1]\cdots S[j]$ for some indices $1\le i\le j\le n$. 
We sometimes use $S[i\dd j) = S[i]S[i+1]\cdots S[j-1]$ and $S(i \dd j]=S[i+1]\cdots S[j-1]S[j]$ to denote substrings. 
A substring $S[1\dd j]$ is called a \emph{prefix} of $S$, and a substring $S[i\dd n]$ is called a \emph{suffix} of $S$. 
For two strings $S,T$, let $\lcp(S,T)= \max\{j: j\le \min\{|S|,|T|\}, S[1\dd j] = T[1\dd j] \}$ denote the length of their \emph{longest common prefix}.

For a positive integer $p\le |S|$, we say $p$ is a \emph{period} of $S$ if $S[i] = S[i+p]$ holds for all $1\le i\le  |S|-p$. 
We refer to the minimal period of $S$ as \emph{the period} of $S$, and denote it by $\per(S)$.  
If $\per(S)\le |S|/2$, we say that $S$ is \emph{periodic}. If $\per(S)$ does not divide $|S|$, we say that $S$ is \emph{primitive}.
A \emph{run} in $T$ is a periodic substring that cannot be extended (to the left nor to the right) without an increase of its shortest period.

We need the following well-known lemmas about periodicity in strings.
\begin{lem}[Weak Periodicity Lemma, {\cite{periodicity}}]
\label{lem:weak-period}
If a string $S$ has periods $p$ and $q$ such
that $p + q \le |S|$, then $\gcd(p, q)$ is also a period of $S$.
\end{lem}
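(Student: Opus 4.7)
The plan is to give the classical Euclidean-style induction argument for Fine and Wilf's theorem. I would induct on the quantity $p+q$. Without loss of generality I would assume $p \ge q$; in the base case $p = q$, we have $\gcd(p,q) = p$ which is already a period, so there is nothing to prove.

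For the inductive step, the key claim is that $p - q$ is also a period of $S$, i.e., $S[i] = S[i + (p-q)]$ for every $i \in [1 \dd |S| - (p-q)]$. I would verify this by splitting on whether $i + p \le |S|$ or not. If $i + p \le |S|$, then using $p$ as a period gives $S[i] = S[i+p]$, and then using $q$ as a period (which is valid because $i + p - q \ge 1$ since $p \ge q$) gives $S[i+p] = S[i+p-q]$, so $S[i] = S[i+(p-q)]$. If instead $i + p > |S|$, then from $i + (p-q) \le |S|$ and $p+q \le |S|$ one gets $i > |S| - p \ge q$, so $i - q \ge 1$ and one can apply the period $q$ to get $S[i] = S[i-q]$, and then the period $p$ to get $S[i-q] = S[i-q+p] = S[i+(p-q)]$ (valid since $i - q + p \le |S|$).

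Once $p - q$ is known to be a period, the pair $(q, p-q)$ satisfies $q + (p-q) = p \le p+q \le |S|$, so the induction hypothesis applies and $\gcd(q, p-q) = \gcd(p,q)$ is a period of $S$. The induction terminates because $p + q$ strictly decreases (the new sum is $p$, while the old sum is $p + q$ with $q \ge 1$).

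The only nontrivial step is the case analysis showing $p-q$ is a period, which hinges on the hypothesis $p + q \le |S|$: this bound is exactly what guarantees that at least one of the two chainings (either $i \to i+p \to i+p-q$ or $i \to i-q \to i-q+p$) stays within the string. Everything else is bookkeeping, and I do not anticipate any further obstacles.
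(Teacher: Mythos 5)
Your proof is correct. The paper does not prove this lemma itself (it is stated with a citation to Fine and Wilf's periodicity lemma), and your argument is the standard Euclidean-style induction for exactly this statement; the case analysis showing $p-q$ is a period and the termination of the induction are both handled correctly.
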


\begin{lem}[Structure of substring occurrences, e.g., \cite{DBLP:conf/icalp/PlandowskiR98,ipm}]
\label{lm:period-patterns}
Let $S,T$ be two strings with $2|T|/3 \le |S|\le |T|$, and let $T[k_1\dd k_1+|S|)= T[k_2\dd k_2+|S|)= \dots =  T[k_d\dd k_d+|S|) = S$ be all the occurrences of $S$ in $T$ (where $k_j<k_{j+1}$ for $1\le j<d$). Then, $k_1,k_2,\dots,k_d$ form an arithmetic progression. Moreover, if $d\ge 2$, then $\per(S) =k_2-k_1$.
\end{lem}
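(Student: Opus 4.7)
My plan is to exploit the heavy overlap forced by the hypothesis $|S| \ge 2|T|/3$, turn each pair of occurrences into a short period of $S$, and then align all occurrences using the Weak Periodicity Lemma (\cref{lem:weak-period}).

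First I would bound the pairwise gaps. Since $k_1 \ge 1$ and $k_d + |S| - 1 \le |T|$, any two occurrences satisfy $k_j - k_i \le |T| - |S| \le |T|/3 \le |S|/2$. Because $T[k_i \dd k_i + |S|) = T[k_j \dd k_j + |S|) = S$, comparing the two occurrences character-by-character yields $S[m] = S[m + (k_j - k_i)]$ for every valid $m$, so the shift $\delta_{ij} := k_j - k_i$ is a period of $S$; in particular $p := \per(S) \le k_2 - k_1 \le |S|/2$. Next I would show $p \mid \delta_{ij}$ for every pair. Both $p$ and $\delta_{ij}$ are periods of $S$ of length at most $|S|/2$, so their sum is at most $|S|$, and \cref{lem:weak-period} yields that $\gcd(p,\delta_{ij})$ is also a period of $S$; minimality of $p$ forces this gcd to equal $p$. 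Hence $\{k_1,\dots,k_d\} \subseteq k_1 + p\Z$.

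It remains to verify that no element of $(k_1 + p\Z) \cap [k_1, k_d]$ is missing from the occurrence set, so that the $k_i$'s form a full arithmetic progression of step $p$. For this I would argue that the entire window $T[k_1 \dd k_d + |S|)$ has period $p$: each individual occurrence $T[k_i \dd k_i + |S|) = S$ has period $p$ internally, and two neighboring occurrences overlap in a block of length $|S| - \delta_{i,i+1} \ge |S|/2 \ge p$, which is wide enough that the two period-$p$ structures must agree on the overlap and hence extend to a common period on the union. Iterating $i = 1, 2, \dots, d-1$ propagates the period across the whole window $T[k_1 \dd k_d + |S|)$. Consequently, for every integer $j \ge 0$ with $k_1 + jp \le k_d$, we get $T[k_1 + jp \dd k_1 + jp + |S|) = T[k_1 \dd k_1 + |S|) = S$, so this index is itself an occurrence. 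The final claim $k_2 - k_1 = p$ under $d \ge 2$ then follows: if $k_2 - k_1 = mp$ with $m \ge 2$, then $k_1 + p \le k_2 \le k_d$ would also be an occurrence strictly between $k_1$ and $k_2$, contradicting consecutivity.

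The only delicate point I anticipate is the gluing step that propagates the period across consecutive occurrences. It succeeds because the overlap between neighboring windows has length $|S| - \delta_{i,i+1} \ge |S|/2 \ge p$, which ensures the two internal period-$p$ patterns coincide on at least $p$ consecutive positions and therefore extend to a common period on the union. Everything else is a direct application of \cref{lem:weak-period} together with the overlap estimate $k_j - k_i \le |S|/2$ driven by the hypothesis $|S|\ge 2|T|/3$.
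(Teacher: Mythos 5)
Your proof is correct and complete: the paper states \cref{lm:period-patterns} as a known result with only a citation, and your argument (pairwise shifts are periods $\le |S|/2$, the Weak Periodicity Lemma forces them into $\per(S)\,\Z$, and the standard period-gluing across overlaps of length $\ge \per(S)$ shows every such shifted position is actually an occurrence) is exactly the standard proof from the cited literature. No gaps.
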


For a string $S\in \Sigma^n$, define the \emph{rotation} operations: $\rot(S)=S[n]S[1\dd n-1]$, and $\rot^{-1}(S)=S[2\dd n]S[1]$. 
For integer $j \ge 0$, $\rot^{j}(S)$ is the $\rot$ operation executed $j$ times on $S$. Similarly, for $j < 0$, $\rot^{j}(S)$ is the $\rot^{-1}$ operation executed $|j|$ times on $S$.

We say string $S$ is \emph{lexicographically smaller} than string $T$ (denoted $S \prec T$) if either $S$ is a proper prefix of $T$ (i.e., $|S|<|T|$ and $S=T[1\dd |S|]$), or $\ell = \lcp(S,T)<\min\{|S|,|T|\}$ and $S[\ell+1]<T[\ell+1]$. The notations $\succ, \preceq,\succeq$ are defined analogously. 

For a periodic string $S$ with shortest period $\per(S)=p$, the \emph{Lyndon root} of $S$ is defined as the lexicographically minimal rotation of $S[1\dd p]$.

Given two strings $S$ and $T$ with $|S| = |T|$, the set of \emph{mismatches} between $S$ and $T$ is defined as $\text{Mis}(S,T) := \{i \in [n] : S[i] \neq T[i]\}$, and the  \emph{Hamming distance} between $S$ and $T$ is denoted as $\delta_H(S,T) := |\text{Mis}(S,T)|$.
For convenience, given a string $S$ and an infinite string $Q^*$ we will often write $\delta_H(S,Q^*)$ instead of $\delta_H(S,Q^*[1\dd|S|])$.

\subsection{Computational Model}
\label{sec:model}
We assume the input string $S\in \Sigma^n$ can be accessed in a quantum query model \cite{ambainis2004quantum,DBLP:journals/tcs/BuhrmanW02}: there is an input oracle $O_S$ that performs the unitary mapping $O_s\colon \ket {i,b}  \mapsto \ket{i,b\oplus S[i]}$ for any index $i\in [n]$ and any $b\in \Sigma$.

The \emph{query complexity} of a quantum algorithm (with $2/3$ success probability) is the number of queries it makes to the input oracles. The \emph{time complexity} of the quantum algorithm additionally counts the elementary gates \cite{PhysRevA.52.3457} for implementing the unitary operators that are independent of the input. Similar to prior works \cite{legall,aj22,DBLP:journals/siamcomp/Ambainis07}, we assume quantum random access quantum memory, where the memory consists of an array of qubits, and random access can be made in superposition.

We say an algorithm succeeds \emph{with high probability (w.h.p)}, if the success probability can be made at least $1-1/n^c$ for any desired constant $c>1$. A bounded-error algorithm can be boosted to succeed w.h.p.\ by $O(\log n)$ repetitions.
In this paper, we do not optimize the log-factors of the quantum query complexity (and time complexity) of our algorithms.

\subsection{Basic Quantum Primitives}
\label{sec:primitive}

\paragraph*{Grover search (Amplitude amplification) \cite{DBLP:conf/stoc/Grover96,brassard2002quantum}}
Let $f\colon [n] \to \{0,1\}$ be a function, where $f(i)$ for each $i\in [n]$ can be evaluated in time $T$. There is a quantum algorithm that, with high probability, finds an $x\in f^{-1}(1)$ or reports that $f^{-1}(1)$ is empty, in $\tilde O(\sqrt{n}\cdot T)$ time.  Moreover, if it is guaranteed that either $|f^{-1}(1)| \ge M$ or $|f^{-1}(1)|=0$ holds, then the algorithm runs in $\tilde O(\sqrt{n/M} \cdot T)$ time.

\paragraph*{Quantum search with variable times \cite{AA08, DBLP:conf/mfcs/CornelissenJOP20}} Let $f\colon [n] \to \{0,1\}$ be a function, where $f(i)$ for each $i\in [n]$ can be evaluated in time $t_i$. There is a quantum algorithm that, with high probability, finds an $x\in f^{-1}(1)$ or reports that $f^{-1}(1)$ is empty, in $\tilde O (\sqrt{\sum_{i=1}^{n} t_i^2})$ time. 

\paragraph*{Quantum minimum finding \cite{minimumfinding}} Let $x_1,\dots,x_n$ be $n$ items with a total order, where each pair of $x_i$ and $x_j$ can be compared in time $T$. There is a quantum algorithm that, with high probability, finds the minimum item among $x_1,\dots,x_n$ in $\tilde O(\sqrt{n}) \cdot T$ time.

\subsection{Quantum Algorithms on Strings}

The following algorithm follows from a simple binary search composed with Grover search.
\begin{obs}[Finding Longest Common Prefix]
    \label{obs:lcp}
    Given $S,T\in \Sigma^n$, there is an $\tilde O(\sqrt{n})$-time quantum algorithm that computes $\lcp(S,T)$.
\end{obs}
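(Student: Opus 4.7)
The plan is as follows. Observe that $\lcp(S,T)$ equals $i^\star-1$, where $i^\star$ is the smallest index in $[1\dd n]$ with $S[i^\star]\neq T[i^\star]$ (and we set $i^\star = n+1$ if no such index exists). So the task reduces to finding the leftmost mismatch between $S$ and $T$, using only a Grover-style primitive and binary search.

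To find $i^\star$, I would maintain an interval $[l\dd r]\subseteq [0\dd n+1]$ guaranteed to contain $i^\star$, initialized as $[1\dd n+1]$. At each iteration, set $m = \lfloor (l+r)/2 \rfloor$ and invoke Grover search (amplitude amplification, as recalled in \cref{sec:primitive}) to decide whether the function $f(i) := [S[i]\neq T[i]]$ has a $1$-valued input in $[l\dd m]$. Each evaluation of $f(i)$ costs $O(1)$ queries to the input oracles $O_S$ and $O_T$, so the Grover call over an interval of length at most $n$ takes $\tilde O(\sqrt{n})$ time. Depending on the outcome, recurse on $[l\dd m]$ or $[m{+}1\dd r]$. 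After $O(\log n)$ rounds the interval has length $1$ and pinpoints $i^\star$, from which we read off $\lcp(S,T) = i^\star-1$.

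The total time is $O(\log n)$ rounds of $\tilde O(\sqrt{n})$-time Grover search, which is $\tilde O(\sqrt n)$ since the $\polylog(n)$ factors are absorbed by $\tilde O(\cdot)$. For correctness, I would amplify each Grover round to error $1/n^c$ for a sufficiently large constant $c$, so that by a union bound all $O(\log n)$ rounds succeed with high probability. Since this is a standard composition of two well-known quantum primitives, there is really no technical obstacle; the only mild point worth noting is the amplification step, which ensures the binary search never follows a wrong branch.
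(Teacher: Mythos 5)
Your proof is correct and is exactly the approach the paper intends: \cref{obs:lcp} is justified there by the one-line remark that it ``follows from a simple binary search composed with Grover search,'' which is precisely your leftmost-mismatch binary search with per-round amplification. Nothing further is needed.
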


Ramesh and Vinay \cite{rameshvinay} obtained a near-optimal quantum algorithm for exact pattern matching. 

\begin{restatable}[Exact Pattern Matching \cite{rameshvinay}]{them}{thmmatchingrestate}
    Given pattern $S\in \Sigma^m$ and text $T\in \Sigma^n$ with $n\ge m$, there is an $\tilde O(\sqrt{n})$-time quantum algorithm that finds an occurrence of $S$ in $T$ (or reports none exists).
    \label{thm:matching}
\end{restatable}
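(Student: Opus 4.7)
The plan is to adapt Vishkin's parallel exact pattern-matching algorithm to the quantum setting, following the Ramesh--Vinay framework. First, split the text $T$ into $O(n/m)$ overlapping length-$2m$ blocks, each starting at a multiple of $m$, so that any occurrence of $P$ in $T$ lies entirely within at least one block. If we can design a block-level quantum subroutine that decides, in $\tilde O(\sqrt m)$ time, whether $P$ occurs inside a given length-$2m$ block, then a Grover search over the $O(n/m)$ blocks completes the task in total time $\tilde O(\sqrt{n/m}\cdot\sqrt m)=\tilde O(\sqrt n)$.

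For the block subroutine I would split into two cases based on $\per(P)$. In the aperiodic case $\per(P)>m/2$, \cref{lem:weak-period} implies that any two occurrences of $P$ inside a length-$2m$ block must be separated by a distance that is a period of $P$, hence $>m/2$; this leaves only $O(1)$ candidate starting positions to verify. To locate them efficiently I would precompute a Vishkin-style \emph{deterministic sample}: a subset $W\subseteq[m]$ of size $O(\log m)$ such that any two shifts of $P$ against itself by less than $m/2$ positions disagree on at least one coordinate in $W$. The sample is built by $O(\log m)$ rounds of longest-common-prefix comparisons between shifted copies of $P$, and each such query takes $\tilde O(\sqrt m)$ quantum time by \cref{obs:lcp}, totalling $\tilde O(\sqrt m)$. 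Then a Grover search over starting positions in the block finds any candidate matching $P$ on every coordinate of $W$ in $\tilde O(\sqrt m)$ time, and each candidate is fully verified by one further $\tilde O(\sqrt m)$ LCP query against $P$.

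In the periodic case $\per(P)=p\le m/2$, let $\rho=P[1\dd p]$. Any occurrence of $P$ at position $i$ in $T$ forces $T[i\dd i+m)$ to be a prefix of $\rho^*$, so all occurrences of $P$ inside the block lie in a single maximal $\rho$-run. It therefore suffices to locate a maximal $\rho$-run of length $\ge m$ inside the block and check whether its phase and boundary characters are consistent with $P$. Since $|\rho|\le m/2$, one $\rho$-run can be detected by invoking the block-level subroutine recursively on the shorter pattern $\rho$ inside the same block, followed by two $\tilde O(\sqrt m)$ LCP queries (via \cref{obs:lcp}) to determine the run's extent. The recursion terminates at a polylogarithmic base case handled by brute force, and the cost satisfies a geometric recurrence that still sums to $\tilde O(\sqrt m)$.

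The main obstacle is implementing the deterministic sampling step within the $\tilde O(\sqrt m)$ quantum budget: classical Vishkin sampling runs in linear time in $m$, and naively replacing each internal comparison by a separate $\sqrt m$-time quantum LCP query would blow up to $\tilde O(m)$. The fix is to reorganize the construction so that only $O(\log m)$ LCP queries on specific pairs of suffixes of $P$ are performed in total, keeping the cost at $\tilde O(\sqrt m)$. A secondary technicality is ensuring that the bounded-error inner subroutines, when composed with the outer Grover loop over $O(n/m)$ blocks, do not accumulate failure probability beyond $1/3$; this is addressed by boosting each inner call to error $1/\poly(n)$ at the cost of only a $\polylog(n)$ factor.
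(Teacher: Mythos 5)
Your overall architecture---overlapping length-$2m$ blocks with a Grover search over blocks, a periodic/aperiodic split inside each block, and Vishkin-style deterministic sampling for the aperiodic case---is exactly the Ramesh--Vinay route (the paper only cites this theorem and reproves just the deterministic-sampling lemma, \cref{lem:ds}, in \cref{app:ds}). However, two of your steps have genuine gaps. First, the object you posit---a single set $W\subseteq[m]$ of size $O(\log m)$ such that a candidate $c$ passing the test ``$T[c+w-1]=P[w]$ for all $w\in W$'' rules out every other candidate within distance $m/2$ on \emph{both} sides---does not exist in general: for $P=0^{m-1}1$ (which has $\per(P)=m>m/2$), ruling out the candidate $c-d$ requires some $w\in W$ with $w+d\le m$ and $P[w]\neq P[w+d]$, which forces $w=m-d$, so $|W|\ge m/2-1$. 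Vishkin's sample avoids this by carrying an \emph{offset} $\delta\in[0\dd\lfloor m/2\rfloor]$ (exactly as in \cref{lem:ds}); a passing candidate $c$ then only eliminates the one-sided window $[c-\delta\dd c-\delta+\lfloor m/2\rfloor]$, and the block subroutine must therefore locate the \emph{extremal} (leftmost or rightmost, according to $\delta$) passing candidate in each sub-window via quantum minimum finding, not ``any'' passing candidate via Grover. As written, Grover may return a spurious candidate that passes the checkpoint test but fails full verification while a true occurrence elsewhere in the block also passes the test; your algorithm would then wrongly report that the block contains no occurrence.

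Second, in the periodic case the claim that all occurrences of $P$ in a length-$2m$ block lie in a single maximal $\rho$-run is false: for $P=\mathtt{aabaabaa}$ (so $m=8$, $\per(P)=3\le m/2$, $\rho=\mathtt{aab}$), the block $PP=\mathtt{aabaabaaaabaabaa}$ contains occurrences at positions $1$ and $9$ lying in the two distinct maximal period-$3$ runs $PP[1\dd 8]$ and $PP[9\dd 16]$. This is repairable---a length-$2m$ block meets only $O(1)$ maximal runs of length $\ge m$, and each can be located as in \cref{lem:findrun} by tiling the block with windows of length at most $4m/3$---but the single-run assumption as stated would miss occurrences. A smaller point: constructing the sample in $\tilde O(\sqrt m)$ quantum time is not merely ``$O(\log m)$ LCP queries''; each of the $O(\log m)$ rounds must also identify the minimum and maximum still-alive offsets, which requires a quantum minimum finding whose comparator tests aliveness against all checkpoints collected so far (this is precisely what Algorithm~\ref{alg:ds} does), so this step needs to be spelled out rather than deferred to ``reorganizing the construction.''
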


Kociumaka, Radoszewski, Rytter, and Wale\'{n} \cite{ipm} showed that computing $\per(S)$ can be reduced to $O(\log |S|)$ instances of exact pattern matching and longest common prefix involving substrings of $S$.  Hence, we have the following corollary. See also \cite[Appendix D]{ying}.
\begin{cor}[Finding Period]
    \label{cor:per}
   Given $S\in \Sigma^n$, there is a quantum algorithm in $\tilde O(\sqrt{n})$ time that computes $\per(S)$.
\end{cor}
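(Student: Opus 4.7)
The plan is to invoke the reduction of Kociumaka, Radoszewski, Rytter, and Wale\'{n}~\cite{ipm} as a black box. That reduction expresses the task of computing $\per(S)$ for $S\in\Sigma^n$ as a sequence of $O(\log n)$ queries, where each query is either (i) an exact pattern matching query asking for an occurrence of some substring of $S$ inside another substring of $S$, or (ii) a longest common prefix query between two suffixes of $S$. The reduction itself is a fixed deterministic procedure whose non-trivial work is concentrated entirely inside these $O(\log n)$ queries; once the answers are known, gluing them together into the value $\per(S)$ takes only $\polylog(n)$ additional time.

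Given access to the input oracle for $S$, any substring $S[i\dd j]$ is itself accessible in the same quantum query model simply by shifting indices, so the subroutines from this section apply verbatim to substrings of $S$. Thus each type-(i) query can be handled in $\tilde O(\sqrt{n})$ quantum time via \cref{thm:matching}, and each type-(ii) query can be handled in $\tilde O(\sqrt{n})$ quantum time via \cref{obs:lcp}. I would boost each individual query to succeed with high probability by $O(\log n)$ repetitions so that, by a union bound over the $O(\log n)$ queries in the reduction, the whole procedure succeeds w.h.p.

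Multiplying: $O(\log n)$ queries, each costing $\tilde O(\sqrt{n})$ quantum time, give a total cost of $\tilde O(\sqrt{n})$, absorbing the logarithmic factors into the $\tilde O$ notation. The only real subtlety is verifying that the classical reduction of~\cite{ipm} really does operate only through substring pattern matching and LCP queries (as opposed to, say, building a suffix tree of $S$); this is precisely the content stated in the excerpt, so no additional work is needed beyond citing it. Hence the main obstacle is purely bookkeeping, and the corollary follows.
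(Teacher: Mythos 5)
Your proposal is correct and matches the paper's argument exactly: the paper likewise treats the reduction of Kociumaka, Radoszewski, Rytter, and Wale\'{n} as a black box producing $O(\log n)$ exact pattern matching and LCP queries on substrings of $S$, and answers each with \cref{thm:matching} and \cref{obs:lcp} in $\tilde O(\sqrt{n})$ quantum time. Your added remarks about index shifting and boosting each subroutine for a union bound are the standard bookkeeping the paper leaves implicit.
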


\subsection{Pseudorandomness}

We will use the following min-wise independent hash family.

\begin{lem}[Approximate min-wise independent hash family, follows from \cite{DBLP:journals/jal/Indyk01}]
    \label{lem:minwise}
Given integer parameters $n \ge 1$ and $N\ge c\cdot n^3$ (for some constant $c$), there is a hash family $\mathcal{H} = \{h\colon [N]\to [N^2]\}$, where each  $h\in \mathcal{H}$ is an \emph{injective} function that can be specified using $O(\log N)$ bits, and can be evaluated at any point in $\polylog(N)$ time.\footnote{The original definition in \cite{DBLP:journals/jal/Indyk01} used functions from $[N]\to [N]$ and does not guarantee injectivity with probability $1$. Here we can guarantee injectivity by simply attaching the input string to the output. Doing this will slightly simplify some of the presentation later.}

Moreover, $\mathcal{H}$ satisfies \emph{approximate min-wise independence}:
     for any $x\in [N]$ and subset $X\subseteq [N] \setminus \{x\}$ of size $|X|\le n$, over a uniformly sampled $h\in \mathcal{H}$,
\[ \Pr_{h\in \mathcal{H}} \big [h(x) < \min \{h(x'): x' \in X\}\big ] \in \frac{1}{|X|+1}\cdot (1\pm 0.1).\]
\end{lem}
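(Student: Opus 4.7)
My plan is to derive the lemma by combining Indyk's \cite{DBLP:journals/jal/Indyk01} approximate min-wise hash family with a standard tiebreaking trick for injectivity. Indyk's theorem shows that any $k$-wise independent hash family is $\eps$-approximately min-wise independent for $k = \Theta(\log(1/\eps))$; choosing $\eps = 0.05$ fixes $k = O(1)$.

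First, I would instantiate a $k$-wise independent family $h'\colon [N] \to [0, N)$ by the standard polynomial construction over a prime field $\mathbb{F}_p$ with $p = \Theta(N)$: pick uniformly random coefficients $a_0,\dots,a_{k-1}$ and let $h'(x) = \sum_i a_i x^i \bmod p$. Each such function is described by $O(k \log N) = O(\log N)$ bits and evaluated in $\polylog(N)$ time. Indyk's theorem then guarantees, for every $x$ and every $X \subseteq [N] \setminus \{x\}$ with $|X| \le n$,
\[\Pr_{h'}\bigl[h'(x) \le \min\{h'(x') : x' \in X\}\bigr] \in \frac{1 \pm 0.05}{|X|+1}.\]

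Second, to replace ``$\le$'' by the strict ``$<$'' required in the lemma, I would bound the tie probability. Pairwise independence gives $\Pr[h'(x) = h'(x')] = O(1/N)$, so union-bounding over the $|X| \le n$ choices of $x'$, the tie event has probability at most $O(n/N) = O(1/n^2)$ once $N \ge c n^3$ for a sufficiently large constant $c$. Since this is dominated by $0.05/(|X|+1)$, subtracting the tie event from the non-strict bound yields the strict-inequality bound with slack only widened to $\pm 0.1$.

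Finally, to guarantee injectivity deterministically and land in $[N^2]$, I would set $h(x) \coloneqq h'(x) \cdot N + x$: distinct inputs produce distinct residues mod $N$, so $h$ is injective by construction. Comparing $h(x)$ with $h(x')$ amounts to a lexicographic comparison of $(h'(x), x)$ with $(h'(x'), x')$, which merely refines the order induced by $h'$. Consequently, the event ``$h(x)$ is the strict minimum over $X \cup \{x\}$'' is sandwiched between ``$h'(x)$ is the strict minimum'' and ``$h'(x)$ is a minimum'', and both of these already lie in $\frac{1 \pm 0.1}{|X|+1}$ by the previous step; hence $h$ inherits the approximate min-wise property. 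The only nontrivial ingredient I defer entirely is Indyk's theorem, whose proof proceeds via inclusion-exclusion on moments of $k$-wise independent indicators and is cited as a black box.
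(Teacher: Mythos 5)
Your proposal is correct and follows essentially the same route as the paper, which gives no separate proof beyond citing Indyk's theorem and noting in a footnote that injectivity is obtained ``by simply attaching the input string to the output'' --- exactly your $h(x) = h'(x)\cdot N + x$ construction. Your additional bookkeeping (bounding the tie probability by $O(n/N) = O(1/n^2)$ using $N \ge cn^3$ to pass from the non-strict to the strict minimum, and the lexicographic sandwiching) correctly fills in the details the paper leaves implicit.
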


\section{String Synchronizing Sets with Quantum Speedup}
\label{sec:sync}

We first state the definition of string synchronizing sets introduced by Kempa and Kociumaka \cite{kk19}.

\begin{restatable}[String synchronizing set \cite{kk19}]{definition}{defnsyncrestate}
For a string $T[1\dd n]$ and a positive integer $1\le \tau \le n/2$, we say $\sA\subseteq [1\dd n-2\tau+1]$ is a \emph{$\tau$-synchronizing set of $T$} if it satisfies the following properties:
\begin{itemize}
    \item \textbf{Consistency:} If $T[i\dd i+2\tau) = T[j\dd j+2\tau)$, then $i\in \sA$ if and only if $j\in \sA$. 
    \item \textbf{Density:} For $i\in [1\dd n-3\tau +2]$, $\sA \cap [i\dd i+\tau) =\emptyset$ if and only if $\per(T[i\dd i+3\tau - 2])\le \tau/3$.
\end{itemize}
\label{defn:sync}
\end{restatable}

A concrete example of a string synchronizing set can be found in \cite[Figure 1]{cpm19}.
Note that the periodicity condition in the density requirement is necessary for allowing a   small $\sA$ to exist; otherwise, in a string with very short period, one would have to include too many positions in $\sA$ due to the consistency requirement.

Our main technical result is the following theorem.

\begin{them}[Quantum string synchronizing set]
\label{thm:quantum-sync}
Given string $T[1\dd n]$ and integer $1\le \tau \le n/2$, there is a randomized $\sA \subseteq [1\dd n-2\tau+1]$ generated from random seed $\seed\in\{0,1\}^{\polylog (n)}$, with the following properties:
 \begin{itemize}
    \item \textbf{Correctness:} $\sA$ is always a $\tau$-synchronizing set of $T$.
    \item \textbf{Sparsity:}  
    For every $i\in [1\dd n-3\tau+2]$,   $\Ex_{\seed}\big [|\sA\cap [i\dd i+\tau)|\big ] \le \tau^{o(1)}$.
    \item \textbf{Efficient computability:}
   With high probability over $\sigma$, there is quantum algorithm that,  given $i\in [1\dd n-3\tau+2]$, $\sigma$, and quantum query access to $T[i\dd i+3\tau-2]$, reports all the elements in $\sA\cap [i\dd i+\tau)$ in 
    \[ \left (\mathsf{cnt}+1 \right ) \cdot \tau^{\frac{1}{2}+o(1)}\cdot \polylog(n)  \] quantum time, where $\mathsf{cnt}=|\sA\cap [i\dd i+\tau)|$ is the output count.
\end{itemize}
\end{them}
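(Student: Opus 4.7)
The plan is to follow the blueprint from the technical overview: define $\sA$ as the set of local minimizers of a randomized hash function $\phi$ on length-$\tau$ substrings, treat highly periodic regions separately through canonical anchors inside their runs, and engineer $\phi$ so that the minimum over any $\tau+1$ consecutive positions can be extracted by a layered quantum tournament. In the non-periodic regime I would place $i$ into $\sA$ iff the minimum of $\Phi(j):=\phi(T[j\dd j+\tau))$ over $j\in[i\dd i+\tau]$ is achieved at $j\in\{i,i+\tau\}$; consistency is automatic because membership depends only on $T[i\dd i+2\tau)$ and the seed $\seed$, and density follows from the existence of an argmin in every length-$(\tau+1)$ block. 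For periodic positions (those with $\per(T[i\dd i+3\tau-2])\le\tau/3$, detectable in $\tilde O(\sqrt{\tau})$ quantum time via \cref{cor:per}) I would follow Kempa--Kociumaka and anchor $O(1)$ positions inside each maximal run derived from its Lyndon root and endpoints, all accessible through a few LCE and period queries.

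The hash $\phi$ must distinguish overlapping length-$\tau$ substrings (so the non-periodic sparsity analysis goes through), be evaluable in $\tilde O(\sqrt{\tau})$ quantum time, and admit ``short prefixes'' that individually look random and allow cheap comparisons. For distinguishability I would invoke Vishkin's deterministic sampling, implemented quantumly in $\tilde O(\sqrt{\tau})$ time following Ramesh--Vinay (cf.\ \cref{thm:matching}), to extract $O(\log n)$ offsets at which any nontrivial shift of the window must disagree with itself. To enable the tournament I would layer the hash: for each $\ell=0,1,\dots,L=\lceil\log\tau\rceil$, run Vishkin sampling restricted to the first $\tau_\ell:=2^\ell$ characters to obtain offsets $\Delta^{(\ell)}\subseteq[0\dd\tau_\ell)$, and set $\phi_\ell(T[j\dd j+\tau))=h_\ell((T[j+\delta])_{\delta\in\Delta^{(\ell)}})$ with $h_\ell$ an independent approximately min-wise injective hash from \cref{lem:minwise}; the full $\phi$ is the lexicographic composition $(\phi_L,\dots,\phi_0)$, so lower indices serve as tiebreakers.

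The main technical obstacle is locating $\arg\min_{j\in[i\dd i+\tau]}\Phi(j)$ in $\tau^{1/2+o(1)}$ quantum time, since the naive composition of quantum minimum finding with a full-hash comparator bottoms out at $\tilde O(\sqrt{\tau}\cdot\sqrt{\tau})=\tilde O(\tau)$. I would sweep layers bottom-up: starting with $M_0=[i\dd i+\tau]$, at layer $\ell$ apply quantum minimum finding (\cref{sec:primitive}) to $M_\ell$ using only the comparator $\phi_\ell$, and retain in $M_{\ell+1}$ exactly those candidates achieving the layer-$\ell$ minimum. The comparator at layer $\ell$ probes only the $\tau_\ell$ first characters and costs $\tilde O(\sqrt{\tau_\ell})$ via Vishkin sampling, so per-layer work is $\tilde O(\sqrt{|M_\ell|\tau_\ell})$. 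The delicate part---inspired by the Akmal--Jin tournament for lexicographically minimal rotation, but complicated here by nodes returning multiple minimizers---is showing that $|M_\ell|\,\tau_\ell=\tau\cdot\tau^{o(1)}$ with high probability over $\seed$. I would combine the non-periodic occurrence bound of \cref{lm:period-patterns}, which caps the number of candidates in $M_\ell$ sharing the same level-$\ell$ Vishkin signature by $O(\tau/\tau_\ell)$, with the approximate min-wise property of \cref{lem:minwise}, to conclude $\Ex|M_{\ell+1}|=\tilde O(1)$ per layer; polylogarithmic boosting across independent salts drives this to a w.h.p.\ bound. Summing $\tilde O(\sqrt{\tau})$ per layer across $L=O(\log\tau)$ layers yields the desired $\tau^{1/2+o(1)}$ total. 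Intersecting the surviving set with $[i\dd i+\tau)$ and verifying each of the $\mathsf{cnt}$ survivors by a single full-hash comparison accounts for the $(\mathsf{cnt}+1)\cdot\tau^{1/2+o(1)}$ reporting time, while the sparsity bound $\Ex|\sA\cap[i\dd i+\tau)|\le\tau^{o(1)}$ follows because each non-periodic position enters $\sA$ with probability $O(1/\tau)$ at the top layer and periodic anchors contribute $O(1)$ per run.
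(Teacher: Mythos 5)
Your high-level blueprint (local minimizers of a layered, Vishkin-based hash, evaluated by a tournament over geometrically growing prefix lengths) matches the paper's, but three steps in your proposal break down.

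First, and most seriously, the claim at the heart of your tournament --- that $\Ex|M_{\ell+1}|=\tilde O(1)$, or even that $|M_\ell|\cdot\tau_\ell=\tau^{1+o(1)}$ w.h.p.\ over $\seed$ --- is false. The elements of $M_{\ell+1}$ are the positions achieving the \emph{global} minimum of $\phi_\ell$ over $M_\ell$, and ties here are structural, not hash-induced: all positions whose level-$\ell$ Vishkin signature (hence, for nearby positions, whose length-$\tau_\ell$ substring) coincides receive the same hash value, and min-wise independence does nothing to separate them. A window can contain many disjoint stretches that are locally periodic at scale $\tau_\ell$ without any length-$\tau$ substring falling into $\sQ$, so $M_\ell$ can have $\Omega(\tau)$ elements even when $\tau_\ell=\tau^{0.9}$, making your per-layer cost $\sqrt{|M_\ell|\tau_\ell}$ polynomially larger than $\sqrt\tau$. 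Moreover, quantum minimum finding returns one minimizer, not the set $M_{\ell+1}$; you never explain how $M_{\ell+1}$ is represented or enumerated. The paper avoids both problems by recursing over \emph{blocks} of length $\tau_{\ell-1}/4$: within such a block, tied minimizers correspond to occurrences of a single substring and hence form an arithmetic progression (\cref{lm:period-patterns}), which is stored compactly, and the auxiliary map $\rho_\ell$ (how far the block period extends) breaks the ties inside a progression via a single Grover search. This block recursion and the $\rho_\ell$ tie-breaker are exactly the ingredients your global-filtering scheme is missing.

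Second, your layer schedule $\tau_\ell=2^\ell$, $L=\lceil\log\tau\rceil$ destroys the sparsity bound. The probability that a position is a (prefix-)minimum degrades by a factor of roughly $O(\log\tau)$ (and at least $1+\Omega(1)$ from the approximate min-wise constant) per layer, so with $\Theta(\log\tau)$ layers the expected number of local minima per window is $\tau^{\Theta(1)}$, not $\tau^{o(1)}$; your one-line assertion that each position enters $\sA$ with probability $O(1/\tau)$ does not survive the lexicographic tie-breaking across layers. The paper takes $L=\Theta(\sqrt{\log\tau})$ with ratios $\tau_{\ell+1}/\tau_\ell=\tau^{1/L}$ precisely to balance the $(O(\log\tau))^{L}$ sparsity loss against the $\sqrt{\tau_{\ell+1}/\tau_\ell}$ overhead in the recursion, both ending up $\tau^{o(1)}$.

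Third, your treatment of periodic regions is incompatible with \cref{defn:sync}. The density condition forces $\sA\cap[i\dd i+\tau)=\emptyset$ whenever $\per(T[i\dd i+3\tau-2])\le\tau/3$, so you cannot place Lyndon-root anchors inside runs (this would also violate consistency, since run endpoints depend on context beyond $T[i\dd i+2\tau)$). The correct handling, as in \eqref{eqn:defnS}, is to remove $\sQ$ from the domain of the minimization; the Lyndon-root anchors you describe belong to the separate anchor set $\sB$ used only in the LCS application, not to the synchronizing set itself.
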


\subsection{Review of Kempa and Kociumaka's construction}
\label{sec:defn}

\newcommand{\im}{{\Phi}}
We first review the underlying framework of Kempa and Kociumaka's construction \cite{kk19}, which will be used in our construction as well.  Define set
\begin{equation}
    \label{eqn:defnQ}
    \sQ := \{i \in [1\dd n-\tau+1]:  \per(T[i\dd i+\tau))\le \tau/3\},
\end{equation}
which contains the starting positions of highly-periodic length-$\tau$ substrings.

Let $\phi \colon \Sigma^{\tau} \to X$ be a mapping from length-$\tau$ strings to some totally ordered set $X$. 
For index $i\in [1\dd n-\tau+1]$ in the input string $T[1\dd n]$, we introduce the shorthand 
\begin{equation}
 \im(i) := \phi(T[i\dd i+\tau) ).
\end{equation}

Then, define the synchronizing set as
\begin{equation}
    \sA := \big \{i \in [1\dd n-2\tau+1]: \min \{\im(j): j\in [i\dd i+\tau]\setminus \sQ \} \in \{\im(i),\im(i+\tau)\} \big \}.
    \label{eqn:defnS}
\end{equation}
Then, Kempa and Kociumaka proved the following lemma.
\begin{lem}[{\cite[Lemma 8.2]{kk19}}]
    \label{lem:kk}
   The set $\sA$ defined in \eqref{eqn:defnS}  is always a $\tau$-synchronizing set of $T$. 
\end{lem}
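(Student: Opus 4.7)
The plan is to check the two defining conditions of a $\tau$-synchronizing set in turn. Consistency is nearly automatic: if $T[i\dd i+2\tau) = T[j\dd j+2\tau)$ for $i,j \in [1\dd n-2\tau+1]$, then for every offset $k \in [0\dd \tau]$ the length-$\tau$ windows $T[i+k\dd i+k+\tau)$ and $T[j+k\dd j+k+\tau)$ sit inside these two equal blocks and are therefore identical. This gives $\Phi(i+k) = \Phi(j+k)$ and $i+k \in \sQ \iff j+k \in \sQ$, so the set-builder in \eqref{eqn:defnS} evaluates identically at $i$ and at $j$.

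For the density condition, the $(\Leftarrow)$ direction is short. If $\per(T[i\dd i+3\tau-2]) \le \tau/3$, then every length-$\tau$ substring of $T[i\dd i+3\tau-2]$ inherits a period $\le \tau/3$, so $[i\dd i+2\tau-1] \subseteq \sQ$. For any $j \in [i\dd i+\tau)$ the interval $[j\dd j+\tau]$ lies in $[i\dd i+2\tau-1]$ and hence in $\sQ$, making $[j\dd j+\tau]\setminus \sQ$ empty; with the standard convention $\min \emptyset = +\infty$, no such $j$ belongs to $\sA$.

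For the $(\Rightarrow)$ direction I argue the contrapositive: assuming $\per(T[i\dd i+3\tau-2]) > \tau/3$, I will produce some $j \in [i\dd i+\tau) \cap \sA$. The main step (deferred below) is to show $[i\dd i+2\tau-1] \setminus \sQ$ is nonempty. Granting this, let $j^*$ be a $\Phi$-minimizer over $[i\dd i+2\tau-1] \setminus \sQ$ and put $j := j^*$ if $j^* < i+\tau$, and $j := j^* - \tau$ otherwise. Then $j \in [i\dd i+\tau)$ and $j^* \in \{j, j+\tau\}$. Since $[j\dd j+\tau]\setminus \sQ$ is a subset of $[i\dd i+2\tau-1]\setminus \sQ$ that still contains $j^*$, the $\Phi$-minimum over $[j\dd j+\tau]\setminus \sQ$ equals $\Phi(j^*) \in \{\Phi(j), \Phi(j+\tau)\}$, certifying $j \in \sA$.

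The hard part is the nonemptiness claim, which I would prove by contradiction using the Weak Periodicity Lemma. Suppose every $k \in [i\dd i+2\tau-1]$ satisfies $p_k := \per(T[k\dd k+\tau)) \le \tau/3$; I aim for $\per(T[i\dd i+3\tau-2]) \le \tau/3$. For consecutive $k, k+1$ the overlap $T[k+1\dd k+\tau)$ of length $\tau-1$ inherits both periods $p_k$ and $p_{k+1}$, and since $p_k + p_{k+1} \le 2\tau/3 \le \tau-1$, \cref{lem:weak-period} yields $g := \gcd(p_k, p_{k+1})$ as a period of the overlap. A short verification---$T[k] = T[k+p_k]$ by period $p_k$, and then $T[k+p_k] = T[k+g]$ by stepping down through multiples of $g$ inside the overlap---shows $g$ is also a period of $T[k\dd k+\tau)$, which forces $g = p_k$ by minimality; symmetrically $g = p_{k+1}$, so $p_k = p_{k+1}$. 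Hence all $p_k$ equal a common value $p \le \tau/3$, and since $p \le \tau-1$, any two positions of $T[i\dd i+3\tau-2]$ at distance $p$ lie together in some window $T[k\dd k+\tau)$ and must agree. Therefore $T[i\dd i+3\tau-2]$ has period $p \le \tau/3$, contradicting the assumption, and this is the only step that requires real work.
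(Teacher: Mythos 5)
Your proof is correct and follows the same approach the paper sketches (pick a $\Phi$-minimizer over the double-length window outside $\sQ$ and shift by $\tau$ if it lands in the right half); the paper itself only outlines the $\sQ=\emptyset$ case and delegates the rest to \cite[Lemma 8.2]{kk19}, whereas you additionally supply the needed periodicity argument (via \cref{lem:weak-period}) showing that $[i\dd i+2\tau-1]\setminus\sQ$ is nonempty when $\per(T[i\dd i+3\tau-2])>\tau/3$, which is exactly the content of the omitted step. All details check out, including the consistency argument and the easy direction of density.
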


It is straightforward to check that $\sA$ satisfies the consistency property in \cref{defn:sync}.
To provide an intuition, we briefly restate their proof  for the density property in the simpler case where $\sQ = \emptyset$. Given $i$, pick $k \in [i\dd i+2\tau)$ with the minimum $\Phi(k)$. If $k<i+\tau$, then $\Phi(k) = \min \{ \Phi(j) : j\in [k\dd k+\tau] \}$, and hence $k\in \sA$. Otherwise,  $\Phi(k) = \min \{ \Phi(j) : j\in [k-\tau \dd k] \}$, and hence $k-\tau\in \sA$. In either case, $\sA \cap [i\dd i+\tau) \neq \emptyset$.

From now on, we assume $\tau \ge 100$. 
For $\tau<100$, one can take $\phi$ to be the identity mapping and use the definition in \eqref{eqn:defnS}, and the requirements in \cref{thm:quantum-sync} are automatically satisfied by a brute force algorithm.

The following property on the structure of set $\sQ$ will be needed later.
\begin{lem}
    \label{lem:structureQ}
    Suppose interval $[l \dd r] \subseteq [1\dd n-\tau+1]$ has length $r-l+1\le  \tau/3$. Then,  $[l\dd r]\cap \sQ$ is either empty or an interval.
\end{lem}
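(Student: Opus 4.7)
The plan is to show that $\sQ \cap [l \dd r]$ is ``convex,'' i.e., if $i_1 < i_2$ both lie in $\sQ \cap [l \dd r]$, then every integer $i \in [i_1 \dd i_2]$ also lies in $\sQ$. This directly implies the statement: either $\sQ \cap [l \dd r]$ is empty or it spans all integers between its minimum and maximum, i.e., an interval.

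Fix such $i_1 < i_2$ and let $p_1 := \per(T[i_1 \dd i_1 + \tau))$, $p_2 := \per(T[i_2 \dd i_2 + \tau))$. By definition of $\sQ$, both are at most $\tau/3$. Since $i_2 - i_1 \le r - l \le \tau/3 - 1$, the two length-$\tau$ windows overlap on $T[i_2 \dd i_1 + \tau)$, which has length $\tau - (i_2 - i_1) > 2\tau/3$. This overlap inherits both periods $p_1$ and $p_2$, and because $p_1 + p_2 \le 2\tau/3 <$ overlap length, the Weak Periodicity Lemma (\cref{lem:weak-period}) promotes $g := \gcd(p_1, p_2) \le \tau/3$ to a period of the overlap as well.

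Next I would extend $g$ to a period of the entire block $w := T[i_1 \dd i_2 + \tau)$, of length $L = (i_2 - i_1) + \tau \le 4\tau/3$. For any position $j \in [i_1 \dd i_2 + \tau - 1 - g]$, I want $T[j] = T[j+g]$. Split into cases according to whether $j$ lies in the ``left tail'' $[i_1, i_2)$, the overlap $[i_2, i_1+\tau)$, or the ``right tail'' $[i_1+\tau, i_2+\tau)$. The middle case is immediate from the $g$-periodicity of the overlap. For $j$ in the left tail, use that $T[i_1 \dd i_1+\tau)$ has period $p_1$ to shift both $j$ and $j+g$ by a common multiple $kp_1$ into positions inside $[i_2, i_1+\tau - 1 - g]$; such $k$ exists because that target interval has length at least $\tau/3 \ge p_1$ and the required shift is at most $i_2 - i_1 \le \tau/3$. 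Then $T[j] = T[j+kp_1]$ and $T[j+g] = T[j+g+kp_1]$ via $p_1$-periodicity (all intermediate positions stay within $[i_1, i_1+\tau)$ since $j + g + kp_1 \le i_1 + \tau - 1$), and $T[j+kp_1] = T[j+kp_1+g]$ via $g$-periodicity of the overlap. The right-tail case is handled symmetrically via the $p_2$-periodicity of $T[i_2 \dd i_2+\tau)$.

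Once $w$ has period $g \le \tau/3$, every substring $T[i \dd i+\tau)$ with $i \in [i_1 \dd i_2]$ is contained in $w$ and therefore also has $g$ as a period, giving $\per(T[i \dd i+\tau)) \le g \le \tau/3$, i.e., $i \in \sQ$. The only mildly delicate step is the period-transfer argument in the third paragraph; the rest is routine bookkeeping with the bounds $p_1, p_2, g \le \tau/3$ and $i_2 - i_1 < \tau/3$.
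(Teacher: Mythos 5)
Your proof is correct and follows the same route as the paper: apply the Weak Periodicity Lemma to the overlap of the two length-$\tau$ windows to get the period $g=\gcd(p_1,p_2)$, then extend $g$ to the union of the windows, which places every intermediate index in $\sQ$. The only difference is that you spell out the period-extension step (shifting positions in the tails into the overlap via $p_1$- and $p_2$-periodicity), which the paper asserts without detail; your bookkeeping there checks out.
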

\begin{proof}
    We only need to consider the case where $|[l\dd r]\cap \sQ| \ge 2$, and let $j<k$ be the minimum and maximum element in $[l\dd r]\cap \sQ$, respectively.
      By definition of $\sQ$, we have $p=\per(T[j\dd j+\tau-1])\le \tau/3$ and $q=\per(T[k\dd k+\tau-1]) \le \tau/3$. Applying weak periodicity lemma (\cref{lem:weak-period}) on the substring $T[k\dd j+\tau-1]$ of length $j+\tau-k\ge 2\tau/3$, we know $\gcd(p,q)$ is a period of $T[k\dd j+\tau-1]$. So  $\gcd(p,q)$ is also a period of $T[j\dd k+\tau-1]$. This implies $[j\dd k] \subseteq \sQ$, and hence $[l\dd r] \cap \sQ$ equals the contiguous interval $[j\dd k]$.
\end{proof}

In the following sections, our goal is to design a suitable (randomized) mapping $\phi$ that makes $\sA$ sparse in expectation and allows an efficient quantum algorithm to report elements in $\sA$. 
Our main tool for constructing $\phi$ is Vishkin's Deterministic Sampling, which we shall review in the following section.

\subsection{Review of Vishkin's Deterministic Sampling}

Vishkin's Deterministic Sampling is a powerful technique originally designed for parallel pattern matching algorithms \cite{DBLP:journals/siamcomp/Vishkin91}.  Ramesh and Vinay \cite{rameshvinay} adapted this technique into a quantum algorithm for pattern matching (\cref{thm:matching}).

The following version of deterministic sampling is taken from Wang and Ying's presentation \cite{ying} of Ramesh and Vinay's quantum pattern matching algorithm. 
Compared to the original classical version by Vishkin, the main differences are: (1) The construction takes a random seed $\sigma$ (but it is called deterministic sampling nonetheless), and (2) it explicitly deals with periodic patterns.

\begin{restatable}[Quantum version of Vishkin's Deterministic Sampling, \cite{rameshvinay,ying}]{lem}{lemdsrestate}
   \label{lem:ds}
Let round parameter $w = O(\log n)$.  Given $S\in \Sigma^n$, every seed $ \seed \in \{0,1\}^w$ determines a \emph{deterministic sample} $\DS_{\seed}(S)$ consisting of an \emph{offset} $\delta$ and a sequence of $w$ \emph{checkpoints} $i_1,i_2, \dots , i_w$,  with the following properties:
\begin{itemize}
 \item $\DS_{\seed}(S)= (\delta; i_1,i_2,\dots,i_w)$ can be  computed in $\tilde O(\sqrt{n})$ quantum time given $S$ and $\seed$.
 \item The offset satisfies $\delta \in [0\dd \lfloor n/2\rfloor ]$.
 \item The checkpoints satisfy $i_j-\delta \in [1\dd n]$ for all $j\in [w]$.
 \item Given $S\in \Sigma^n$, the following holds with at least $1- n/2^w$ probability over uniform random $\seed \in \{0,1\}^w$: for every offset $\delta' \in  [0\dd \lfloor n/2\rfloor ]  $ with $\delta'-\delta$ not being a multiple of $\per(S)$, $\DS_{\seed}(S)$ contains a  checkpoint $i_j$ such that $i_j-\delta' \in [1\dd n]$ and $S[i_j-\delta']\neq S[i_j-\delta]$.

If the property above is satisfied, we say $\seed$ generates a \emph{successful} deterministic sample  $\DS_{\seed}(S)$ for $S$.
\end{itemize}
\end{restatable}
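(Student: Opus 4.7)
The plan is to adapt the quantum pattern matching construction of Ramesh and Vinay \cite{rameshvinay} (as presented by Wang and Ying \cite{ying}) to produce the deterministic sample. The main ingredients are quantum period finding, Grover search, and a round-by-round selection of checkpoints with an appropriate random source drawn from the seed $\seed$.

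First, I would compute $p := \per(S)$ in $\tilde O(\sqrt{n})$ quantum time via \cref{cor:per}. The period characterizes exactly which offsets are intrinsically indistinguishable: if $\delta' - \delta$ is a multiple of $p$, then $S[i-\delta'] = S[i-\delta]$ holds for every $i$ with both $i-\delta$ and $i-\delta'$ in $[1,n]$, so no checkpoint can ever witness a mismatch. This is why the guarantee in the last bullet is restricted to offsets $\delta'$ for which $\delta' - \delta$ is not a multiple of $p$.

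Next, I would interpret the seed $\seed$ as specifying the offset $\delta \in [0,\lfloor n/2\rfloor]$ together with (approximately) independent coins for the $w$ rounds of checkpoint selection. At round $j$, the $j$-th coin is used, in combination with quantum subroutines on $S$, to draw a checkpoint $i_j \in [\delta+1, \delta+n]$ from a distribution with the following \emph{halving property}: for every fixed bad offset $\delta'$ with $i_j - \delta' \in [1,n]$, the probability over the $j$-th coin that $S[i_j - \delta'] \neq S[i_j - \delta]$ is at least $1/2$. Each round can be implemented in $\tilde O(\sqrt{n})$ quantum time by using Grover search to locate suitable witness positions, after peeling off the periodic structure identified in the first step. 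This gives a total running time of $\tilde O(\sqrt{n}) + w\cdot \tilde O(\sqrt{n}) = \tilde O(\sqrt{n})$ since $w = O(\log n)$, and the range conditions $\delta \in [0,\lfloor n/2\rfloor]$ and $i_j - \delta \in [1,n]$ are built into the sampling step.

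Assuming the halving property, correctness follows from independence and a union bound: for any fixed bad $\delta'$, the probability that none of $i_1,\ldots,i_w$ distinguishes it is at most $(1/2)^w = 2^{-w}$, and summing over at most $\lfloor n/2 \rfloor$ bad offsets yields total failure probability at most $n/2^{w+1} < n/2^w$. The main obstacle will be establishing the halving property itself: one must design a per-round sampling distribution supported on $[\delta+1, \delta+n]$ such that, for every bad $\delta'$, the random checkpoint lands in the mismatch set $\{i : S[i-\delta] \neq S[i-\delta']\}$ with probability at least $1/2$. This uses the Weak Periodicity Lemma (\cref{lem:weak-period}) to control mismatch patterns of $S$ against itself under shifts that are not multiples of $p$; the full technical details follow the approach of \cite{rameshvinay,ying}.
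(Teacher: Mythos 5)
There is a genuine gap: the per-round ``halving property'' you rely on does not exist. You ask for a single distribution over checkpoints $i_j$ such that \emph{for every} bad offset $\delta'$, the sampled $i_j$ lands in the mismatch set $\{i : S[i-\delta]\neq S[i-\delta']\}$ with probability at least $1/2$. But these mismatch sets can each consist of a \emph{single} position, and these positions are different for different $\delta'$. For example, take $S=0^{n-1}1$ (so $\per(S)=n$ and every $\delta'\neq\delta$ is ``bad''): for each shift $d=\delta'-\delta$ the unique index $i$ with $S[i-\delta]\neq S[i-\delta']$ is the one where exactly one of the two positions hits the final $1$. No fixed distribution can place mass $\ge 1/2$ on $\Omega(n)$ distinct singletons simultaneously, so the independence-plus-union-bound argument cannot be completed. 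Relatedly, the offset $\delta$ cannot simply be read off the seed: if $\delta$ is chosen arbitrarily, nothing forces the checkpoints to certify it against all other offsets.

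The actual construction (Vishkin's, as in the paper's Algorithm~3) is \emph{adaptive elimination}, which is precisely what circumvents this obstacle. One maintains a set of ``alive'' offsets, initially all of $[0\dd\lfloor n/2\rfloor]$; in each round one takes the minimum and maximum alive offsets $p,q$, uses Grover/minimum finding to locate a position $i$ with $S[i-p]\neq S[i-q]$, makes $i$ a checkpoint, and uses one seed bit to decide whether the required character value at $i$ is $S[i-p]$ or $S[i-q]$, killing every alive offset that disagrees. Since each alive offset can agree with at most one of the two candidate values, the coin kills at least half of the alive offsets \emph{in expectation}, so after $w$ rounds the expected number of survivors is at most $\lfloor n/2\rfloor/2^w$ and Markov bounds the failure probability by $n/2^w$. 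Crucially, the distinguishing guarantee for every killed offset is then \emph{deterministic} (it was killed by a concrete checkpoint at which it disagrees with all survivors, in particular with the final $\delta$, which is the last surviving offset rather than a seed-chosen one); randomness is only used to bound the number of rounds. The role of $\per(S)$ also differs from your sketch: it is not computed up front, but emerges in the analysis when $p\not\equiv q\pmod{\per(S)}$ fails, at which point a separate flag mechanism handles the remaining period-equivalent offsets. Your outline is missing this adaptive structure, which is the heart of the technique, so the proof does not go through as proposed.
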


For completeness, we include the proof of \cref{lem:ds} in \cref{app:ds}. %

In the following sections, we will fix a randomly chosen $\seed \in \{0,1\}^w$ with a suitable length $w=O(\log n)$. 
 By a union bound, we can assume that $\DS_{\seed}(S)$ is successful for all substrings $S$ of the input string $T$.

\subsection{Construction of Mapping \texorpdfstring{$\phi$}{phi}}
To build the $\tau$-synchronizing set, our construction of the mapping $\phi$ will be an $L$-level procedure, from top to bottom with length parameters 
\[\tau = \tau_L> \tau_{L-1}>\dots > \tau_2 > \tau_{1}= 6.\] 
These parameters will be chosen later (we will choose $L=\Theta(\sqrt{\log \tau})$).
In the following, we will apply Vishkin's deterministic sampling (\cref{lem:ds}) to substrings of the input string $T[1\dd n]$ of length $m \in \{\tau_1,\tau_2,\dots,\tau_L\}$.

We first define a mapping $\pi$ by augmenting the deterministic sample with some extra information.
\begin{definition}[Mapping $\pi$]
    \label{defn:mapF}
    For $S\in \Sigma^m$ with deterministic sample $\DS_{\seed}(S) = (\delta; i_1,i_2,\dots,i_w)$, define tuple \[\pi_{\seed}(S):= (\per(S); \delta; i_1,\dots,i_w; S[i_1-\delta],\dots,S[i_w-\delta]).\]
\end{definition}

Since $w= O(\log n)$, the tuple $\pi_{\seed}(S)$ can be encoded using $\polylog(n)$ bits.
 The following key lemma says $\pi$ can distinguish two non-identical strings with large overlap.

\begin{lem}
    \label{lem:vishkin}
 Let $1\le d \le \lfloor m/4\rfloor $ and $S\in \Sigma^{m+d}$. Define two overlapping length-$m$ substrings $U = S[1\dd m], V = S[d+1\dd d+m]$. Suppose seed $\seed$ generates successful deterministic samples for both $U$ and $V$.
 
 Then, $\pi_{\seed}(U)=\pi_{\seed}(V)$ if and only if $U=V$, in which case $d$ is a period of $S$.
\end{lem}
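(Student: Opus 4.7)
The ``if'' direction is immediate: if $U=V$, then $\pi_\seed(U)=\pi_\seed(V)$ trivially, and the equality $U[i]=V[i]$ reads $S[i]=S[i+d]$ for every $i\in[1\dd m]$, which is exactly the statement that $d$ is a period of $S$.

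For the ``only if'' direction, assume $\pi_\seed(U)=\pi_\seed(V)$; write $p:=\per(U)=\per(V)$ and $(\delta;i_1,\dots,i_w):=\DS_\seed(U)=\DS_\seed(V)$. Since $U[k]=S[k]$ and $V[k]=S[k+d]$, the equality of stored characters in the two tuples translates into
\[S[i_j-\delta]\;=\;S[i_j-\delta+d]\qquad\text{for every } j\in[w].\qquad(\ast)\]
My plan is to show that $d$ is a period of $S$, which is equivalent to $U=V$.

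The first step is to establish $p\mid d$ by invoking the ``successful deterministic sample'' guarantee of \cref{lem:ds} with a probe offset shifted by $\pm d$. If $p\nmid d$, then shifting by $\pm d$ is a forbidden offset, so some checkpoint must witness a mismatch --- but $(\ast)$ forbids precisely that mismatch. Because \cref{lem:ds} only permits offsets in $[0\dd\lfloor m/2\rfloor]$, I split on whether $\delta\ge d$. If $\delta\ge d$, I apply the guarantee to $U$ with $\delta'=\delta-d$; the would-be mismatch $U[i_j-\delta+d]\ne U[i_j-\delta]$ rewrites in $S$-coordinates as $S[i_j-\delta+d]\ne S[i_j-\delta]$, contradicting $(\ast)$. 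If instead $\delta<d$, then $\delta+d<2d\le\lfloor m/2\rfloor$ by the hypothesis $d\le\lfloor m/4\rfloor$, so I apply the guarantee to $V$ with $\delta'=\delta+d$; using $V[k]=S[k+d]$, the witnessed mismatch again translates into a contradiction of $(\ast)$.

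Once $p\mid d$ with $d<m$, iterating the shortest period shows that $d$ itself is a period of both $U$ and $V$. Translated into $S$-coordinates, these yield $S[i]=S[i+d]$ for $i\in[1\dd m-d]$ (from $U$) and $S[i]=S[i+d]$ for $i\in[d+1\dd m]$ (from $V$, after reindexing). The hypothesis $d\le\lfloor m/4\rfloor$ implies $m-d\ge d+1$, so these two intervals cover $[1\dd m]$, giving that $d$ is a period of $S$ and hence $U=V$. The main subtlety is the offset-range bookkeeping: the $d\le\lfloor m/4\rfloor$ hypothesis has to be used twice --- once in the case split to keep $\delta\pm d$ inside $[0\dd\lfloor m/2\rfloor]$ so that \cref{lem:ds} is applicable, and once at the end to ensure the two periodicity intervals from $U$ and $V$ together cover all of $[1\dd m]$.
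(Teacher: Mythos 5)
Your proof is correct and follows essentially the same route as the paper's: use the deterministic-sampling guarantee at the shifted offsets $\delta\pm d$ to rule out $p\nmid d$, then upgrade $p\mid d$ to full $d$-periodicity of $S$. The only cosmetic differences are that you case-split on $\delta\ge d$ versus $\delta<d$ (the paper splits on $d\le\delta$ versus $d\le\lfloor m/2\rfloor-\delta$, which is equivalent given $d\le\lfloor m/4\rfloor$) and that you glue the two periodicity ranges by direct index coverage rather than via the overlap-of-periods argument; both variants are sound.
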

\begin{proof} The if part is immediate, and we shall prove the only if part.

   Assuming $\pi_{\seed}(U)=\pi_{\seed}(V)$, let $\per(U)=\per(V) = p$, $\DS_{\seed}(U)=\DS_{\seed}(V) =(\delta; i_1,i_2,\dots,i_w)$ where $0\le \delta \le \lfloor m/2\rfloor$ and $i_j-\delta \in [1\dd m]$, and we have $U[i_j-\delta]=V[i_j-\delta]$ for all $j\in [w]$.  We will first show $d$ must be a multiple of $p$. 
   
   Suppose for contradiction that $d$ is not a multiple of $p$.
       Since $\max\{ \delta, \lfloor m/2 \rfloor -\delta\} \ge  \lfloor m/4\rfloor \ge d$, at least one of the following two cases must happen:
       \begin{itemize}
           \item If $d\le \delta$, then $0\le \delta - d\le \lfloor m/2\rfloor$, and the property of $\DS_{\seed}(U)$ implies the existence of a checkpoint $i_j$ with $i_j-(\delta-d) \in [1\dd m]$ such that $U[i_j-\delta] \neq U[i_j-(\delta- d)]$, contradicting $U[i_j-\delta]=V[i_j-\delta] = U[i_j-\delta+d]$.
           \item If $d\le \lfloor m/2\rfloor-\delta$, then $0\le \delta + d\le \lfloor m/2\rfloor$, and the property of $\DS_{\seed}(V)$ implies the existence of a checkpoint $i_j$ with $i_j-(\delta+d)\in [1\dd m]$ such that $V[i_j-\delta] \neq V[i_j - (\delta+d)]$, contradicting $V[i_j-\delta]=U[i_j-\delta] = V[i_j-\delta-d]$.
       \end{itemize}
       Hence, $d$ must be a multiple of $p$, and hence $p\le d\le \lfloor m/4\rfloor < m-d$. Then, from $\per(S[1\dd m])=\per(S[d+1\dd d+m])= p< m- d$ we conclude $\per(S[1\dd d+m])=p$, and $U=S[1\dd m]=S[d+1\dd d+m]=V$.
\end{proof}

We define another auxiliary mapping $\rho_{\ell}$ for each layer $2\le \ell \le L$, which will be useful for dealing with highly periodic substrings.
\begin{definition}[Mapping $\rho$]
    \label{defn:mapG}
    For $2\le \ell\le L$ and $S \in \Sigma^{\tau_\ell}$, let
    \[ \rho_\ell(S):= \max \{r \in [1\dd  \tau_\ell] : \per(S[1\dd r]) = \per(S[1\dd \tau_{\ell-1}])\},\] 
    namely the furthest position that the period of the length-$\tau_{\ell-1}$ prefix can extend to.
\end{definition}
 We use $Y_0$ to denote $\{0,1\}^{\polylog(n)}$, which can hold the return value of $\pi_\sigma$ (a tuple) and the return value of $\rho_{\ell}$ (an integer). From now on we regard  $\pi_\sigma$ and $\rho_{\ell}$ as mappings from strings to $Y_0$.

Now we define the mapping $\psi$ by concatenating the values returned by $\pi$ and $\rho$ in all layers.

\begin{definition}[Mapping $\psi$]
    \label{defn:psi}
Define mapping $\psi_{\seed}\colon \Sigma^\tau \to Y_0^{2L-1}$ as follows: given $S[1\dd \tau]$, let 
\[\psi_{\seed}(S):= p_1 r_2  p_2  r_3 p_3 \dots r_{L}p_L,\]
where 
\[ p_\ell = \pi_{\seed}(S[1\dd \tau_\ell]), \, 1\le \ell \le L,\]
and 
\[ r_\ell = \rho_{\ell}(S[1\dd \tau_\ell]),\, 2\le \ell \le L.\]
\end{definition}

Finally, we define a mapping $\phi\colon \Sigma^\tau \to Y^{2L-1}$ by simply hashing each symbol in $\psi(S)$, where $Y = [2^{\polylog(n)}]$.
\begin{definition}[Mapping $\phi$]
    \label{defn:M}
Let seed $\seedDS \in \{0,1\}^w$. Let another seed $\seedH \in \{0,1\}^{(2L-1)\cdot O(\log n)}$ independently sample $2L-1$ min-wise independent hash functions $h_1,h_2,\dots,h _{2L-1} \colon Y_0 \to Y$ from \cref{lem:minwise} with parameters $n$ and $|Y_0|$. 

Then, define mapping $\phi_{\seedDS,\seedH}\colon \Sigma^\tau \to Y^{2L-1}$ as follows: given $S\in \Sigma^\tau$ with
\[ \psi_{\seedDS}(S) = y_1y_2\cdots y_{2L-1},\]
define
\[ \phi_{\seedDS,\seedH}(S) := h_1(y_1)h_2(y_2)\cdots h_{2L-1}(y_{2L-1}).\]

\end{definition}

We sometimes omit the random seeds $\sigma,\seedDS,\seedH$ from the subscripts and simply write $\psi(S),\phi(S)$.

We treat $\phi(S)\in Y^{2L-1}$ (and $\psi(S)\in Y_0^{2L-1}$) as length-$(2L-1)$ strings over alphabet $Y$ (and $Y_0$).
Elements in $Y^{2L-1}$ are compared by their lexicographical order.

For $1\le j\le 2L-1$, we also use $\phi(S)[1\dd j]\in Y^j$ to denote the length-$j$ prefix of $\phi(S)$, and use $\phi(S)[j] \in Y$ to denote the $j$-th symbol of $\phi(S)$. Notations $\psi(S)[1\dd j], \psi(S)[j]$ are defined similarly.

\subsection{Analysis of Sparsity}
In this section, we analyze the sparsity of the synchronizing set $\sA$ defined using mapping $\phi$ from \cref{defn:M}.

Recall that $\sQ$ (defined in \eqref{eqn:defnQ}) contains indices $i$ with $\per(T[i\dd i+\tau))\le \tau/3$, and recall that $\im(i):= \phi(T[i\dd i+\tau))$. We additionally define $\Psi(i):= \psi(T[i\dd i+\tau))$.

We have the following key lemma.

\begin{lem}[Expected count of minima]
    \label{lem:window-count}
    Suppose interval $[l \dd r] \subseteq [1\dd n-\tau+1]$ has length $r-l+1\le  \tau/4 $, and $[l\dd r] \cap \sQ = \emptyset$.

    Then, 
   \[ \Ex_{\seed_{\mathsf{DS}},\seed_{H}} \Big \lvert \Big \{ k \in [l\dd r] : \im(k) = \min \{\im(i): i\in [l\dd k] \}\Big \}\Big \rvert \le (O(\log \tau))^{2L-1},\]
   and 
   \[ \Ex_{\seed_{\mathsf{DS}},\seed_{H}} \Big \lvert \Big \{ k \in [l\dd r] : \im(k) = \min \{\im(i): i\in [k\dd r] \}\Big \}\Big \rvert \le (O(\log \tau))^{2L-1}.\]
\end{lem}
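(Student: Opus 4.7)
First, I would establish the key distinctness fact: conditional on $\seed_{\mathsf{DS}}$ generating successful deterministic samples for all length-$\tau$ substrings of $T$ (which holds with probability $\ge 1 - n^{-c}$ for any constant $c$ by a union bound on \cref{lem:ds}), the tuples $\Psi(l), \Psi(l+1), \dots, \Psi(r) \in Y_0^{2L-1}$ are pairwise distinct. Since the last coordinate of $\Psi(i)$ is $p_L = \pi_{\seed_{\mathsf{DS}}}(T[i\dd i+\tau))$, any coincidence $\Psi(i)=\Psi(j)$ with $0<j-i\le r-l\le \lfloor\tau/4\rfloor$ triggers \cref{lem:vishkin} with $m=\tau$ and $d=j-i$, forcing $j-i$ to be a period of $T[i\dd j+\tau)$ and in particular $\per(T[i\dd i+\tau))\le j-i\le \tau/3$. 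This would place $i\in\sQ$, contradicting $[l\dd r]\cap\sQ=\emptyset$. Injectivity of the $h_j$'s then implies $\im(i)=\im(j)$ iff $\Psi(i)=\Psi(j)$, so the $\im$-tuples on $[l\dd r]$ are distinct and every ``$\im(k)=\min$'' event is about a unique left-to-right minimum.

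Next, I would prove the following general counting claim by induction on $J$: if $a_1,\dots,a_n$ are $n$ pairwise distinct $J$-tuples over $Y_0$ and $h_1,\dots,h_J$ are independent approximate min-wise hashes drawn as in \cref{lem:minwise}, then letting $b_i:=(h_1(a_i[1]),\dots,h_J(a_i[J]))$,
\[
\Ex\bigl|\{k\in[1\dd n]:b_k\preceq b_i\text{ for all }i\le k\}\bigr|\le (1.1\,H_n)^J.
\]
For the inductive step I decompose the event ``$k$ is a prefix-minimum'' into $(a_k)$: ``$h_1(a_k[1])<h_1(v)$ for every distinct coord-$1$ value $v\ne a_k[1]$ appearing in $\{a_i[1]:i\le k\}$'', and $(b_k)$: ``$k$ is a prefix-minimum on coordinates $2,\dots,J$ within the subsequence $\{i\le k: a_i[1]=a_k[1]\}$''. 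Independence of $h_1$ from $(h_2,\dots,h_J)$ gives $\Pr[(a_k)\wedge (b_k)]=\Pr[(a_k)]\Pr[(b_k)]$; approximate min-wise independence gives $\Pr[(a_k)]\le 1.1/j$ where $j$ is the rank of $a_k[1]$ in the order of first appearance among distinct coord-$1$ values on $[1,k]$. Grouping $k$'s by coord-$1$ class $c^{(j)}$ and applying the inductive hypothesis to each class (whose restricted length-$(J-1)$ tuples inherit pairwise distinctness from the $a_i$'s), the total is bounded by $\sum_{j\ge 1}(1.1/j)\cdot(1.1\,H_n)^{J-1}\le (1.1\,H_n)^J$. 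The base case $J=1$ is direct from the per-index bound.

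Applying this claim with $n=r-l+1\le \tau/4$ and $J=2L-1$ to the distinct tuples $\Psi(l),\dots,\Psi(r)$ yields the desired bound $(O(\log\tau))^{2L-1}$ on the first (prefix-min) quantity. The second (suffix-min) inequality is obtained by applying the same claim to the reversed sequence $\Psi(r),\Psi(r-1),\dots,\Psi(l)$. The contribution of the low-probability ``$\seed_{\mathsf{DS}}$ fails'' event to the unconditional expectation is at most $\tau\cdot n^{-c}=o(1)$ and is absorbed into the bound.

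The main obstacle is the inductive step of the counting claim. The delicate points are (i) factoring the prefix-minimum event into the coord-$1$ event $(a_k)$ and the recursive event $(b_k)$ via independence of the $h_j$'s, (ii) verifying that after restriction to a coord-$1$ class, the length-$(J-1)$ tuples remain pairwise distinct so that the induction hypothesis applies, and (iii) bounding $\Pr[(a_k)]$ uniformly over $k$ in a given coord-$1$ class by $1.1/j$ (the value at the first occurrence of that class), so that summing across classes produces a harmonic sum rather than something growing linearly in $n$. Without the global distinctness of the $\Psi$-tuples established in the first step, the single-coordinate analysis could degenerate to $\Theta(n)$ and the induction would not close.
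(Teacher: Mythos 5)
Your proposal is correct and follows essentially the same route as the paper: the same distinctness argument via \cref{lem:vishkin} (ruling out coincidences of $\Psi$ on a window disjoint from $\sQ$), followed by the same level-by-level application of the approximate min-wise independence of \cref{lem:minwise}, yielding a product of $1.1/(\text{number of competing classes})$ factors whose sum over $k$ is a product of harmonic sums. The only difference is bookkeeping: you organize the final summation as an induction on the number of coordinates $J$, whereas the paper packages the identical computation as a weighted trie whose root-to-leaf weight products bound $\sum_k 1/(c_{1,k}\cdots c_{2L-1,k})$ --- the trie is exactly the recursion tree of your induction.
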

\begin{proof}
    We assume that the seed $\seed_{\mathsf{DS}}$  is successful for all substrings of $T$.  This happens with high probability over random $\seed_{\mathsf{DS}}\in \{0,1\}^w$ with length $w=O(\log n)$.

    First, we show that for $k_1,k_2 \in [l \dd r], k_1\neq k_2$ we must have $\Psi(k_1)\neq \Psi(k_2)$. 
   Suppose for contradiction that there are $k_1,k_2 \in [l \dd r], k_1<k_2$ such that $\psi_{\seed_{\mathsf{DS}}}(T[k_1\dd k_1+\tau)) = \psi_{\seed_{\mathsf{DS}}}(T[k_2\dd k_2+\tau)) $. Then, by definition of $\psi$ (\cref{defn:psi}), we have $\pi_{\seed_{\mathsf{DS}}}(T[k_1\dd k_1+\tau)) = \pi_{\seed_{\mathsf{DS}}}(T[k_2\dd k_2+\tau))$. 
    By \cref{lem:vishkin}, since $k_2-k_1\le \tau/4$, we must have $T[k_1\dd k_1+\tau) = T[k_2\dd k_2+\tau)$, and $(k_2-k_1)$ is a period of $T[k_1\dd k_1+\tau)$. This implies $k_1 \in \sQ$, contradicting  $[l\dd r] \cap \sQ = \emptyset$.

  In the following, we will only prove the first expectation upper bound. The second statement can be proved similarly.

    Recall that $\Phi(k)\in Y^{2L-1}$ is obtained by applying injective min-wise independent hash functions to $\Psi(k)\in Y_0^{2L-1}$, and they are compared in lexicographical order.
 In order for $\im(k)$ to be the lexicographically minimum among $\{\im(i)\}_{i \in [l\dd k]}$, its first symbol $\im(k)[1]$ needs to be the minimum among $\{\im(i)[1]\}_{i \in [l\dd k]}$, and then to break ties we compare the second symbols, and so on.
  Since the $j$-th symbol $\im(k)[j]$ is defined as the hash value $h_{j}(\Psi(k)[j])$, the probability of $\im(k)[j]$ being minimum at the $j$-th level is inversely proportional to the count of distinct $\Psi(i)[j]$ values that are being compared with.

Formally, for every $k\in [l\dd r]$ and every $1\le j\le 2L-1$, define the count
\[ c_{j,k} := \big \lvert  \{\Psi(k')[j] : k' \in [l \dd k], \Psi(k')[1\dd j-1]=\Psi(k)[1\dd j-1] \}\big \rvert.\]
Then, conditioned on $\Phi(k)[1\dd j-1] = \min \{\Phi(k')[1\dd j-1]: k' \in [l\dd k]\}$, we have $\Phi(k)[1\dd j] = \min \{\Phi(k')[1\dd j]: k' \in [l\dd k]\}$ if and only if $\Phi(k)[j]=h_j(\Psi(k)[j])$ becomes the minimum among the $c_{j,k}$ candidates currently in a tie, which happens with at most $1.1/c_{j,k}$ probability by the  min-wise independence of $h_j$. Hence, the probability 
that \[\im(k) = \min \{\im(i): i\in [l\dd k] \}\] happens is at most %
\begin{equation}
    \frac{(1 + 0.1)^{2L-1}}{c_{1,k}\cdot c_{2,k}\dots c_{2L-1,k}}.
    \label{eqn:proba}
\end{equation}
We will derive the desired expectation upper bound by summing \eqref{eqn:proba}  over all $k\in [l\dd r]$.

To do this, consider the following process of inserting strings $\Psi(l),\Psi(l+1),\dots,\Psi(r) \in Y_0^{2L-1}$ one by one into a \emph{trie}. Initially the trie only has a root node with node weight 1. Let $\Psi(k) \in Y_0^{2L-1}$ be the current string to be inserted. Starting from the root node, we iterate over $j=1,2,\dots,2L-1$, and each time move from the current node $p$ down to its child node $c$ labeled with symbol $\Psi(k)[j]\in Y_0$. If $p$ does not have such a child $c$, then we have to first create a child node $c$ labeled with $\Psi(k)[j]$, and assign $c$ a real weight $1/\deg(p)$, where $\deg(p)$ denotes the number of children (including $c$) that $p$ currently has.  
Now, observe that the value $1/\big (c_{1,k}c_{2,k}\dots c_{2L-1,k}\big )$ is upper-bounded by the product of node weights on the path from the root node to the leaf representing $\Psi(k)$. This is because $c_{j,k}$ equals the number of children that the $j$-th node on this path currently has, which is not smaller than the reciprocals of the weights of its children (in particular, the $(j+1)$-st node on this path).

Each node in the trie has at most $r-l+1\le \lfloor \tau/4\rfloor $ children, so the sum of the weights of its children is at most $1+1/2+1/3+\dots+1/(\lfloor \tau/4\rfloor) < \log \tau$. Since the trie has depth $2L-1$, a simple induction with distributive law of multiplication shows that  the sum of node weight products over all root-to-leaf paths in the trie is at most $(\log \tau)^{2L-1}$.

We previously showed that $\Psi(k)$ are distinct for all $k\in [l\dd r]$, so each leaf in the trie corresponds to only one $k\in [l\dd r]$. Hence, the sum of $1/\big (c_{1,k}c_{2,k}\dots c_{2L-1,k}\big )$ over all $k\in [l\dd r]$ is upper-bounded by $(\log \tau)^{2L-1}$. Then, the proof follows from summing over \eqref{eqn:proba}, which sums to at most $(1.1)^{2L-1} \cdot (\log \tau)^{2L-1} \le (O(\log \tau))^{2L-1}$.
\end{proof}

Now we are ready to prove the sparsity of the synchronizing set $\sA$ defined in \eqref{eqn:defnS}.
\begin{lem}[Sparsity]
    For every $i\in [1\dd n-3\tau+2]$, we have
   \[ \Ex_{\sigma_{\mathsf{DS}},\sigma_H} \big \lvert \sA\cap [i\dd i+\tau) \big \rvert \le (O(\log \tau))^{2L-1}.\]
    \label{lem:spar}
\end{lem}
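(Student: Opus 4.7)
The plan is to reduce the sparsity bound to the prefix/suffix minimum count bound of \cref{lem:window-count}. I would split $[i\dd i+2\tau)$ into a constant number of sub-intervals, each of length at most $\tau/4$ and disjoint from $\sQ$, and then charge each element of $\sA \cap [i\dd i+\tau)$ to either a ``suffix-minimum'' or a ``prefix-minimum'' event on one of these sub-intervals.

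First, I would partition $[i\dd i+2\tau)$ into $8$ contiguous blocks, each of length $\lceil \tau/4 \rceil \le \tau/3$. By \cref{lem:structureQ}, each such block meets $\sQ$ in either an empty set or a single contiguous sub-interval; removing this $\sQ$ portion splits each block into at most two ``clean'' pieces (a prefix piece and a suffix piece), each of length at most $\tau/4$ and disjoint from $\sQ$. This yields a collection $\mathcal{C}$ of at most $16$ clean sub-intervals whose union equals $[i\dd i+2\tau) \setminus \sQ$, each satisfying the hypotheses of \cref{lem:window-count}.

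Next, for each $k \in \sA \cap [i\dd i+\tau)$, the defining condition $\min\{\im(j) : j \in [k\dd k+\tau]\setminus \sQ\} \in \{\im(k), \im(k+\tau)\}$ puts $k$ into one of two symmetric cases. In Case (a), when $k \notin \sQ$ and $\im(k)$ attains the minimum: letting $B \in \mathcal{C}$ be the sub-interval containing $k$ (necessarily $B \subseteq [i\dd i+\tau)$), its right endpoint $r_B$ satisfies $r_B \le i+\tau-1 \le k+\tau$, so every $j \in [k\dd r_B] \subseteq B$ lies in $[k\dd k+\tau]\setminus \sQ$, which yields $\im(k) = \min\{\im(j) : j \in [k\dd r_B]\}$. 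Thus $k$ contributes as a suffix-minimum position of $B$ in the sense of \cref{lem:window-count}. In the symmetric Case (b), when $k+\tau \notin \sQ$ and $\im(k+\tau)$ attains the minimum, the position $k' := k+\tau$ lies in a sub-interval $B' \in \mathcal{C}$ with $B' \subseteq [i+\tau\dd i+2\tau)$, and a symmetric argument identifies $k'$ as a prefix-minimum of $B'$. (The degenerate possibilities in which $\im(k)$ or $\im(k+\tau)$ equals the minimum despite the corresponding index lying in $\sQ$ require $h_j$-hash collisions and their expected contribution is negligible for a sufficiently large hash range $Y$.)

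Finally, I would apply \cref{lem:window-count} to each of the at most $16$ sub-intervals in $\mathcal{C}$, using the suffix-minimum bound for the $\le 8$ sub-intervals inside $[i\dd i+\tau)$ and the prefix-minimum bound for the $\le 8$ sub-intervals inside $[i+\tau\dd i+2\tau)$. Linearity of expectation then gives
\[ \Ex\bigl\lvert \sA \cap [i\dd i+\tau)\bigr\rvert \;\le\; 16 \cdot (O(\log \tau))^{2L-1} \;=\; (O(\log \tau))^{2L-1}, \]
absorbing the $O(1)$ constants into the hidden factor. The main obstacle is the careful handling of the $\sQ$ positions: without \cref{lem:structureQ}, a length-$\tau/4$ window could in principle be shattered by $\sQ$ into many pieces, and both the partitioning step and the reduction to prefix/suffix minima in \cref{lem:window-count} (which requires a $\sQ$-free interval) would break. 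The fact that $\sQ$ intersects each short window in at most a single contiguous block is precisely what keeps the number of clean sub-intervals a constant.
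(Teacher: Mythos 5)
Your proposal is correct and follows essentially the same route as the paper's proof: decompose into the two symmetric cases (whether $\im(k)$ or $\im(k+\tau)$ attains the minimum), use \cref{lem:structureQ} to cut $[i\dd i+2\tau)\setminus \sQ$ into $O(1)$ clean sub-intervals of length at most $\tau/4$, and invoke the suffix-minimum and prefix-minimum bounds of \cref{lem:window-count} on each. The one point of divergence is your parenthetical about the ``degenerate'' case where, say, $k\in\sQ$ yet $\im(k)$ attains the minimum: the paper rules this out \emph{deterministically} (its \cref{myclaim}) rather than probabilistically, because the hash functions of \cref{lem:minwise} are injective and $\pi$ records $\per(T[k\dd k+\tau))$ as a component, so $\im(k)=\im(j)$ with $j\notin\sQ$ forces $\per(T[k\dd k+\tau))>\tau/3$ and hence $k\notin\sQ$; your ``negligible collision probability'' claim is therefore true (the probability is zero), but the cleaner justification is the built-in injectivity rather than a largeness-of-$Y$ argument, which the min-wise property alone would not supply.
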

\begin{proof}
    Recall that for $k\in [i\dd i+\tau)$, we have $k\in \sA$ if and only if 
    \[ \min \{\im(j): j\in [k\dd k+\tau]\setminus \sQ \} \in \{\im(k),\im(k+\tau)\}.  \]
To bound the number of such $k$,   we will separately bound and sum up the number of $k\in [i\dd i+\tau)$ such that 
   \begin{equation}
     \min \{\im(j): j\in [k\dd k+\tau]\setminus \sQ \} = \im(k),  
     \label{eqn:tobound}
   \end{equation}
    and the number of $k'\in [i+\tau\dd i+2\tau)$ (where we replaced $k=k'-\tau$) such that 
    \[ \min \{\im(j): j\in [k'-\tau\dd k']\setminus \sQ \} = \im(k').  \]
    In the following, we will only bound the first case (the number of $k\in [i\dd i+\tau)$ satisfying \eqref{eqn:tobound}). The second case is symmetric and can be bounded similarly.

    First we have the following claim.
    \begin{claim}
        \label{myclaim}
    For all $k\in [i\dd i+\tau)$ satisfying \eqref{eqn:tobound}, it holds that $k\notin \sQ$. 
    \end{claim}
    \begin{proof} If $k\in \sQ$, then there must exist $j\notin \sQ$ such that $\im(j)=\im(k)$, which implies $\pi_{\seed_{\mathsf{DS}}}(T[k\dd k+\tau))=\pi_{\seed_{\mathsf{DS}}}(T[j\dd j+\tau))$ by injectivity of the hash function $h_{L}$, and hence $\per(T[k\dd k+\tau)) = \per(T[j\dd j+\tau)) >\tau/3$, contradicting $k\in \sQ$. 
    \end{proof} 

    Hence, we only need to bound the number of $k\in [i\dd i+\tau) \setminus \sQ$, which by \cref{lem:structureQ} can be decomposed as the disjoint union of $O(1)$ many intervals, each having length at most $\tau/4$.
   For each such interval $[l'\dd r']$ (where $r'-l'+1\le \tau/4$ and $[l'\dd r'] \cap \sQ=\emptyset$), any $k\in [l'\dd r']$ satisfying \eqref{eqn:tobound} must also satisfy 
    $ \min \{\im(j): j\in [k\dd r']\} = \im(k)$. By \cref{lem:window-count}, the expected number of such $k$ in $[l'\dd r']$ is at most $(O(\log \tau))^{2L-1}$.  Summing over $O(1)$ intervals $[l'\dd r']$, the total expected number of such $k$ is also at most $(O(\log \tau))^{2L-1}$.
\end{proof}

\subsection{Algorithm for Reporting Synchronizing Positions}
In this section we will give an efficient quantum  algorithm for reporting the synchronizing positions in $\sA$.

First, observe that the mappings $\pi$ (\cref{defn:mapF}) and $\rho$ (\cref{defn:mapG}) can be computed efficiently.
\begin{obs}
    \label{obs:algo}
For $S\in \Sigma^m$ and seed $\sigma \in \{0,1\}^{O(\log n)}$, $\pi_{\seed}(S)$ can be computed in $\tilde O(\sqrt{m})$ quantum time.

For $2\le \ell \le L$ and $S\in \Sigma^{\tau_\ell}$, $\rho_{\ell}(S)$ can be computed in $\tilde O(\sqrt{\tau_\ell})$ quantum time.

Hence, for all $1\le \ell\le L, S\in \Sigma^{\tau_{\ell}}$, $\phi(S)[1\dd 2\ell-1]$ can be computed in $\tilde O(\sqrt{\tau_\ell})$ quantum time.
\end{obs}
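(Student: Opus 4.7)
The plan is to handle the three claims in sequence, using the already-stated quantum primitives as black boxes.

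For $\pi_\sigma(S)$, the definition unpacks into four ingredients: $\per(S)$, the offset $\delta$, the $w = O(\log n)$ checkpoint indices $i_1,\dots,i_w$, and the sampled characters $S[i_j - \delta]$. I would obtain $\per(S)$ via \cref{cor:per} in $\tilde O(\sqrt{m})$ quantum time, compute $(\delta; i_1,\dots,i_w) = \DS_\sigma(S)$ in $\tilde O(\sqrt{m})$ time by \cref{lem:ds}, and then make $w = O(\log n)$ direct queries to $S$ to read off the characters $S[i_j-\delta]$. Summing yields $\tilde O(\sqrt{m})$ total.

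For $\rho_\ell(S)$, I first compute $p := \per(S[1\dd \tau_{\ell-1}])$ in $\tilde O(\sqrt{\tau_{\ell-1}})$ time via \cref{cor:per}. The key observation is that for any $r \ge \tau_{\ell-1}$, any period $q \le \tau_{\ell-1}$ of $S[1\dd r]$ is automatically a period of $S[1\dd \tau_{\ell-1}]$, so $\per(S[1\dd r])\ge p$; conversely, if $p$ itself is a period of $S[1\dd r]$ then $\per(S[1\dd r])=p$. Hence $\rho_\ell(S)$ is determined by the largest $r\in[\tau_{\ell-1}\dd \tau_\ell]$ for which $p$ remains a period of $S[1\dd r]$, which reduces to computing $L := \lcp(S[1\dd \tau_\ell - p],\, S[1+p\dd \tau_\ell])$ and then taking $\rho_\ell(S) = \min(L+p,\tau_\ell)$. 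By \cref{obs:lcp} this LCP takes $\tilde O(\sqrt{\tau_\ell})$ quantum time, giving the claimed bound.

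For the combined bound on $\phi(S)[1\dd 2\ell-1]$, I would unfold the definitions: this prefix only depends on the length-$\tau_j$ prefix of $S$ for $j\le \ell$, namely on $\pi_\sigma(S[1\dd \tau_j])$ for $j=1,\dots,\ell$ and $\rho_j(S[1\dd \tau_j])$ for $j=2,\dots,\ell$, followed by applying the hash functions $h_1,\dots,h_{2\ell-1}$ (each evaluable in $\polylog(n)$ time by \cref{lem:minwise}). Using the two bounds above, the total cost is $\sum_{j=1}^{\ell} \tilde O(\sqrt{\tau_j})$. Since $\tau_1<\cdots<\tau_\ell$, this sum is at most $\ell\cdot \tilde O(\sqrt{\tau_\ell}) = \tilde O(\sqrt{\tau_\ell})$ (absorbing the factor $\ell \le L = O(\sqrt{\log\tau})$ into the $\tilde O$).

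No step here is really an obstacle: the main thing to be careful about is the periodicity argument for $\rho_\ell$, specifically ensuring that $\per(S[1\dd r])$ cannot drop strictly below $p$ when $r\ge \tau_{\ell-1}$, which follows from the fact that any shorter period of $S[1\dd r]$ would also be a period of the prefix $S[1\dd \tau_{\ell-1}]$ and contradict the minimality of $p$.
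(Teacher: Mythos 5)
Your proposal is correct and follows essentially the same route as the paper: $\pi_\sigma$ via \cref{cor:per} and \cref{lem:ds}, $\rho_\ell$ via a period computation followed by a period-extension step, and the third claim by summing the costs over levels. Your reduction of the extension step for $\rho_\ell$ to a single $\lcp$ query (\cref{obs:lcp}) is just a cleaner packaging of the paper's ``binary search with Grover search,'' and your justification that $\per(S[1\dd r])$ cannot drop below $p$ for $r\ge\tau_{\ell-1}$ is the right (and necessary) observation.
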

\begin{proof}
The first statement immediately follows from \cref{lem:ds} and \cref{cor:per}. The second statement follows from \cref{cor:per} and a simple binary search with Grover search.
The third statement follows by direct computing.
\end{proof}

We will need to efficiently identify elements in $\sQ$. The following lemma will be useful.
\begin{lem}[Finding a Long Cubic Run]
    \label{lem:findrun}
   Given $S\in \Sigma^{n}$ with length $m\le n \le 4m/3$, there is a quantum algorithm in $\tilde O(\sqrt{n})$ time that finds (if exists) a \emph{maximal} substring  $S[i\dd j]$ that has length $j-i+1\ge m$ and period $\per(S[i\dd j]) \le m/3$, and such $S[i\dd j]$ is unique if exists.
\end{lem}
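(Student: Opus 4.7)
The plan is to reduce the problem to one period computation plus one maximal periodic extension, using the observation that any valid substring must contain a large central window. Because $n \le 4m/3$, every substring of $S$ of length $\ge m$ must start at index $\le n-m+1$ and end at index $\ge m$, and hence contains the central window $[a\dd b] := [n-m+1\dd m]$, which has length $2m-n \ge 2m/3$. Moreover, any \emph{maximal} substring $S[i\dd j]$ of length $\ge m$ with period $\le m/3$ is necessarily a run with period $\le m/3$, because otherwise one could extend it by one character while preserving both conditions. For uniqueness, suppose two such runs with minimum periods $p_1, p_2 \le m/3$ coexisted. Their intersection contains $[a\dd b]$ and thus has length $\ge 2m/3 \ge p_1 + p_2$, so by \cref{lem:weak-period} the value $g := \gcd(p_1, p_2)$ is a period of the intersection. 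Any length-$p_1$ substring of the intersection is a cyclic rotation of the Lyndon root of the first run, which is primitive; such a substring would inherit period $g \mid p_1$ from the intersection, forcing $g = p_1$ (else it decomposes as a nontrivial power of a length-$g$ string). Symmetrically $g = p_2$, so $p_1 = p_2$, and the two runs---both maximal extensions with the same period of a common periodic region---coincide.

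The algorithm has three steps. (i) Compute $p^* := \per(S[a\dd b])$ via \cref{cor:per} in $\tilde O(\sqrt{2m-n}) = \tilde O(\sqrt{n})$ quantum time. (ii) If $p^* > m/3$, output ``no valid run exists.'' (iii) Otherwise, determine the maximal interval $[a_1\dd b_1] \supseteq [a\dd b]$ for which $S[a_1\dd b_1]$ has period $p^*$: the quantity $a - a_1$ equals the longest common suffix of $S[1\dd a-1]$ and $S[p^*+1\dd a-1+p^*]$, computed by \cref{obs:lcp} applied to their virtual reversals (each queryable in $O(1)$ time from $S$) in $\tilde O(\sqrt{a-1}) \le \tilde O(\sqrt{n})$ time; $b_1 - b$ is found symmetrically. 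Output $[a_1\dd b_1]$ if $b_1 - a_1 + 1 \ge m$, else ``no valid run.''

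The main correctness claim is that $p^* = p$ whenever a valid run $[i\dd j]$ of period $p \le m/3$ exists, so that the maximal extension with period $p^*$ exactly recovers the run. Indeed, $S[a\dd b] \subseteq S[i\dd j]$ inherits $p$ as a period, giving $p^* \le p$; since $|S[a\dd b]| \ge 2m/3 \ge p + p^*$, \cref{lem:weak-period} forces $p^* \mid p$. Were $p^* < p$, then since $|S[a\dd b]| \ge 2p$ the length-$p$ window $S[a\dd a+p-1]$---a cyclic rotation of the primitive Lyndon root of $[i\dd j]$---would inherit the shorter period $p^*$ from $S[a\dd b]$, decomposing it as a power of a length-$p^*$ string and contradicting primitivity. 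Hence $p^* = p$. Conversely, whenever the algorithm outputs $[a_1\dd b_1]$, by construction $S[a_1\dd b_1]$ is a maximal substring of period $p^* \le m/3$ (any strictly shorter period would contradict $\per(S[a\dd b]) = p^*$) and length $\ge m$, so it is indeed a valid run. The central subtlety of the proof is precisely this primitivity argument tying the minimum period of the small central window $S[a\dd b]$ to that of the hypothesized run; once it is in place, uniqueness, termination, and the complexity bound all follow from standard quantum subroutines already invoked above.
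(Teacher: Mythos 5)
Your proof is correct and follows essentially the same route as the paper: both identify the forced central window $R=S[n-m+1\dd m]$ of length $2m-n\ge 2m/3$, compute $p^*=\per(R)$ with \cref{cor:per}, argue that any valid run must have minimum period exactly $p^*$, and then recover the run by maximally extending this period left and right with Grover-assisted binary/LCP searches in $\tilde O(\sqrt n)$ time. Your write-up merely fills in more detail (the primitivity argument for $p^*=p$ and the explicit uniqueness argument) than the paper's terser version.
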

\begin{proof}
    Since $j-i+1\ge m$, $S[i\dd j]$ must contain $R:=S[n-m+1\dd m]$ as a substring. Since $|R| =2m-n \ge 2m/3 \ge 2\cdot \per(S[i\dd j])$, we must have $\per(S[i\dd j]) = \per(R)$. Hence, we can use \cref{cor:per} to compute $p = \per(R)$, and then uniquely determine $i,j$ by extending the period to the left and to the right, namely,
    \begin{align*}
     i &= \min \{i \in [1\dd n-m+1]: S[i']=S[i'+p] \text{ for all } i'\in [i\dd n-m+1]\},\\
     j &= \max \{j \in [m\dd n]: S[j']=S[j'-p] \text{ for all } j'\in [m\dd j]\},
    \end{align*}
    using binary search with Grover search. 
\end{proof}

\begin{lem}[Finding elements in $\sQ$]
    \label{lem:findq}
    Given $i\in [1\dd n-3\tau+2]$, one can compute $[i\dd i+2\tau)\cap \sQ$, represented as disjoint union of $O(1)$ intervals, in $\tilde O(\sqrt{\tau})$ quantum time.
\end{lem}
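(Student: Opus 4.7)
\medskip
\noindent\textbf{Proof proposal.} My plan is to reduce the problem to $O(1)$ applications of \cref{lem:findrun}, each on a window of size $\Theta(\tau)$. Partition $[i\dd i+2\tau)$ into $c=O(1)$ consecutive subintervals $I_1,\dots,I_c$, each of length at most $\tau/3$. By \cref{lem:structureQ}, each $I_k\cap \sQ$ is either empty or a contiguous subinterval, so it suffices to compute $I_k\cap\sQ$ for each $k$ separately in $\tilde O(\sqrt{\tau})$ quantum time; concatenating the $O(1)$ results then gives the desired representation.

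For a single subinterval $I_k=[l\dd r]$ (with $r-l+1\le \tau/3$), I would apply \cref{lem:findrun} to the string $S:=T[l\dd r+\tau-1]$. Its length is $n=r-l+\tau$, which satisfies $\tau\le n\le 4\tau/3$, so with parameter $m=\tau$ the lemma runs in $\tilde O(\sqrt{\tau})$ quantum time and returns (if it exists) the unique maximal substring $T[a\dd b]$ of $S$ with $b-a+1\ge \tau$ and $\per(T[a\dd b])\le \tau/3$. I would then output
\[
I_k\cap \sQ \;=\; [\max(a,l)\dd \min(b-\tau+1,r)],
\]
treating the output as empty if \cref{lem:findrun} returns nothing or if the interval is degenerate.

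The main step to justify is the identity above. The easy direction is immediate: any $j$ in the right-hand side has $T[j\dd j+\tau-1]\subseteq T[a\dd b]$, inheriting period $\le \tau/3$, hence $j\in\sQ$. The key direction is to show that every $j\in I_k\cap\sQ$ must lie inside the interval returned by \cref{lem:findrun}. For this, I use that $T[j\dd j+\tau-1]$ is a length-$\tau$ substring of $S$ with period $\le \tau/3$, so it extends to some maximal such substring inside $S$ of length $\ge \tau$; by the uniqueness clause of \cref{lem:findrun} (itself a consequence of the weak periodicity lemma, since two overlapping candidates in a window of length $\le 4\tau/3$ would merge), this extension must equal $T[a\dd b]$, forcing $a\le j\le b-\tau+1$. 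This is the only part that requires real argument; everything else is bookkeeping.

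Finally, I would verify the boundary conditions: $r+\tau-1\le i+3\tau-2\le n$ since $i\le n-3\tau+2$, so the window $S$ is always a legitimate substring of $T$; and $r-l+1\ge 1$ ensures $n\ge m$ so \cref{lem:findrun} applies. Summing over the $c=O(1)$ subintervals gives total quantum time $\tilde O(\sqrt{\tau})$, as required.
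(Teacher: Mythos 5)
Your proposal is correct and follows essentially the same route as the paper: cover $[i\dd i+2\tau)$ by $O(1)$ windows of length $\Theta(\tau)$ so that every relevant length-$\tau$ context $T[j\dd j+\tau)$ lies in some window, apply \cref{lem:findrun} to each, and use \cref{lem:structureQ} to read the answer off as $O(1)$ intervals. The paper uses overlapping blocks of length $\lfloor 4\tau/3\rfloor$ spaced $\lfloor\tau/6\rfloor$ apart rather than your partition of starting positions, but this is only a bookkeeping difference; your justification of the identity $I_k\cap\sQ=[\max(a,l)\dd\min(b-\tau+1,r)]$ via the uniqueness clause of \cref{lem:findrun} is sound and in fact more explicit than the paper's.
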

\begin{proof}
By \cref{lem:structureQ}, the set $[i\dd i+2\tau)\cap \sQ$ can be decomposed into disjoint union of $O(1)$ many intervals. To find them, we separately consider overlapping blocks of length $\lfloor 4\tau/3\rfloor$ inside $[i\dd i+2\tau)$, with every adjacent two blocks being at a distance of $\lfloor \tau/6 \rfloor$.
    There are $O(1)$ many such blocks, and for every $j\in [i\dd i+2\tau)\cap \sQ$ there is some block that contains $[j\dd j+\tau)$. Hence, we can apply \cref{lem:findrun} to each block, and find all $j$'s (represented as an interval) such that $T[j\dd j+\tau)$ is contained in this block and has period at most $\tau/3$.
\end{proof}

The main technical component is an algorithm that efficiently finds the minimal $\im(k)$ over an interval, stated in the following lemma.

\begin{lem}[Finding range minimum $\im(k)$]
    \label{lem:rangemin}
Given $[l \dd r] \subseteq [1\dd n-\tau+1]$ of length $r-l+1\le \tau/4$, we can compute 
\[\arg \min_{k\in [l\dd r]}\im(k),\] 
represented as an arithmetic progression, in 
\begin{equation}
    \label{eqn:final-bound}
 \sqrt{\tau} \cdot (O(\log \tau))^{L} \cdot \polylog(n)\cdot  \sum_{ \ell=1}^{L-1} \sqrt{\tau_{\ell+1}/\tau_{\ell}}
\end{equation}
 quantum time.
\end{lem}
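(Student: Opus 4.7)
The plan is to first dispose of highly-periodic positions, then iteratively refine a candidate set across the $L$ levels of $\phi$. Using \cref{lem:findq}, I identify $[l\dd r]\cap \sQ$ in $\tilde O(\sqrt{\tau})$ time as a union of $O(1)$ intervals (by \cref{lem:structureQ}). Since positions in $\sQ$ are excluded from the minimum by the definition \eqref{eqn:defnS} of the synchronizing set, it suffices to run the range-min on each of the $O(1)$ sub-intervals of $[l\dd r]\setminus \sQ$ separately and combine. I henceforth assume $[l\dd r]\cap\sQ = \emptyset$.

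For the main case, define $C_\ell\subseteq [l\dd r]$ to be the set of positions attaining $\min_{k\in [l\dd r]}\Phi(k)[1\dd 2\ell-1]$, with $C_0 = [l\dd r]$ and $C_L$ being the target output. I maintain each $C_\ell$ as a union of $O(1)$ arithmetic progressions. To go from $C_\ell$ to $C_{\ell+1}$, first run quantum minimum finding over the elements of $C_\ell$ using the comparison key $k\mapsto \Phi(k)[1\dd 2(\ell+1)-1]$, which by \cref{obs:algo} is computable in $\tilde O(\sqrt{\tau_{\ell+1}})$ time per evaluation. This returns one minimizer $k^\star$, and by injectivity of the min-wise hashes in \cref{defn:M}, $C_{\ell+1}$ is the subset of $k\in C_\ell$ whose $\psi[1\dd 2(\ell+1)-1]$ value agrees with that of $k^\star$. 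Using \cref{lem:vishkin} and \cref{lm:period-patterns}, this subset corresponds to occurrences of $T[k^\star\dd k^\star+\tau_{\ell+1})$ within $C_\ell$, forming an explicit AP whose period and endpoints can be extracted in $\tilde O(\sqrt{\tau_{\ell+1}})$ quantum time via \cref{cor:per}, \cref{obs:lcp}, and binary search combined with Grover search. Provided $|C_\ell| = \tilde O(\tau/\tau_\ell)$, the cost of step $\ell\to\ell+1$ is $\tilde O(\sqrt{|C_\ell|\cdot \tau_{\ell+1}}) = \tilde O(\sqrt{\tau\cdot \tau_{\ell+1}/\tau_\ell})$, and summing over $\ell=0,\dots,L-1$ matches the bound \eqref{eqn:final-bound}.

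The main obstacle is establishing the size bound $|C_\ell|\le \tilde O(\tau/\tau_\ell)$ together with the AP representation. Elements of $C_\ell$ must share the same value of $\psi[1\dd 2\ell-1]$, which by \cref{lem:vishkin} applied to the $p_\ell$ component forces pairs at distance $\le \tau_\ell/4$ to share a common length-$\tau_\ell$ substring $S_\ell$. When the period $p=\per(S_\ell)$ satisfies $p\ge \tau_\ell/3$, \cref{lm:period-patterns} immediately yields the AP structure and the length bound $|C_\ell|\le (r-l+1)/p = O(\tau/\tau_\ell)$. The delicate regime is $p<\tau_\ell/3$, where $S_\ell$ is periodic at the local scale even though $[l\dd r]\cap\sQ=\emptyset$ keeps the length-$\tau$ windows non-periodic. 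Here the $\rho_\ell$ component of $\psi$ becomes essential: candidates in $C_\ell$ must all agree on $\rho_\ell$, which records how far the period of the length-$\tau_{\ell-1}$ prefix extends within the length-$\tau_\ell$ context, and combining this with the fact that the period must break before reaching length $\tau$ restricts $C_\ell$ to $\tilde O(\tau/\tau_\ell)$ positions, expressible as $O(1)$ APs. Propagating this case analysis across the $L$ layers produces the $(O(\log\tau))^L$ polylog overhead in \eqref{eqn:final-bound}.
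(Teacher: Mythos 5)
Your iterative-refinement plan has a genuine gap at its core claim: that the global minimizer set $C_\ell=\arg\min_{k\in[l\dd r]}\Phi(k)[1\dd 2\ell-1]$ is a union of $O(1)$ arithmetic progressions of size $\tilde O(\tau/\tau_\ell)$. \cref{lem:vishkin} only forces two positions with equal $\pi_\sigma$-values to carry \emph{equal} length-$\tau_\ell$ substrings when they are within distance $\tau_\ell/4$ of each other; since your $C_\ell$ lives on the full interval of length up to $\tau/4\gg\tau_\ell/4$, far-apart elements of $C_\ell$ may correspond to \emph{different} strings that merely share a $\pi$-value, so $C_\ell$ can split into as many as $\Theta(\tau/\tau_\ell)$ separate clusters, each its own AP, and you have no way to index into this set for quantum minimum finding without $\Omega(\tau/\tau_\ell)$ work just to locate the clusters. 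Worse, the size bound itself fails in the locally periodic regime: take $T$ containing a run $(ab)^{\tau/3}$ followed by aperiodic content, so that every length-$\tau$ window still has period $>\tau/3$ (hence $[l\dd r]\cap\sQ=\emptyset$), yet for $\tau_\ell\ll\tau$ there are $\Theta(\tau)$ positions $k$ with identical $T[k\dd k+\tau_\ell)=(ab)^{\tau_\ell/2}$ and identical $\rho_\ell=\tau_\ell$ (the period extends through the entire length-$\tau_\ell$ window for all of them, so the $\rho_\ell$ component does not separate them). If the hash makes this the minimal value, $|C_\ell|=\Theta(\tau)$, and your cost $\sqrt{|C_\ell|\cdot\tau_{\ell+1}}$ for the next refinement step blows past the target bound. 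The sentence ``combining this with the fact that the period must break before reaching length $\tau$ restricts $C_\ell$ to $\tilde O(\tau/\tau_\ell)$ positions'' is exactly the step that does not go through.

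The paper avoids this by organizing the computation as a block recursion rather than a global refinement: the level-$\ell$ routine partitions its input into blocks of length $\tau_{\ell-1}/4$, calls the level-$(\ell-1)$ routine on each block (so \cref{lem:vishkin} and \cref{lm:period-patterns} always apply and each block's minimizer set genuinely is a single AP), breaks ties inside one AP locally using the $\rho_\ell$ values and a Grover search over its $d\le\tau_{\ell-1}/4$ elements, and then runs quantum minimum finding over the $\tau_\ell/\tau_{\ell-1}$ block representatives. Large periodic clusters are thus confined to, and resolved within, a single block, which is what makes the recurrence $\cT_\ell\le\sqrt{\tau_\ell/\tau_{\ell-1}}\,(\cT_{\ell-1}\cdot O(\log\tau)+\sqrt{\tau_\ell}\,\polylog(n))$ and hence \eqref{eqn:final-bound} go through. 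To repair your argument you would essentially have to localize each refinement step to windows of length $\tau_\ell/4$, at which point you have reconstructed the paper's recursion. (Your preliminary removal of $\sQ$ is harmless but unnecessary here; that bookkeeping is done in \cref{lem:report}, not in \cref{lem:rangemin}.)
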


Before proving \cref{lem:rangemin}, we show that it can be used to report all the synchronizing positions in a length-$\tau$ interval, one at a time.

\begin{lem}[Efficient reporting]
    \label{lem:report}
    Given $i\in [1\dd n-3\tau+2]$, the elements in $\sA\cap [i\dd i+\tau)$ can be reported in $O((\mathsf{cnt}+1)\cdot \cT)$ time, where $\mathsf{cnt}=|\sA\cap [i\dd i+\tau)|$ is the output count, and $\cT$ is the time bound \eqref{eqn:final-bound} in \cref{lem:rangemin}.
\end{lem}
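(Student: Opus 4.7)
The plan is to use a divide-and-conquer procedure driven by \cref{lem:rangemin}. First I would invoke \cref{lem:findq} to compute $[i\dd i+2\tau)\cap \sQ$ as $O(1)$ disjoint intervals in $\tilde O(\sqrt\tau)$ time, which decomposes $[i\dd i+\tau)$ into $O(1)$ maximal $\sQ$-free sub-intervals. Since \cref{myclaim} shows $\sA\cap \sQ=\emptyset$, it suffices to find $\sA$-elements inside each such sub-interval $I=[a\dd b]\subseteq [i\dd i+\tau)$ and combine the results; I focus on one fixed $I$ in what follows. Note also that $I$ has length $\le \tau$, so for every $k\in I$ at most one of $k,k+\tau$ lies in $I$, which simplifies several subsequent case distinctions.

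On each sub-interval $[l\dd r]\subseteq I$, I would run a recursive procedure $\textsc{Report}(l,r)$. The first step is to compute
\[
 k^* := \arg\min\bigl\{\Phi(j) : j\in[l\dd r+\tau]\setminus\sQ\bigr\}
\]
by splitting $[l\dd r+\tau]$ around the pre-located $\sQ$-intervals and further into $O(1)$ blocks of length $\le \tau/4$, invoking \cref{lem:rangemin} on each and combining the results (breaking ties to the leftmost position); this step costs $O(\cT)$. \textbf{Case A:} if $k^*\in [l\dd r]$, then $\Phi(k^*)$ is automatically the minimum over $[k^*\dd k^*+\tau]\setminus\sQ\subseteq [l\dd r+\tau]\setminus\sQ$, so $k^*\in\sA$ as a type-L element; output it and recurse on $[l\dd k^*-\tau-1]$ and $[k^*+1\dd r]$. \textbf{Case B:} otherwise $k^*\in (r\dd r+\tau]$, and $\Phi(k^*)$ is the minimum over $[k^*-\tau\dd k^*]\setminus\sQ$; in sub-case B1 ($k^*-\tau\in [l\dd r]$) output $k^*-\tau\in\sA$ as a type-R element and recurse on $[l\dd k^*-\tau-1]$, while in sub-case B2 ($k^*-\tau<l$) recurse on $[l\dd k^*-\tau-1]$ with no output. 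Correctness will follow from a minimality argument essentially identical to the one used in the proof of \cref{lem:spar}: for any $k\in \sA\cap I$ with giver $g\in\{k,k+\tau\}$, the identity $\Phi(g)=\min\{\Phi(j):j\in[k\dd k+\tau]\setminus\sQ\}$ forces the argmin $k^*$ of any recursive call containing $k$ to either equal $g$ (so $k$ is reported, as type-L in case A when $g=k$ or as type-R in case B1 when $g=k+\tau$) or to satisfy $k^*<k$ or $k^*>k+\tau$, which sends $k$ into exactly one of the recursive sub-intervals.

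The heart of the analysis is counting nodes in the recursion tree, and the main obstacle will be bounding the case-B2 nodes, since each costs $\Theta(\cT)$ but produces no output. The key observation is that every case-B2 node's single child is an empty leaf: sub-case B2 is defined by $k^*-\tau<l$, which makes $[l\dd k^*-\tau-1]$ empty. Letting $a,b_1,b_2$ count case-A, B1, B2 nodes and $\ell$ count empty-interval leaves, the parent-child identity $\sum_v \deg_{\mathrm{out}}(v)=(\text{nodes}-1)$ gives $2a+b_1+b_2 = a+b_1+b_2+\ell-1$, i.e.\ $\ell = a+1$; since each B2 contributes one distinct empty-leaf child, $b_2\le\ell=a+1$. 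Therefore the total number of recursion nodes is $a+b_1+b_2+\ell \le 3a+b_1+2 = O(\mathsf{cnt}_I+1)$, where $\mathsf{cnt}_I=a+b_1$ counts the $\sA$-elements reported in $I$. Each node performs $O(\cT)$ work via $O(1)$ \cref{lem:rangemin} invocations, and summing over the $O(1)$ sub-intervals $I$ yields the claimed $O((\mathsf{cnt}+1)\cdot\cT)$ total runtime.
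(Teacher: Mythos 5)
Your reduction to $\sQ$-free sub-intervals of $[i\dd i+\tau)$ rests on the assertion that \cref{myclaim} shows $\sA\cap\sQ=\emptyset$, and this is where the argument breaks. \cref{myclaim} only covers positions $k$ for which the minimum of $\{\im(j):j\in[k\dd k+\tau]\setminus\sQ\}$ is attained at $j=k$ itself (condition \eqref{eqn:tobound}); it says nothing about positions $k$ that enter $\sA$ because the minimum is attained at $j=k+\tau$. Such a $k$ can perfectly well satisfy $\per(T[k\dd k+\tau))\le\tau/3$, i.e.\ $k\in\sQ$ (for instance, $k$ at the start of a long run whose periodicity breaks before position $k+\tau$, with the hash minimum of $[k\dd k+\tau]\setminus\sQ$ landing on $k+\tau$). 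Since your recursion only ever visits positions inside the $\sQ$-free pieces $I\subseteq[i\dd i+\tau)$, these elements of $\sA$ are never candidates and are silently dropped. The paper's proof sidesteps exactly this: it searches for the \emph{minimizing} positions, namely $k$ with \eqref{eqn:toreport} and $k'\in[i+\tau\dd i+2\tau)$ with $\min\{\im(j):j\in[k'-\tau\dd k']\setminus\sQ\}=\im(k')$, outputting $k'-\tau$ in the second case. The (symmetric) claim guarantees the minimizer $k'$ lies outside $\sQ$, but imposes no constraint on the reported element $k'-\tau$.

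There is a second, independent problem in your correctness dichotomy ``$k^*=g$, or $k^*<k$, or $k^*>k+\tau$.'' The argmin returned by \cref{lem:rangemin} can be a nontrivial arithmetic progression, and more generally $\Phi$ can collide on two positions of $[l\dd r+\tau]\setminus\sQ$ at distance greater than $\tau/4$ (\cref{lem:vishkin} only forbids collisions between heavily overlapping substrings). So for $k\in\sA$ with giver $g=k+\tau$, the leftmost global minimizer $k^*$ may satisfy $\Phi(k^*)=\Phi(k+\tau)$ with $k^*$ strictly inside $(k\dd k+\tau)$; your Case A then reports $k^*$ and recurses only on $[k^*+1\dd r]$, losing $k$. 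The paper's iterative sweep is immune to both issues because each step tests a \emph{value} identity ($\im(k)$ equal to the window minimum) rather than a positional one, and advances $l\gets k+1$ only past positions whose $\Phi$-value is strictly larger than the reported minimum. Your recursion-tree count ($b_2\le a+1$, hence $O(\mathsf{cnt}+1)$ nodes) is fine as combinatorics, but the algorithm it counts is not correct as stated; I would either adopt the paper's giver-centric sweep or rework your recursion so that candidates are the minimizing positions in $[i\dd i+2\tau)\setminus\sQ$ and ties are resolved by value comparison.
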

\begin{proof}
To report $\sA\cap [i\dd i+\tau)$, it suffices to report all $k\in [i\dd i+\tau)$ such that 
   \begin{equation}
     \min \{\im(j): j\in [k\dd k+\tau]\setminus \sQ \} = \im(k),  
     \label{eqn:toreport}
   \end{equation}
    and all $k'\in [i+\tau\dd i+2\tau)$ (where we replaced $k=k'-\tau$) such that 
    \[ \min \{\im(j): j\in [k'-\tau\dd k']\setminus \sQ \} = \im(k').  \]
We will only show the first part. The second part is symmetric and can be solved similarly.

By \cref{lem:findq}, the set $[i\dd i+2\tau)\setminus \sQ$ can be decomposed into disjoint union of $O(1)$ many intervals, which can be found in $\tilde O(\sqrt{\tau})$ quantum time.
Then, we can  use \cref{lem:rangemin} to answer $\arg \min \{ \im(j): j\in [l\dd r] \setminus \sQ\}$ given any $[l\dd r] \subseteq [i\dd i+2\tau)$, in $O(\mathcal{T})$ time.

By \cref{myclaim} (note that \eqref{eqn:toreport} is identical to \eqref{eqn:tobound}), for all $k\in [i\dd i+\tau)$ satisfying \eqref{eqn:toreport} it holds that $k \notin \sQ$.
To report all $k\in [i\dd i+\tau)
 $ 
  satisfying \eqref{eqn:toreport}, we follow the pseudocode in Algorithm~\ref{alg:report}.

    \begin{algorithm2e}
\caption{Reporting  all $k\in [i\dd i+\tau)$ satisfying \eqref{eqn:toreport}}
\label{alg:report}
\Begin{
Let $l \gets i$\;
\While{$l<i+\tau$}{
        Let $k$  be the minimum element in  $\arg \min \{ \im(j): j\in [l\dd i+\tau) \setminus \sQ\}$\;
        \If {$k$ satisfies \eqref{eqn:toreport}}{
            report $k$\;
            $l \gets k+1$\;
        }\Else{
            \textbf{break}
        }
}
}
\end{algorithm2e} 
It is clear that the running time of Algorithm~\ref{alg:report} is $O((\mathsf{cnt}+1)\cdot \cT)$, where $\mathsf{cnt}$ is the number of reported elements.
It suffices to show that Algorithm~\ref{alg:report} does not miss any $k$ satisfying \eqref{eqn:toreport}. We maintain the invariant that all not yet reported $k$ satisfying \eqref{eqn:toreport} are at least $l$. Whenever we report $k$, it holds that $\im(k')>\im(k)$ for all $k'\in [l\dd k)$, so none of such $k'$ can satisfy \eqref{eqn:toreport}, and it is safe to skip them by letting $l\gets k+1$. When the line of \textbf{break} is reached, we have $k\in \arg \min \{ \im(j): j\in [l\dd i+\tau) \setminus \sQ\}$  but $k\notin  \arg \min \{\im(j): j\in [k\dd k+\tau) \setminus \sQ\}$, implying that there exists $r \in [i+\tau\dd k+\tau)$ such that $\im(r)<\im(k')$ for all $k'\in [k\dd i+\tau)$, so none of $k'\in [k\dd i+\tau)$ can satisfy \eqref{eqn:toreport}, and we can terminate the procedure.
\end{proof}

Now it remains to describe the algorithm in \cref{lem:rangemin} for finding the minimum $\im(k)$ in an interval. Our algorithm is recursive, stated as follows.
\begin{lem}
    \label{lem:rangemin-recurs}
    For $1\le \ell\le L$, there is a quantum algorithm running in $\cT_\ell$ time that, given $[l \dd r] \subseteq [1\dd n-\tau_\ell+1]$ of length $r-l+1\le \tau_\ell/4$, computes 
\[\arg \min_{k\in [l\dd r]}\im(k)[1\dd 2\ell-1]\] 
represented as an arithmetic progression. The time complexities satisfy
\begin{equation}
 \cT_\ell \le \sqrt{\tau_\ell/\tau_{\ell-1}}\cdot \big (\cT_{\ell-1}\cdot O(\log \tau)+\sqrt{\tau_\ell}\cdot  \polylog(n) \big )
 \label{eqn:bound}
\end{equation}
for $\ell >1$, and $\cT_1=\polylog(n)$.
\end{lem}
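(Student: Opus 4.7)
The plan is to prove \cref{lem:rangemin-recurs} by induction on $\ell$, using a tournament-tree style algorithm driven by quantum minimum finding, together with the arithmetic progression (AP) structure of argmin sets that emerges from Vishkin's distinguishing property (\cref{lem:vishkin}) and the periodicity lemma (\cref{lm:period-patterns}). The base case $\ell=1$ is trivial since $[l\dd r]$ has length $<2$, so the interval is the singleton $\{l\}$ and $\Phi(l)[1]$ is computed directly via \cref{obs:algo} in $\polylog(n)$ time.

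For the inductive step $\ell \ge 2$, I partition $[l\dd r]$ into $s = O(\tau_\ell/\tau_{\ell-1})$ chunks $C_1,\dots,C_s$ of length at most $\tau_{\ell-1}/4$ and design a sub-routine $\textsf{ChunkMin}(C_i)$ that returns $v_i := \min_{k \in C_i} \Phi(k)[1\dd 2\ell-1]$ together with a compact description of the corresponding argmin. $\textsf{ChunkMin}$ first recurses at level $\ell-1$ to obtain the AP $A_i = \arg\min_{k \in C_i} \Phi(k)[1\dd 2\ell-3]$ (cost $\cT_{\ell-1}$). By \cref{lem:vishkin}, every element of $A_i$ shares the same $T[k\dd k+\tau_{\ell-1})$, so $A_i$ has common difference equal to the period $p$ of this shared substring, and $\Phi(k)[1\dd 2\ell-3]$ is constant on $A_i$. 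The task then reduces to computing $\min_{k \in A_i} \Phi(k)[2\ell-2\dd 2\ell-1]$.

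The key structural step is to carry out this inner minimization without touching every element of $A_i$. I compute $p$ and the right endpoint $r^*$ of the maximal period-$p$ run containing $A_i$ in $\tilde O(\sqrt{\tau_\ell})$ time using \cref{cor:per} and a Grover-assisted binary search, which partitions $A_i$ into a ``Group 1'' $\{k \in A_i : k+\tau_\ell-1 \le r^*\}$, on which $T[k\dd k+\tau_\ell)$ is literally constant and hence $\rho_\ell$, $\pi_\sigma$, and $\Phi(k)[2\ell-2\dd 2\ell-1]$ all agree, and a ``Group 2'' on which $\rho_\ell(T[k\dd k+\tau_\ell)) = r^*-k+1$ takes pairwise distinct values, so by injectivity of $h_{2\ell-2}$ the values of $\Phi(k)[2\ell-2]$ are pairwise distinct there. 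The minimum over $A_i$ is therefore either the shared Group 1 value (returned as an AP) or the unique Group 2 winner (returned as a singleton). I locate the Group 2 winner by quantum minimum finding over $|A_i| \le \tau_{\ell-1}/4+1$ cheap hash evaluations in $\tilde O(\sqrt{\tau_{\ell-1}})$ time, then evaluate its full key with one extra $\pi_\sigma$ call of cost $\tilde O(\sqrt{\tau_\ell})$. Altogether $\textsf{ChunkMin}$ runs in $\cT_{\ell-1} + \tilde O(\sqrt{\tau_\ell})$ time.

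Running quantum minimum finding across the $s$ chunks with comparison cost $\cT_{\ell-1} + \tilde O(\sqrt{\tau_\ell})$ identifies the global minimum $v^*$ in $\tilde O(\sqrt{s} \cdot (\cT_{\ell-1}+\sqrt{\tau_\ell}))$ time. Since every element of the global argmin $A$ shares $T[k\dd k+\tau_\ell)$ and $A$ lies in a window of length at most $5\tau_\ell/4$, \cref{lm:period-patterns} applies and $A$ is itself an AP; I locate its leftmost and rightmost elements with two further Grover searches over chunks of the same cost and compute its common difference in $\tilde O(\sqrt{\tau_\ell})$ via \cref{cor:per}. Summing yields the claimed recurrence $\cT_\ell \le \sqrt{\tau_\ell/\tau_{\ell-1}} \cdot (\cT_{\ell-1} \cdot O(\log \tau) + \sqrt{\tau_\ell} \cdot \polylog(n))$. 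The main obstacle is precisely the inner minimization on $A_i$: naive iteration would cost $|A_i| \cdot \sqrt{\tau_\ell}$, which is prohibitive when $p$ is tiny; the Group 1 / Group 2 split, enabled by the specific shape of $\rho_\ell$ in \cref{defn:mapG}, is what keeps the cost at $\tilde O(\sqrt{\tau_\ell})$.
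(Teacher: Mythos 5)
Your proposal is correct and follows essentially the same route as the paper's proof: a block decomposition into length-$\tau_{\ell-1}/4$ chunks, recursion plus quantum minimum finding across chunks, and tie-breaking within a chunk's arithmetic-progression argmin via $\rho_\ell$, where your ``Group 1 / Group 2'' split (all $\rho_\ell$ values equal to $\tau_\ell$ give identical substrings, the rest are pairwise distinct) is exactly the paper's observation. Your write-up is slightly more explicit about assembling the final arithmetic progression and should just make sure the period extension is capped at distance $O(\tau_\ell)$ and that the $O(\log\tau)$ factor is attributed to boosting the recursive calls' success probability before they are used inside minimum finding.
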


\begin{proof}
    We first explain why the minimizers $k$ form an arithmetic progression. For any $l\le k_1<k_2\le r$ with $\Phi(k_1)[1\dd 2\ell-1]=\Phi(k_2)[1\dd 2\ell-1]$, we have $\pi_{\sigma_{\mathsf{DS}}} (T[k_1\dd k_1+\tau_\ell)) = \pi_{\sigma_{\mathsf{DS}}} (T[k_2\dd k_2+\tau_\ell))$, and from \cref{lem:vishkin} we have $T[k_1\dd k_1+\tau_\ell) =T[k_2\dd k_2+\tau_\ell)$ since $k_2-k_1\le \tau_\ell/4$. Hence, the minimizers $k$ in this length-$\tau_\ell/4$ interval correspond to the starting positions  of all occurrences of a certain length-$\tau_\ell$ substring, which form an arithmetic progression by \cref{lm:period-patterns}.

Due to lexicographical comparison rule, we know that any element in $\arg\min_{k\in [l\dd r]} \Phi(k) [1\dd 2\ell-1]$ must also be in $\arg\min_{k\in [l\dd r]} \Phi(k) [1\dd 2(\ell-1)-1]$, so we can recursively use the $(\ell-1)$-st level subroutine to find candidate minimizers. We first consider the simpler  case where each $(\ell-1)$-st level subroutine only returns a single minimizer.
  Given the input interval $[l\dd r]$ of length $r-l+1\le \tau_\ell/4$, we divide $[l\dd r]$ into blocks each of length $\tau_{\ell-1}/4$, on which we recursively apply the $(\ell-1)$-st level subroutine (boosted to $1-1/\poly(\tau)$ success probability by $O(\log \tau)$ repetitions) to find the minimizer $k'$ in this block, and then use $\sqrt{\tau_\ell}\cdot \polylog(n)$ time to compute $\Phi(k')[1\dd 2\ell-1]$.
  Out of the $\tau_\ell/\tau_{\ell-1}$ blocks, we apply quantum minimum finding to find the block that has the minimum $\Phi(k')[1\dd 2\ell-1]$ (\cref{obs:algo}). This quantum minimum finding procedure incurs $O(\sqrt{\tau_\ell/\tau_{\ell-1}})$ comparisons, and results in the time bound \eqref{eqn:bound}.

  Now we consider the general case, where for a $(\ell-1)$-st level block $[l'\dd r']$, $\arg\min_{k\in [l'\dd r']} \Phi(k) $ $[1\dd 2(\ell-1)-1]$ may contain multiple elements $k_1<k_2<\dots<k_d$ forming an arithmetic progression, where $k_d-k_1\le \tau_{\ell-1}/4$.
  We already know from \cref{lm:period-patterns} that $T[k_j\dd k_j+\tau_{\ell-1})$ are identical for all $j\in [d]$,  and $p=k_2-k_1 = \dots = k_d - k_{d-1} = \per(T[k_1\dd k_d+\tau_{\ell-1}))$.
   To find the minimum $\Phi(k)[1\dd 2\ell-1]$ among $k\in \{k_1,\dots,k_d\}$, we will additionally compare $\Phi(k)[2(\ell-1)] = h_{2(\ell-1)}\big (\rho_{\ell}(T[k\dd k+\tau_{\ell}))\big)$ to break ties.

   Now, let $r' = \max \{r' : r'\le k_d+\tau_{\ell}, \per(T[k_1\dd r') ) = p \}$, namely the furthest point this period can extend to (up to a distance of the current level interval length $\tau_{\ell}$). 
Now we can easily obtain their $\rho_{\ell}$ values,
       \[
           \rho_{\ell} (T[k_j\dd k_j+\tau_{\ell}) ) =  \min \{ \tau_{\ell}, r' - k_j \}.
           \]
          Then, among strings $T[k_j\dd k_j+\tau_{\ell})$, $j\in [d]$, those with $\rho_{\ell}$ value equal to $\tau_{\ell}$ are all identical to $T[k_1\dd k_1+\tau_{\ell})$, and the remaining ones must have different $\rho_{\ell}$ values that are strictly smaller than $\tau_{\ell}$. Hence, we can simply use Grover search over $j\in [d]$ in $\tilde O(\sqrt{d})\le \tilde O(\sqrt{\tau_{\ell-1}})$ time to find the minimum
          $\Phi(k_j)[2(\ell-1)] = h_{2(\ell-1)}\big (\rho_{\ell}(T[k_j\dd k_j+\tau_{\ell}))\big)$.
\end{proof}

\begin{proof}[Proof of \cref{lem:rangemin} assuming \cref{lem:rangemin-recurs}]
   Note that the algorithm in \cref{lem:rangemin-recurs} with $\ell=L$ solves the task required by \cref{lem:rangemin}. The quantum time complexity $\cT_L$ can be calculated by expanding \eqref{eqn:bound} into 
   \begin{align*}
    \cT_L  & \le \sqrt{\tau_L} \sqrt{\frac{\tau_L}{\tau_{L-1}}}\polylog(n) + \sqrt{\frac{\tau_L}{\tau_{L-1}}} \cT_{L-1}\cdot O(\log \tau)\\
    & \le \sqrt{\tau_L} \sqrt{\frac{\tau_L}{\tau_{L-1}}}\polylog(n) + \sqrt{\tau_L}\sqrt{\frac{\tau_{L-1}}{\tau_{L-2}}}\polylog(n)\cdot O(\log \tau) +  \sqrt{\frac{\tau_L}{\tau_{L-2}}}\cT_{L-2}(O(\log \tau))^2\\
    & \le \cdots\\
    & \le \sqrt{\tau_L} \sum_{\ell=1}^{L-1} \sqrt{\frac{\tau_{\ell+1}}{\tau_{\ell}}}\polylog(n)\cdot (O(\log \tau))^{L-\ell-1}. 
   \qedhere 
   \end{align*}
\end{proof}

Finally, we prove the main theorem \cref{thm:quantum-sync} by setting the length parameters $\tau_1,\dots,\tau_L$ appropriately.

\begin{proof}[Proof of \cref{thm:quantum-sync}]
     We set $L = \Theta(\sqrt{\log \tau})$, and pick $6= \tau_1<\tau_2<\dots<\tau_L=\tau$ so that $\tau_{\ell+1}/\tau_{\ell} \le O(\tau^{1/L})$ for all $1\le \ell <L$. Then, by \eqref{eqn:final-bound}, the quantum time complexity of the range minimum subroutine (\cref{lem:rangemin}) is at most
    \begin{align*}
         & \sqrt{\tau} \cdot (O(\log \tau))^{L} \cdot \polylog(n)\cdot L\cdot  O(\tau^{1/(2L)}) \\
         \le \ &  \sqrt{\tau}\cdot \polylog(n) \cdot  (O(\log \tau))^{O(\sqrt{\log \tau})} \\
         \le \ & \tau^{1/2+o(1)}\cdot \polylog(n). 
        \label{eqn:actual-bound}
    \end{align*}
    Recall that with high probability $\sigma_{\mathsf{DS}}\in \{0,1\}^{O(\log n)}$ succeeds for all substrings $S$ of $T$.
        The rest of the proof follows from \cref{lem:kk}, \cref{lem:spar}, and \cref{lem:report}.
\end{proof}

\section{Longest Common Substring}

\label{sec:lcs}
In the Longest Common Substring problem with threshold $d$, we are given two strings $S_1,S_2 \in \Sigma^n$, and need to decide whether $S_1$ and $S_2$ have a common substring $S_1[i\dd i+d) = S_2[j\dd j+d)$ for some $i \in [n-d+1], j \in [n-d+1]$.

In \cref{sec:lcsub} we first show that, by assembling our quantum string synchronizing set with known techniques, we can obtain improved quantum time complexity for LCS with threshold $d$.
Then, in \cref{sec:lcslb} we apply a known result in quantum query complexity to show a lower bound for LCS with threshold $d$ that matches the upper bound up to $n^{o(1)}$ factors.

\subsection{Improved Upper Bound}
\label{sec:lcsub}

Given input strings $S_1,S_2$, define $T=S_1\$S_2$ as their concatenation separated by a delimiter symbol, and let $|T|=N$.
 Assume the threshold length $d$ satisfies $d\ge 100$ (otherwise one can simply apply the algorithm in \cite{legall}). 

Akmal and Jin \cite{aj22} solved the LCS with threshold $d$ problem using the anchoring technique 
\cite{DBLP:conf/cpm/StarikovskayaV13,DBLP:conf/cpm/Charalampopoulos18,DBLP:conf/esa/AmirCPR19,DBLP:journals/algorithmica/AmirCPR20,DBLP:conf/cpm/Nun0KK20,icalp20,lcs2021,aj22}. In this technique, one constructs a set of anchors, and only focuses on finding anchored common substrings, defined as follows.

\begin{definition}[Anchor sets]
	\label[definition]{defn:good-anchor}
For  $T = S_1\$ S_2$ of length $|T|=N$ and  subset $\sC \subseteq [1\dd N]$, a common substring $S_1[i_1\dd i_1+d)=S_2[i_2\dd i_2+d)$ is said to be \emph{anchored} by $\sC$, if there exists a shift $h\in [0\dd d)$ such that $i_1+h,  |S_1|+1+i_2+h \in \sC$. 

Given $T = S_1\$ S_2$ and threshold length $d$, we say $\sC\subseteq [1\dd N]$ is an \emph{anchor set} if the following holds: if $\mathrm{LCS}(S_1,S_2) \ge d$, then $S_1$ and $S_2$ must have a length-$d$ common substring anchored by $\sC$.
\end{definition}

The following is the main technical result of Akmal and Jin \cite{aj22}, obtained using quantum walk search on Johnson graphs \cite{DBLP:journals/siamcomp/Ambainis07,mnrs}. It shows that a strongly explicitly computable anchor set of small size can be used to solve the LCS with threshold $d$ problem.
\begin{them}[\cite{aj22}]
	\label{thm:aj}
	Given input strings $S_1,S_2$ and threshold $d$, suppose there is a
	$\mathcal{T}$-time quantum algorithm that outputs $C(j) \in [N]$ given any $1\le j\le m$, such that  $\{C(1),C(2),\dots,C(m)\}$  is an anchor set.

Then, one can decide whether $\mathrm{LCS}(S_1,S_2)\ge  d$, in 
	\[ \tilde O(m^{2/3} \cdot (\sqrt{d} + \mathcal{T}))\]
    quantum time.
\end{them}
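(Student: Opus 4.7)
My plan is to invoke the MNRS quantum walk framework on the Johnson graph $J(m,r)$, whose vertices are size-$r$ subsets $R\subseteq[m]$, with $r$ to be optimized at the end. At each vertex $R$ the algorithm stores the actual anchor positions $\{C(j):j\in R\}$ (each obtained by invoking the given oracle in time $\mathcal{T}$) together with a compact per-anchor signature of the length-$d$ context around $C(j)$ in $T=S_1\$S_2$ that allows detection and verification of length-$d$ matches in quantum time $\tilde O(\sqrt{d})$, e.g., via a hash-table keyed by a short prefix fingerprint and disambiguated by an $\tilde O(\sqrt{d})$-time LCE query (\cref{obs:lcp}). A vertex is declared \emph{marked} if the data stored at it witnesses an anchored length-$d$ common substring, i.e., it contains two anchors $j_1,j_2\in R$ with $C(j_1)\le|S_1|$ and $C(j_2)\ge|S_1|+2$ such that, for some shift $h\in[0,d)$ consistent with both being anchors, the induced length-$d$ windows in $S_1$ and $S_2$ agree.

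First I would verify correctness: since $\{C(1),\dots,C(m)\}$ is an anchor set (\cref{defn:good-anchor}), whenever $\mathrm{LCS}(S_1,S_2)\ge d$ there exist $j_1,j_2\in[m]$ witnessing an anchored match, so a uniformly random size-$r$ subset is marked with probability at least $\epsilon=\Omega((r/m)^2)$. The Johnson graph's spectral gap is $\delta=\Theta(1/r)$ (standard). The cost parameters for the walk are: setup $S=\tilde O(r(\mathcal{T}+\sqrt{d}))$ (load $r$ anchors and compute their signatures), update $U=\tilde O(\mathcal{T}+\sqrt{d})$ (swap one anchor in/out, updating the hash table), and check $C=\tilde O(\sqrt{d})$ (look up the newly added signature and perform an LCE verification). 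The MNRS formula then gives total cost
\[
S+\tfrac{1}{\sqrt{\epsilon}}\bigl(\tfrac{1}{\sqrt{\delta}}\,U+C\bigr)
\;=\;\tilde O\!\left(r(\mathcal{T}+\sqrt{d})+\tfrac{m}{r}\bigl(\sqrt{r}\,(\mathcal{T}+\sqrt{d})+\sqrt{d}\bigr)\right),
\]
which balances at $r=m^{2/3}$ to yield the claimed $\tilde O(m^{2/3}(\sqrt{d}+\mathcal{T}))$ bound.

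The main obstacle, as in all quantum-walk-over-data-structure arguments since Ambainis \cite{DBLP:journals/siamcomp/Ambainis07}, is \emph{history independence} of the stored data: the state associated with a vertex $R$ must depend only on $R$ (not on the order of operations that built it) so that diffusion on the coin register interferes correctly. Concretely, I would use a history-independent hash table keyed by short fingerprints of length-$d$ contexts, together with a mechanism that resolves fingerprint collisions via on-the-fly LCE queries so that equal-fingerprint anchors can be grouped without reading their full length-$d$ contexts (which would cost $\Omega(d)$ per anchor). A secondary concern is the \emph{check} step: in the worst case many anchors could share a fingerprint, so I would bound the collision cost in expectation via the randomness of the hash (standard) and amortize the per-check cost into the setup/update phases so that the overall check cost stays at $\tilde O(\sqrt{d})$. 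Once these two points are handled, the quantum walk bookkeeping is routine and the bound follows from the algebra above.
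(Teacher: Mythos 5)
The paper itself does not prove this theorem; it imports it verbatim from \cite{aj22} with only the remark that it is ``obtained using quantum walk search on Johnson graphs.'' Your skeleton matches that approach exactly: MNRS on $J(m,r)$, $\epsilon=\Theta((r/m)^2)$, $\delta=\Theta(1/r)$, setup/update/check costs of the stated form, and the balance at $r=m^{2/3}$ — the algebra is correct. So the question is whether the part you defer as ``routine bookkeeping'' actually goes through, and there I see a genuine gap.

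The marked-vertex test is not an exact-match detection that a fingerprint-keyed hash table can perform. By \cref{defn:good-anchor}, two anchors $a_1=i_1+h$ and $a_2=|S_1|+1+i_2+h$ witness a match iff \emph{some} shift $h\in[0\dd d)$, which varies from pair to pair, satisfies $T[a_1-h\dd a_1-h+d)=T[a_2-h\dd a_2-h+d)$; equivalently, the forward LCP of the suffixes at $a_1,a_2$ plus the backward LCP of the reversed prefixes ending there must total at least $d$. A table keyed by a signature of ``the length-$d$ context around $C(j)$'' at any fixed alignment detects only pairs agreeing on that one alignment and misses essentially all witness pairs. Moreover, any signature that is a nontrivial function of $d$ characters costs $\Omega(d)$ queries to compute, so it cannot fit in a $\tilde O(\mathcal{T}+\sqrt d)$ update; your parenthetical acknowledges this, but the proposed fix (group by fingerprint, disambiguate by LCE) presupposes fingerprints you cannot afford to compute. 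The actual resolution in \cite{aj22} is comparison-based: maintain the $r$ anchors in history-independent sorted orders under lexicographic comparison of their length-$d$ forward and backward contexts, where each comparison is one $\tilde O(\sqrt d)$ LCP computation (\cref{obs:lcp}), and detect a pair with forward-plus-backward LCP at least $d$ from those rankings; making such a structure history independent when the comparisons are themselves bounded-error quantum subroutines is a further nontrivial point. That construction is the real content of the theorem, and it is not supplied by your sketch.
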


Akmal and Jin designed an anchor set with $\tilde O(\sqrt{d})$ reporting time but  suboptimal size, based on the classical string synchronizing set of Kempa and Kociumaka \cite{kk19}. Here we will use our quantum string synchronizing set to  improve their construction. To do this, we will reuse a key lemma from \cite{lcs2021}.

For threshold length $d$, we set the parameter $\tau = \lfloor d/3\rfloor $.
Following \cite{lcs2021}, define a \emph{$\tau$-run} is a run (see \cref{sec:prelim1}) of length at least $3\tau-1$ with period at most $\frac{1}{3} \tau$. Consider the following set $\sB$ of positions in $T=S_1\$S_2$.
\begin{definition}[\cite{lcs2021}]
Define $\sB \subseteq [N]$ as follows: for all $\tau$-runs $T[l\dd r]$, let $P$ be the Lyndon root of $T[l\dd r]$, and $P=T[i^{(b)}\dd i^{(b)}+p)=T[i^{(e)}\dd i^{(e)}+p)$ be the first and last occurrences of $P$ in $T[l \dd r]$. Add $i^{(b)},i^{(b)}+p$, and $i^{(e)}$ into $\sB$.
\end{definition}

Then, \cite{lcs2021} showed the following lemma.
\begin{lem}[{\cite[Lemma 15]{lcs2021}}]
	\label{lem:esa}
Let $\sA$ be a $\tau$-string synchronizing set of $T =S_1\$ S_2$. Then, $\sA \cup \sB$ is an anchor set.

More specifically, let $S_1[i_1\dd i_1+d)=S_2[i_2\dd i_2+d)$ be a length-$d$ common substring with minimized $i_1+i_2$. Then, $S_1[i_1\dd i_1+d)=S_2[i_2\dd i_2+d)$ is anchored by $\sA \cup \sB$.
\end{lem}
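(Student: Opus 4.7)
\textbf{Proof plan for Lemma~\ref{lem:esa}.}

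Let $j_1 = i_1$ and $j_2 = |S_1|+1+i_2$ be the starting positions of the common substring in $T$, and set $W := T[j_1 \dd j_1+3\tau-2] = T[j_2 \dd j_2+3\tau-2]$ (these are equal because $d \ge 3\tau$ and the two length-$d$ substrings are identical). I will split on whether $W$ is highly periodic.

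\emph{Case 1: $\per(W) > \tau/3$.} Then the same nonperiodicity holds at both $j_1$ and $j_2$, so by the density property of $\sA$ there exists $a \in \sA \cap [j_1 \dd j_1+\tau)$. Writing $h = a - j_1 \in [0,\tau) \subset [0,d)$, the window $T[a \dd a+2\tau)$ lies inside the common substring (since $h + 2\tau \le 3\tau \le d$) and equals $T[j_2+h \dd j_2+h+2\tau)$; consistency of $\sA$ then forces $j_2+h \in \sA$. Thus the pair $(a, j_2+h)$ anchors the common substring via $\sA$.

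\emph{Case 2: $\per(W) \le \tau/3$.} Then $W$ extends on each side to a (unique) maximal $\tau$-run: $R_1 = T[l_1 \dd r_1] \supseteq [j_1 \dd j_1+3\tau-2]$ and $R_2 = T[l_2 \dd r_2] \supseteq [j_2 \dd j_2+3\tau-2]$. Both runs have the same Lyndon root $P$ of length $p \le \tau/3$, with matching phase inside the common substring; let $o_0 \in [0,p)$ be the common offset of the first $P$-occurrence relative to $j_1$ (equivalently $j_2$). The plan is to exhibit an anchor among the positions $\{i^{(b)}_k, i^{(b)}_k+p, i^{(e)}_k\}_{k=1,2} \subseteq \sB$.

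The crucial use of the minimality of $i_1+i_2$ is to rule out the symmetric bad configuration where both runs extend strictly to the left of the common substring. Indeed, if $l_1 < j_1$ and $l_2 < j_2$ (and the boundary condition $i_1, i_2 > p$ holds), then because both $R_1$ and $R_2$ realize the same phase of $P$ at position $j_1$ resp.\ $j_2$, the characters $S_1[i_1-p \dd i_1-1]$ and $S_2[i_2-p \dd i_2-1]$ agree pointwise (each character is determined by the Lyndon root at the corresponding phase). Shifting the common substring left by $p$ yields a valid length-$d$ match with strictly smaller $i_1+i_2$, contradiction. After dispensing with the boundary corner cases (where $\min(i_1,i_2) \le p$, handled directly since then the anchor $j_1$ or $j_2$ is essentially forced), we may assume $l_1 \ge j_1$, i.e., the $S_1$-side run begins inside the common substring.

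\emph{Completing the anchor.} With $l_1 \ge j_1$ we have $i^{(b)}_1 = j_1 + o_0 \in \sB$ inside the common substring. On the $S_2$ side, $i^{(b)}_2 \in [l_2 \dd l_2+p)$ and by the phase-matching argument, the $P$-occurrences in $R_2$ are exactly at $j_2+o_0+kp$ for the relevant integers $k$. If $l_2 \ge j_2$, the corresponding first occurrence is $j_2+o_0 = i^{(b)}_2$; if $l_2 < j_2$ but $l_2 > j_2+o_0-2p$, then $i^{(b)}_2+p = j_2+o_0$; and in the remaining subcase where the run extends even further left, I fall back to the ``right end'' of the runs, using that $r_1 \le j_1+d-1$ must then hold (otherwise a symmetric shift would again contradict minimality by shifting through the long periodic region), so that $i^{(e)}_1$ lies in the common substring and matches $i^{(e)}_2$ at the same relative offset modulo $p$. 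In each subcase I extract $h \in [0,d)$ with $(j_1+h, j_2+h) \in \sB \times \sB$.

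\emph{Main obstacle.} The delicate part is the enumeration of subcases in Case 2: the minimality hypothesis only constrains leftward shifts, so the rightward boundary requires an additional periodicity argument to show that if $l_2$ is far to the left of $j_2$, then $r_1$ cannot extend far past $j_1+d-1$ without violating minimality either. Confirming that the three positions $i^{(b)}_k, i^{(b)}_k+p, i^{(e)}_k$ suffice to cover every such configuration is the technical heart of the proof; the weak periodicity lemma (\cref{lem:weak-period}) is used to show that periodic extensions of $R_1$ and $R_2$ compatible with the common substring can only differ in bounded, controllable ways.
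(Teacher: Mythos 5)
The paper does not prove \cref{lem:esa} at all — it imports it verbatim as Lemma~15 of the cited LCS paper — so there is no in-paper argument to compare against. Judged on its own, your plan follows what is surely the intended argument: split on whether $\per(T[j_1\dd j_1+3\tau-2])\le\tau/3$, use density plus consistency of $\sA$ in the aperiodic case, and use the $\tau$-runs' Lyndon-root occurrences together with minimality of $i_1+i_2$ in the periodic case. Your Case~1 is complete and correct.

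Case~2 has two concrete problems. First, the step ruling out $l_1<j_1$ and $l_2<j_2$ simultaneously is wrong as written: $l_k<j_k$ only guarantees that $R_k$ extends \emph{at least one} position left of $j_k$, so $S_1[i_1-p\dd i_1-1]$ need not lie inside $R_1$ and its characters are not ``determined by the Lyndon root.'' The correct (and easier) move is a shift by $1$: periodicity of each run gives $T[j_1-1]=T[j_1-1+p]=T[j_2-1+p]=T[j_2-1]$, so $(i_1-1,i_2-1)$ is a common occurrence with smaller sum. Second, the right-end fallback — which you yourself flag as unverified — is where the remaining content lies, and your sketch points in a slightly wrong direction. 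What one actually shows is: (i) if either run ends strictly before the right end of its copy of the common substring, then both do and at the same relative offset (a period violation inside one copy transfers to the other), whence $i^{(e)}_1-j_1=i^{(e)}_2-j_2\in[0\dd d)$ and the pair $(i^{(e)}_1,i^{(e)}_2)$ anchors; (ii) if both runs reach the right end and, as in your remaining subcase, $l_2\le j_2+o_0-2p<j_2-p$, then $T[j_2-p\dd j_2+d)$ lies inside $R_2$ and has period $p$, so $(i_1,i_2-p)$ is a common occurrence with smaller sum — an \emph{asymmetric} shift, not the ``symmetric'' one you invoke, and exactly what minimality of the sum $i_1+i_2$ (rather than of each index separately) licenses. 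With these two repairs the case enumeration closes; \cref{lem:weak-period} is not actually needed.
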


Hence, the rest of the task is to implement \cref{lem:esa} in a strongly explicit way, and apply \cref{thm:aj}. 
To compute Lyndon roots, we use recent quantum algorithms for minimal string rotation \cite{ying,aj22} (see also \cite{qishengwang,quantdc}).
 \begin{lem}[Minimal String Rotation \cite{aj22}]
	\label{lem:minrot}
	Given $S\in \Sigma^m$, computing the lexicographical minimal rotation of $S$ can be solved  in $m^{\frac{1}{2}+o(1)}$ quantum time.
 \end{lem}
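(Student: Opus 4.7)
The plan is to mimic the multi-level tournament-tree construction used in the proof of \cref{lem:rangemin-recurs}, with the role of the truncated hash values replaced by actual prefixes of the rotations themselves. Concretely, form the doubled string $SS$ of length $2m$, and view every rotation of $S$ as the length-$m$ substring $R_i := SS[i\dd i+m-1]$ for $i \in [1\dd m]$. The goal reduces to finding an $i^*\in\arg\min_{i\in[1\dd m]} R_i$ under the lexicographic order on length-$m$ strings.

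I would set up length scales $1 = m_0 < m_1 < \cdots < m_L = m$ with $L = \Theta(\sqrt{\log m})$ and $m_{\ell+1}/m_\ell \le m^{O(1/L)}$, and define a level-$\ell$ subroutine that, given any aligned window of length at most $m_{\ell+1}$ inside $[1\dd m]$, returns $\arg\min_{i\text{ in the window}} R_i[1\dd m_\ell]$ represented as an arithmetic progression. The recursion mirrors \cref{lem:rangemin-recurs}: split the input window into $m_{\ell+1}/m_\ell$ sub-blocks of length $m_\ell$, run the level-$\ell$ subroutine on each, and then use quantum minimum finding over the resulting $m_{\ell+1}/m_\ell$ sub-block representatives, comparing two length-$m_{\ell+1}$ candidate prefixes by an $\lcp$ query on $SS$ (\cref{obs:lcp}) in $\tilde O(\sqrt{m_{\ell+1}})$ time per comparison. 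Ties produced by a single sub-block form an arithmetic progression by \cref{lm:period-patterns}, and extending the corresponding period up to length $m_{\ell+1}$ (using \cref{cor:per} and binary-search-with-Grover) lets us collapse those ties to the correct single survivor in an additional $\tilde O(\sqrt{m_{\ell+1}})$ time per sub-block. Summing across levels telescopes exactly as in the proof of \cref{lem:rangemin}, yielding total quantum time $\sqrt{m}\cdot (O(\log m))^{O(\sqrt{\log m})}\cdot \polylog(m) = m^{1/2+o(1)}$.

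The main obstacle I anticipate is propagating tie information faithfully through the tournament: when many positions $i$ agree on $R_i[1\dd m_\ell]$, one must understand which of them are genuinely still competing at level $\ell+1$ once more characters are revealed. The structural input that saves us is that such ties are controlled by periods of substrings of $SS$, so at every level the surviving tie set is an arithmetic progression whose continuation is dictated by how far the relevant period extends inside $SS$; this is precisely the mechanism already used for the $\rho_\ell$ step in \cref{lem:rangemin-recurs}. A secondary subtlety is that rotations whose starting index is close to $m$ straddle the join point of the two copies in $SS$, but since $|SS|=2m$ and every $R_i$ is a contiguous length-$m$ substring of $SS$, no special-casing is needed, and the global minimum emerges from a single invocation of the level-$L$ subroutine on the full window $[1\dd m]$.
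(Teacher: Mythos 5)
Your proposal is essentially the intended argument: the paper does not prove this lemma but cites it from Akmal and Jin, and their algorithm is exactly the tournament-tree-with-period-tie-breaking scheme you describe (indeed, the paper states that its own \cref{lem:rangemin-recurs} was modeled on that algorithm). The only quantitative slip is that for \cref{lm:period-patterns} to guarantee that the tied minimizers of $R_i[1\dd m_\ell]$ form an arithmetic progression, the sub-blocks must have length a sufficiently small constant fraction of $m_\ell$ (e.g.\ $m_\ell/4$, as in the paper's choice of $\tau_\ell/4$) rather than $m_\ell$ itself; this only affects constants and does not change the $m^{1/2+o(1)}$ bound.
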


\begin{definition}[Anchor set via quantum synchronizing set]
 Let $\sA$ be the $\tau$-synchronizing set of $T$ determined by random seed $\sigma$ (\cref{thm:quantum-sync}).
Let $f(\tau) = \tau^{o(1)}$ be the expected sparsity upper bound in \cref{thm:quantum-sync}.

Define $\sJ = \{1,1+\tau,1+2\tau,\dots \} \cap  [N-3\tau+2]$ of size $|\sJ| \le O(N/\tau)$.
For every $i\in  \sJ $, let $\sC_i\subseteq [1\dd N]$ be defined by the following procedure.
\label[definition]{defn:anchor}
\begin{itemize}
	\item \textbf{Step 1:} If  $|\sA \cap [i\dd i+\tau)| \le 100 f(\tau) $, then add all the elements from $A \cap [i\dd i+\tau)$ into $\sC_i$. Otherwise, do not add any.
	\item \textbf{Step 2:} 
	If $i\in  \sQ$ (defined in \eqref{eqn:defnQ}), then let $p:=\per(T[i\dd i+\tau)) \le \tau/3$, and extend this period to both directions (up to distance $\tau$): 
	    \begin{align*}
	        r &:= \max\big \{r: r\le \min \{N,i+2\tau\} \land \per(T[i\dd r])=p\big \},\\ l &:= \min\big \{l: l\ge \min \{1,i-\tau\} \land \per(T[l\dd i+\tau)) =p\big \}.
	    \end{align*} 
		Let $P$ be the Lyndon root of
	    $T[l\dd r]$. Let $P=T[i^{(b)}\dd i^{(b)}+p)=T[i^{(e)}\dd i^{(e)}+p)$ be the first and last occurrences of $P$ in $T[l \dd r]$. Add $i^{(b)},i^{(b)}+p$, and $i^{(e)}$ into $\sC_i$.
\end{itemize}
Finally, the anchor set $\sC$ is defined as $\bigcup_{i \in \sJ } \sC_i$.
\end{definition} 

Now we will show that the anchor set in \cref{defn:anchor} can be used in \cref{thm:aj} and lead to an improved algorithm for LCS with threshold $d$.

\lcsubrestate*

\begin{proof}[Proof of \cref{thm:lcsub}]
	Recall $N= |T|=2n+1$ and $\tau = \lfloor d/3\rfloor $.
We first observe that $|\sC_i|\le \tau^{o(1)}$ and $\sC = \bigcup_{i\in \sJ} \sC_i$ has size $|\sC| \le |\sJ|\cdot \tau^{o(1)} \le n/\tau^{1-o(1)}$ by definition. Then, observe that $\sC_i$ can be computed in $\tau^{1/2+o(1)}\cdot \polylog(n)$ quantum time given any $i\in \sJ$, due to the efficient computability of $\sA$ (\cref{thm:quantum-sync}), and fast quantum algorithms for computing period (\cref{cor:per}) and minimal string rotation (\cref{lem:minrot}). Hence, the time bound  in \cref{thm:aj} is 
\[ \tilde O(|\sC|^{2/3} \cdot (\sqrt{d}+\mathcal{T})) \le \tilde O((n/\tau^{1-o(1)})^{2/3} \cdot (\sqrt{d}+\tau^{1/2+o(1)})) \le \tilde O(n^{2/3}/d^{1/6-o(1)}).\]

It remains to explain why $\sC$ is an anchor set with at least constant probability. 

First we show that $\sB \subseteq \sC$. 
Let any $\tau$-run $T[l\dd r]$ be given, with
$P=T[i^{(b)}\dd i^{(b)}+p)=T[i^{(e)}\dd i^{(e)}+p)$ being the first and last occurrences of its Lyndon root $P$ in $T[l \dd r]$.
By definition of $\sQ$, we have $[l\dd r-\tau+1]\subseteq \sQ$. Let $j_1,j_2$ be the minimum and maximum  $j\in \sJ \cap [l\dd r-\tau+1]$ (which must be non-empty).
Then, we must have $j_1 - l\in [0\dd \tau)$ which then implies $i^{(b)}, i^{(b)}+p \in \sC_{j_1}$ in Step 2 of \cref{defn:anchor}. Similarly, we can show $i^{(e)} \in \sC_{j_2}$.
Hence, $\sB \subseteq \sC$.

Now, let $S_1[i_1\dd i_1+d)=S_2[i_2\dd i_2+d)$ be a length-$d$ common substring with minimized $i_1+i_2$. If $S_1[i_1\dd i_1+d)=S_2[i_2\dd i_2+d)$ is already anchored by $\sB$, then we are done. Otherwise, by \cref{lem:esa} it is anchored by $\sA \cup \sB$, for any $\tau$-synchronizing set $\sA$.
We consider the synchronizing positions  in $\sA$ that may be used as the anchor here, i.e., they are close to $i_1$ or $|S_1|+1+i_2$ up to $O(\tau)$ distance.
By the sparsity property (\cref{thm:quantum-sync}), the expectation (over seed $\sigma$) of 
\[|[i_1-\tau\dd i_1+d+\tau) \cap \sA| + |[|S_1|+1+i_2-\tau\dd |S_1|+1+i_2+d+\tau)  \cap \sA|  \]
is at most $10\cdot f(\tau)$. By Markov's inequality, with constant probability, this is at most $100 f(\tau)$, meaning that these anchors will all be included in $\sC$ by the Step 1 of \cref{defn:anchor}. Hence, $\sC$ is an anchor set with at least constant probability.
\end{proof}

\subsection{A Matching Lower Bound}
\label{sec:lcslb}

In this section, we observe a nearly  matching lower bound for the LCS problem with threshold $d$.

\lcslbrestate*

Our proof is based on known results in quantum query complexity. Let 
\[ P_d := \{ x_1x_2\cdots x_d \in (\Sigma\cup \{\star\})^d : \text{ exactly one $i\in [d]$ satisfies $x_i \neq \star$} \}.\]
Define a partial function $\mathsf{PSEARCH}_d \colon P_d \to \Sigma$ by letting $\mathsf{PSEARCH}_d(x_1\cdots x_d)$ return the only non-$\star$ symbol $x_i$ ($i\in [d]$).
We will use the following composition theorem with inner function being  the $\mathsf{PSEARCH}$ problem \cite{psearch}.
\begin{them}[\cite{psearch}]
	\label{thm:psearch}
    Let $f\colon \Sigma^m \to A$ be a function with quantum query complexity $Q(f)$. Let $h$ be the composition of $f$ with $\mathsf{PSEARCH}_d$, namely
    \[ h(x_{1,1},\dots,x_{1,d}; \, \dots \, ; x_{m,1},\dots,x_{m,d} ) = f(\mathsf{PSEARCH}_d(x_{1,1},\dots,x_{1,d}),\dots,\mathsf{PSEARCH}_d(x_{m,1},\dots,x_{m,d})).\]
    Then, the quantum query complexity of $h$ satisfies $Q(h) \ge \Omega( Q(f)\cdot \sqrt{d})$.
\end{them}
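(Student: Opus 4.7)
The plan is to apply the composition theorem (\cref{thm:psearch}) to a reduction from bipartite element distinctness (claw finding). Let $m := \Theta(n/d)$; the hypothesis $|\Sigma| \ge \Omega(n/d)$ of the theorem allows the underlying alphabet of the element distinctness instance to have size $\Theta(m)$, under which bipartite element distinctness $f$ on two lists of length $m$ has $Q(f) = \Theta(m^{2/3})$ by \cite{DBLP:journals/jacm/AaronsonS04,DBLP:journals/siamcomp/Ambainis07}. Thus the composed function $h := f \circ \mathsf{PSEARCH}_d$ satisfies $Q(h) \ge \Omega(m^{2/3}\sqrt{d}) = \Omega(n^{2/3}/d^{1/6})$, and it suffices to produce a query-preserving reduction from $h$ to LCS with threshold $d$ on strings of length $n$.

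For the reduction, given an input $(x_{i,j})_{i\in[2m],j\in[d]}$ of $h$ satisfying the PSEARCH promise, I would define the blocks
\[ A_i := x_{i,1}\cdots x_{i,d} \quad (i \in [m]), \qquad B_j := x_{m+j,1}\cdots x_{m+j,d} \quad (j \in [m]),\]
so that each $A_i$ is a length-$d$ string of the form $\star\cdots\star\,a_i\,\star\cdots\star$ with a single non-star $a_i$ at an unknown position, and similarly for $B_j$ and $b_j$. To enable arbitrary rotational alignment, double each block: $A_i^* := A_iA_i$ (length $2d$) and $B_j^* := B_jB_j$, then concatenate using fresh per-side spacers $\#_1,\#_2 \notin \Sigma\cup\{\star\}$:
\[ S_1 := A_1^*\,\#_1\,A_2^*\,\#_1\cdots\#_1\,A_m^*, \qquad S_2 := B_1^*\,\#_2\,B_2^*\,\#_2\cdots\#_2\,B_m^*.\]
Each of $S_1, S_2$ has length $2md+m-1 \le n$ (pad with extra spacers to reach length exactly $n$ if needed), the total alphabet size is $\Theta(m)+3 = \Theta(n/d)$, and each position of $S_1$ or $S_2$ is either a fixed spacer or equals some $x_{i,j}$, so every query to $S_\ell$ costs only $O(1)$ queries to $h$.

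The correctness claim is $\mathrm{LCS}(S_1,S_2) \ge d \iff \exists i,j:\ a_i=b_j$. Since $\#_1\notin S_2$ and $\#_2\notin S_1$, any length-$d$ common substring must lie entirely within some $A_i^*$ and entirely within some $B_j^*$. The $d+1$ length-$d$ substrings of $A_i^* = A_iA_i$ are exactly the $d$ cyclic rotations of $A_i$, each of the form $\star^{r-1}a_i\star^{d-r}$ for some $r\in[d]$, and analogously for $B_j^*$. Two such strings agree iff they place their single non-stars at the same position with the same value; ranging over rotation choices of both sides, this is possible iff $a_i=b_j$. Combining this reduction with \cref{thm:psearch} then yields the claimed lower bound.

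The main obstacle is exactly this correctness argument: one must design the encoding so that no spurious length-$d$ matches exist. A naive concatenation $A_1A_2\cdots A_m$ (no doubling, no spacers) would fail in two ways: the unknown non-star positions generically prevent aligning $a_i$ with $b_j$ inside a single length-$d$ window, and length-$d$ windows straddling adjacent blocks can be entirely stars, matching similar all-star windows in $S_2$. The doubling $A_i^* = A_iA_i$ guarantees that every rotation (and hence every placement of the non-star) appears as a length-$d$ substring, while the per-side unique spacers confine every length-$d$ common substring to a single $(A_i^*, B_j^*)$ pair, ruling out the star-star spurious matches.
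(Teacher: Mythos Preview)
Your proposal does not prove the stated theorem. \Cref{thm:psearch} is the composition theorem for $\mathsf{PSEARCH}_d$, quoted from \cite{psearch} as a black box; the paper does not prove it, and neither do you. What you wrote is a proof of \cref{thm:lcslb} (the LCS lower bound) that \emph{invokes} \cref{thm:psearch}. If the intended target was indeed \cref{thm:psearch}, this is a genuine mismatch: you would have to argue the adversary/polynomial lower bound for composing an arbitrary $f$ with $\mathsf{PSEARCH}_d$, which is a completely different proof.

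Assuming you actually meant to prove \cref{thm:lcslb}, your argument is correct and follows the same high-level route as the paper: reduce from bipartite element distinctness composed with $\mathsf{PSEARCH}_d$, then encode the $h$-instance as an LCS instance. The encoding, however, is genuinely different. The paper pads each PSEARCH block with $\star^{10d}$ on both sides and uses distinct delimiters, so a match $a_i=b_j$ yields a common substring of length $20d+1$, while any length-$11d$ window must contain a non-star; this produces a gap ($\ge 20d+1$ versus $\le 11d-1$) and hence also $(2-\eps)$-approximation hardness. Your doubling trick $A_i^* = A_iA_i$ instead realizes all cyclic rotations of each block as length-$d$ substrings, giving the clean equivalence $\mathrm{LCS}\ge d \iff \exists i,j:\,a_i=b_j$. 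This is tighter for the exact-threshold decision problem (you hit threshold $d$ on the nose rather than $\Theta(d)$), but it does not immediately yield the approximation-hardness corollary that the paper's padding gives for free.
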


We will reduce the composition of (bipartite) Element Distinctness with  $\mathsf{PSEARCH}_d$ to the LCS problem with threshold $d$.

\begin{proof}[Proof of \cref{thm:lcslb}]
	In the (bipartite) Element Distinctness problem, we are given two length-$m$ arrays $a,b \in \Sigma^m$, and want to decide whether there exist $i,j\in [m]$ such that  $a_i=b_j$. When $|\Sigma| \ge \Omega(m)$, this problem requires $\Omega(m^{2/3})$ quantum query complexity \cite{DBLP:journals/jacm/AaronsonS04,DBLP:journals/toc/Ambainis05}.

    Let $m=n/d$. Let $a,b\in (\Sigma\cup \{\star\})^{md}$ be an instance of Bipartite Element Distinctness problem composed with the $\mathsf{PSEARCH}_d$ problem.
By \cref{thm:psearch},  this problem requires quantum query complexity $\Omega(m^{2/3}\cdot \sqrt{d}) = \Omega(n^{2/3}/d^{1/6})$.
    
    Now we will create an LCS instance of two strings $S,T$ of length $\Theta(n)$. Define string $S$ as the concatenation
	\[ S:=  A_1 \#_1 A_2 \#_2 \cdots \#_{m-1} A_m,\]
	where $\#_i$ are distinct delimiter symbols, and block $A_i$ $(i\in [m])$ is defined as
    \[ A_i = \star^{10d} a_{i,1}a_{i,2}\dots a_{i,d}  \star^{10d}.\]
    Notice that, when $a_{i,1}a_{i,2}\dots a_{i,d}$ is an input for the $\mathsf{PSEARCH}_d$ problem, the unique non-$\star$ character inside block $A_i$ has at least $10d$ $\star$s next to it on both sides.
    The string $T$ is similarly defined based on $b$, using a different set of delimiter symbols.

If there exits $i,j\in [m]$ such that $\tilde a_i = \tilde b_j \neq \star$, where $\tilde a_i=\mathsf{PSEARCH}_d(a_{i,1},\dots,a_{i,d})$, and $\tilde b_j=\mathsf{PSEARCH}_d(b_{j,1},\dots,b_{j,d})$, then the LCS between $S,T$ is at least $1+20d$, consisting of the matched non-$\star$ symbol together with $\star^{10d}$ on its left and another $\star^{10d}$ on its right.
 On the other hand, any substring in $S$ (or $T$) of length at least $11d$ that contains no delimiters must contain a non-$\star$ symbol, so a common substring between $S$ and $T$ of length at least $11d$ implies that $a$ and $b$ have a common non-$\star$ symbol. 

 Hence, we can solve the instance by checking whether $\mathrm{LCS}(S,T) \ge 20d+1$ or $\mathrm{LCS}(S,T) \le 11d-1$.
\end{proof}

Observe that the proof easily extends to $(2-\eps)$-approximation for any constant $\eps>0$.
By a suitable binary encoding of $\Sigma$, one can extend the lower bound to the case of binary strings with a $O(\log n)$-factor loss, similar to \cite{legall}.

\section{Data Structure for LCE Queries}	
\label{sec:lce}
In the LCE problem, one is given a string $T\in \Sigma^n$ and needs to preprocess a data structure $D$, so that later one can efficiently answer $\lce(i,j)$ given any $i,j\in [n]$, defined as $\lce_T(i,j):= \lcp(T[i\dd n], T[j\dd n])$.

Kempa and Kociumaka \cite{kk19} used $\tau$-synchronizing sets to design an optimal data structure for LCE queries in the small-alphabet setting where the input string $T$ is given as a packed representation, in which each word holds $\tau = O(\log_{\Sigma} n)$ characters in the word-RAM model.
In this section, we observe that, using our quantum string synchronizing set (\cref{thm:quantum-sync}), the LCE data structure of \cite{kk19} can be adapted into the quantum setting, stated as follows.

\lcerestate*

\begin{proof}[Proof Sketch]
Let $\sA$ be the $\tau$-synchronizing set of $T$ from \cref{thm:quantum-sync} with expected size $\Ex_{\sigma}|\sA| = (n/\tau)\cdot \tau^{o(1)}$. We can assume that the seed $\sigma$ is successful, and $|\sA| \le 10 \Ex_{\sigma}|\sA|$, which happens with constant probability by Markov's inequality. Then we use the reporting algorithm to report all elements in $\sA$, in $(n/\tau + |\sA|)\cdot \tilde O(\tau^{1/2+o(1)}) \le \tilde O(n/\tau^{1/2-o(1)})$ quantum time.

Having computed all elements in $\sA$, the algorithm from now on is almost the same as \cite[Section 5]{kk19}, and we shall not repeat the algorithm description here. At a high level, \cite{kk19}'s preprocessing algorithm consists of $O(|\sA|)$ many basic operations on $\tau$-length substrings, including computing longest common prefix (and lexicographical comparison), pattern matching, and computing period. Their algorithm for answering $\lce_T(i,j)$ given any $i,j\in [n]$ takes $O(1)$ such basic operations. In our case, each basic operation on $\tau$-length substrings can be done in $\tilde O(\sqrt{\tau})$ quantum time (\cref{obs:lcp}, \cref{thm:matching}, \cref{cor:per}). Hence, we have $\mathcal{T}_{\mathsf{prep}} \le \tilde O(n/\tau^{1/2-o(1)}) + |\sA|\cdot \tilde O(\sqrt{\tau}) \le \tilde O(n/\tau^{1/2-o(1)})$, and $\mathcal{T}_{\mathsf{ans}} \le \tilde O(\sqrt{\tau})$.  (In comparison, in \cite{kk19}'s  small-alphabet setting, each $\tau$-length substring is packed as a machine word, and each basic operation takes $O(1)$ time in word-RAM.)
\end{proof}

The LCE data structure from \cref{thm:lce} can be used to answer whether two substrings in $T$ have Hamming distance at most $k$, via the standard kangaroo jumping method: starting from the leftmost endpoints, each time use an LCE query to determine the next Hamming mismatch, until hitting the rightmost endpoints or finding more than $k$ mismatches.
This data structure will be useful in our $k$-mismatch matching algorithm in \cref{sec:kmis}.

\begin{cor}
\label{thm:candidatepos-sync}
Given string $T\in \Sigma^n$ and parameter $1\le \tau \le n/2$, after $\tilde O(n/\tau^{\frac{1}{2}-o(1)})$ time quantum preprocessing, we can obtain a data structure that supports answering whether two substrings $T[l_1\dd l_1+d),T[l_2\dd l_2+d)$ have Hamming distance at most $k$ (where $k\ge 1$), in $\tilde O(k\sqrt{\tau})$ quantum time complexity.
\end{cor}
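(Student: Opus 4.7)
The plan is to instantiate the LCE data structure from \cref{thm:lce} on the input string $T$ with the same parameter $\tau$, and then answer Hamming-distance queries via the standard kangaroo jumping technique on top of $\lce$ queries. The preprocessing cost is immediately $\mathcal{T}_{\mathsf{prep}}=\tilde O(n/\tau^{\frac{1}{2}-o(1)})$ by \cref{thm:lce}, and each individual $\lce$ query costs $\mathcal{T}_{\mathsf{ans}}=\tilde O(\sqrt{\tau})$, both matching the bounds promised in the corollary.

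For the query, given $T[l_1\dd l_1+d)$ and $T[l_2\dd l_2+d)$ and threshold $k\ge 1$, I would maintain a running mismatch counter and a current offset $h$ initialized to $0$. At each step I query $\ell=\lce_T(l_1+h, l_2+h)$; if $h+\ell\ge d$ then all remaining characters match and I output the current mismatch count; otherwise positions $l_1+h+\ell$ and $l_2+h+\ell$ form a mismatch, so I increment the counter, set $h\leftarrow h+\ell+1$, and repeat. I terminate and report ``distance $>k$'' as soon as the counter exceeds $k$, and ``distance $\le k$'' when $h$ first reaches or exceeds $d$. Since $\lce$ is asked on suffixes of the full text $T$, capping the returned length at $d-h$ is the only truncation needed, and this does not add to the query cost.

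The correctness follows because each iteration either discovers exactly one new mismatch inside the overlap $[l_1+h\dd l_1+d)$ vs.\ $[l_2+h\dd l_2+d)$, or certifies that no mismatches remain, so after at most $k+1$ iterations the algorithm either has found more than $k$ mismatches or has exhausted the window. The cost is therefore at most $(k+1)\cdot \tilde O(\sqrt{\tau}) = \tilde O(k\sqrt{\tau})$ quantum time, as required. The procedure is invoked on a single classical data structure $D$, so the preprocessing only happens once; subsequent Hamming-distance queries on different pairs $(l_1,l_2,d)$ all re-use $D$.

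There is no substantial obstacle: the only mild subtlety is that \cref{thm:lce} is stated for $\lce$ on suffixes of $T$, so I must implement the ``bounded'' $\lce$ on fixed-length windows by post-processing the returned length against the remaining window $d-h$, and I must be careful that the success probabilities of the $O(k)$ $\lce$ calls composed with the preprocessing still yield the claimed high-probability bound --- this is handled by the standard $O(\log n)$ repetition boost to each $\lce$ query and a union bound absorbed into the $\tilde O$ notation.
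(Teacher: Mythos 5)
Your proposal is correct and matches the paper's argument exactly: the paper also obtains this corollary by building the LCE data structure of \cref{thm:lce} with the same parameter $\tau$ and then applying standard kangaroo jumping, using at most $k+1$ LCE queries at $\tilde O(\sqrt{\tau})$ each. Your handling of the window truncation and the error-probability boosting is consistent with the (very terse) justification the paper gives.
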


We observe that our LCE data structure in \cref{thm:lce} cannot be significantly improved. The proof follows from the following result in quantum query complexity.

\begin{them}[\cite{polynomial,Paturi92}]
   \label{thm:lbham} 
For $1\le k\le n/2$, given $x\in \{0,1\}^n$, deciding whether its Hamming weight $\|x\|_1 = \sum_{i=1}^n x_i$ satisfies $\|x\|_1\le k$ requires quantum query complexity $\Omega(\sqrt{kn})$.
\end{them}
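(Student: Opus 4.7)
The plan is to invoke two classical pillars of quantum query complexity. Deciding whether $\|x\|_1 \le k$ is, up to negation of the output, computing the symmetric Boolean threshold function $T_{k+1}^n(x) = [\|x\|_1 \ge k+1]$, and negation does not change quantum query complexity. So it suffices to lower bound $Q(T_{k+1}^n)$.

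The first step is the polynomial method of Beals, Buhrman, Cleve, Mosca, and de Wolf (\cite{polynomial}): any $Q$-query bounded-error quantum algorithm for a Boolean function $f$ gives a real multilinear polynomial of degree at most $2Q$ approximating $f$ pointwise within $1/3$ on $\{0,1\}^n$. Hence $Q(f) \ge \tilde{\deg}(f)/2$, where $\tilde{\deg}$ denotes approximate degree. The second step is Paturi's tight approximate-degree bound for symmetric Boolean functions (\cite{Paturi92}): $\tilde{\deg}(T_k^n) = \Theta(\sqrt{k(n-k+1)})$. Plugging in $k+1$ in place of $k$ and using that $1 \le k \le n/2$ implies $n-k \ge n/2$, the factor $n-k$ collapses to $\Theta(n)$, so $\tilde{\deg}(T_{k+1}^n) = \Theta(\sqrt{(k+1)\cdot n}) = \Theta(\sqrt{kn})$. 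Combining the two inequalities yields $Q = \Omega(\sqrt{kn})$, as claimed.

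Since the theorem is stated with a direct citation to \cite{polynomial,Paturi92}, the in-paper ``proof'' should be essentially a pointer to these references together with the one-line arithmetic simplification $\sqrt{k(n-k+1)} = \Theta(\sqrt{kn})$ valid for $k \le n/2$. The genuine difficulty lies inside Paturi's argument, which I would expect to be the main obstacle were one to redo it from scratch: after symmetrizing an approximating multivariate polynomial to a univariate polynomial $p$ of the same degree on $\{0,1,\dots,n\}$, one must argue that $p$ cannot change from near $0$ to near $1$ across the threshold $k$ unless $\deg(p)$ is large. The standard route is Markov's inequality for polynomials ($|p'(x)| \le \deg(p)^2 \cdot \|p\|_\infty$ on $[-1,1]$), rescaled to the interval $[0,n]$ and applied at a point chosen to exploit that the threshold sits at distance $\Theta(n-k) = \Theta(n)$ from one endpoint and $\Theta(k)$ from the other; this produces the $\sqrt{k(n-k+1)}$ scaling. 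In our setting this is a black box, so no re-derivation is needed.
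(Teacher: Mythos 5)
Your proposal is correct and is exactly the intended justification: the paper offers no proof beyond the citation to \cite{polynomial,Paturi92}, and the standard route is precisely the one you describe — the polynomial method gives $Q(f)\ge \widetilde{\deg}(f)/2$, and Paturi's bound $\widetilde{\deg}(T_{k+1}^n)=\Theta(\sqrt{(k+1)(n-k)})$ collapses to $\Theta(\sqrt{kn})$ since $k\le n/2$ forces $n-k\ge n/2$. Your identification of where the real work lies (inside Paturi's symmetrization-plus-Markov argument, used here as a black box) is also accurate.
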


\optrestate*
\begin{proof}
	Given $x\in \{0,1\}^n$, we can use \cref{thm:candidatepos-sync} (which directly follows from the LCE data structure in \cref{thm:lce}) to check whether $\|x\|_1\le k$, using $\tilde O(\mathcal{T}_{\mathsf{prep}} + k\cdot \mathcal{T}_{\mathsf{ans}})$ quantum queries.
	By \cref{thm:lbham}, we must have 
\[ \sqrt{kn} \le \tilde O(\mathcal{T}_{\mathsf{prep}} + k\cdot \mathcal{T}_{\mathsf{ans}}).\]

\begin{itemize}
	\item 
If $\mathcal{T}_{\mathsf{prep}} \in [\tilde \Omega(\sqrt{n}),\tilde O(n)]$, then we set $k = (\mathcal{T}_{\mathsf{prep}}^2 / n)\cdot \polylog(n) \in [1,n/2]$, so that $\mathcal{T}_{\mathsf{prep}}\cdot \polylog(n) <0.5 \sqrt{kn}$ and hence 
\[ k\cdot \mathcal{T}_{\mathsf{ans}} \ge \sqrt{kn}/\polylog(n) - \mathcal{T}_{\mathsf{prep}} \ge 0.5 \sqrt{kn}/\polylog(n),\]
which implies
\[ \mathcal{T}_{\mathsf{ans}} \ge \tilde \Omega (\sqrt{n/k}) = \tilde \Omega(n/\mathcal{T}_{\mathsf{prep}}).\]
Hence the statement holds.
\item If $\mathcal{T}_{\mathsf{prep}} \le  \tilde O(\sqrt{n})$, then setting $k=1$ in the above argument shows $\mathcal{T}_{\mathsf{ans}} \ge \tilde \Omega(\sqrt{n})$, which also implies the statement.
\item If $\mathcal{T}_{\mathsf{prep}} \ge  \tilde \Omega(n)$ then the statement is immediately true. \qedhere
\end{itemize}
\end{proof}

\section{The \texorpdfstring{$k$}{k}-Mismatch Matching Problem}

\label{sec:kmis}
In this section we will show how to devise an algorithm for the $k$-mismatch matching problem running in $\tilde{O}(k\sqrt{n})$ quantum time. Our approach heavily relies on structural insights described in \cite{ckw20}, which we will adapt to the quantum setting in \cref{sub:decomp}. After having performed a structural analysis of a pattern $P \in \Sigma^m$, we will show in \cref{sub:matching} how to report efficiently the existence of a $k$-mismatch occurrence of $P$ in a text $T \in \Sigma^n$. Finally, we will modify slightly the algorithm in \cref{sub:query} using string synchronizing sets. By doing so, we will obtain an algorithm for the same problem requiring $\tilde O (k^{3/4}n^{1/2}m^{o(1)})$ query complexity and $\tilde{O}(k\sqrt{n})$ time complexity.

Throughout this section we will use algorithmic tools and quantum subroutines, which will frequently be very similar to each other. For the sake of convenience, we will describe some of them now. This will allow us to use them as blackboxes in the next subsections:

\begin{lem}
\label{lem:expsearch}
Let $S$ be a string over an alphabet $\Sigma$, let $Q$ denote a string period, and let $k \in [1\dd |S|]$ be a threshold. Then, we can identify the first $k$ mismatches between $S$ and $Q^*$ in $\tilde{O}(\sqrt{k\cdot |S|})$ time.
\end{lem}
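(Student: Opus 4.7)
The plan is to find the $k$ mismatches one by one, from left to right, using Grover search with exponentially growing window sizes. Note that we have $O(1)$-time query access to $Q^*$ (via $Q^*[i]=Q[((i-1) \bmod |Q|)+1]$), so checking whether position $i$ is a mismatch reduces to an ordinary $O(1)$-time predicate, and Grover search applies directly.

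Concretely, suppose we have already located the first $i-1$ mismatches and the most recent one is at position $p$ (initially $p=0$). To find the next mismatch, we perform an exponential search: for $j=1,2,3,\dots$, we use (boosted) Grover search to test whether $S[p+1 \dd \min(p+2^j, |S|)]$ contains at least one position that mismatches $Q^*$. Each such test takes $\tilde{O}(\sqrt{2^j})$ quantum time. As soon as a test succeeds, we isolate the leftmost mismatch inside this window by a Grover-based binary search: in each of $O(\log 2^j)$ rounds we ask whether the left half of the current candidate window contains a mismatch (Grover, $\tilde{O}(\sqrt{2^j})$ time), and recurse into the correct half. The cost of this localization is a geometric sum dominated by $\tilde{O}(\sqrt{2^j})$. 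If every window up to size $|S|-p$ returns ``no mismatch,'' we conclude w.h.p.\ that fewer than $k$ mismatches exist and stop.

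Let $d_i := p_i - p_{i-1}$ denote the gap between consecutive mismatches (with $p_0:=0$). The doubling terminates as soon as the window covers $p_i$, i.e.\ when $2^j \ge d_i$, so the total cost of step $i$ is $\tilde{O}\bigl(\sum_{j \le \lceil \log_2 d_i\rceil + 1} \sqrt{2^j}\bigr) = \tilde{O}(\sqrt{d_i})$. Summing over all $k$ mismatches and applying the Cauchy--Schwarz inequality,
\[
\sum_{i=1}^{k} \sqrt{d_i} \;\le\; \sqrt{k \cdot \sum_{i=1}^{k} d_i} \;\le\; \sqrt{k \cdot |S|},
\]
which gives the claimed $\tilde{O}(\sqrt{k \cdot |S|})$ bound.

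The only subtlety is error control: the algorithm makes $\tilde O(k)$ Grover/Grover-binary-search invocations, and each must succeed for the output to be correct. This is handled routinely by amplifying each Grover call to success probability $1 - 1/\poly(|S|)$ at the cost of an $O(\log |S|)$ multiplicative factor, which is already absorbed in the $\tilde{O}(\cdot)$ notation; a union bound then gives overall high probability of success. No other step poses any real difficulty.
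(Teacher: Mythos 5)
Your proposal is correct and follows essentially the same approach as the paper's proof: exponential (doubling) search combined with Grover search to locate each successive mismatch in time proportional to the square root of the gap, followed by summing $\sqrt{d_i}$ over the $k$ gaps. The paper bounds the sum by noting it is maximized when the gaps are equal, which is equivalent to your Cauchy--Schwarz step; your explicit treatment of error amplification is a minor addition that the paper leaves implicit.
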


\begin{proof}
Let $\sigma_1 < \sigma_2 < \cdots < \sigma_k$ denote the indices of the first $k$ mismatches between $S$ and $Q^*$. Suppose that we have just determined $\sigma_i$ for some $0 \leq i < k$ (if $i = 0$, then $\sigma_0$ is the staring position of the string). To find $\sigma_{i+1}$ we can execute an exponential search combined with Grover search.

More specifically, at the $j$-th jump of the first stage of the exponential search we use Grover search to check if a mismatch exists between $S[\sigma_i, \min(\sigma_i + 2^{j - 1},|S|)]$ and $Q^*[\sigma_i, \min(\sigma_i + 2^{j - 1},|S|)]$. Once we have found the first $j$ for which a mismatch exists, we use a binary search combined with Grover search to find the exact position of $\sigma_{i+1}$. 

By doing so, we use at most $O(\log (\sigma_{i+1} - \sigma_{i})) \leq O(\log |S|)$ Grover searches as subroutine, each of them requiring at most $O(\sqrt{\sigma_{i+1} - \sigma_i})$ time. As a consequence, finding all $\sigma_1, \ldots, \sigma_k$ means spending $\tilde{O}(\sum_{i=0}^{k} \sqrt{\sigma_{i+1} - \sigma_i})$ time. This expression is maximized if all mismatches are equally spaced from each other, meaning that $\sigma_{i+1} - \sigma_{i} = O(|S|/k)$ for all $0 \leq i < k$. Therefore, we can conclude that $\tilde{O}(\sum_{i=0}^{k-1} \sqrt{\sigma_{i+1} - \sigma_i}) \leq \tilde{O}(k \cdot \sqrt{|S|/k}) = \tilde{O}(\sqrt{k\cdot |S|})$. 
\end{proof}

\begin{lem}
\label{lem:candidatepos} %
Let $T \in \Sigma^n$ be a text, let $P \in \Sigma^m$ be a pattern, let $k \in [1\dd n]$ be a threshold, and let $1\leq i \leq n - m$ be a candidate position for a $k$-mismatch occurrence of $P$ in $T$. Then, we can verify if $\delta_H(T[i\dd  i + m], P) \leq k$ in $\tilde{O}(\sqrt{km})$ time.
\end{lem}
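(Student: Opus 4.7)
The plan is to adapt the argument of \cref{lem:expsearch} almost verbatim, replacing the periodic reference string $Q^*$ by the pattern $P$. Concretely, I will find the successive mismatch positions between $S := T[i \dd i+m)$ and $P$ one at a time using a \emph{kangaroo jumping} procedure implemented via exponential search on top of Grover search, stopping as soon as either $k+1$ mismatches have been located (in which case output ``no'') or the end of $S$ is reached with at most $k$ mismatches found (output ``yes'').

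Let $\sigma_0 := 0$ and let $\sigma_1 < \sigma_2 < \cdots$ be the mismatch positions between $S$ and $P$. Given $\sigma_j$, I locate $\sigma_{j+1}$ as follows: for $\ell = 1, 2, \ldots$ run Grover search on the window $[\sigma_j+1 \dd \min(\sigma_j + 2^\ell, m)]$ to test for the existence of a mismatch; once the smallest such $\ell$ is found, run a binary search combined with Grover search inside that window to pin down $\sigma_{j+1}$ exactly. By the standard analysis (identical to the one in \cref{lem:expsearch}), the whole step costs $\tilde O(\sqrt{\sigma_{j+1} - \sigma_j})$ quantum time. After at most $k$ such jumps I either have conclusively found more than $k$ mismatches or have reached the end of $S$ having seen at most $k$; either way the decision is made.

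For the total cost, summing over $j = 0, 1, \ldots, k-1$ and applying Cauchy--Schwarz,
\[
\sum_{j=0}^{k-1} \sqrt{\sigma_{j+1} - \sigma_j} \;\le\; \sqrt{k \cdot \sum_{j=0}^{k-1} (\sigma_{j+1} - \sigma_j)} \;\le\; \sqrt{k \cdot m},
\]
which yields the advertised $\tilde O(\sqrt{km})$ time bound after absorbing the logarithmic factors from the exponential/binary searches and from boosting Grover search success probability. Correctness holds with high probability by the standard union bound over the $O(k \log m)$ Grover calls.

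There is no real obstacle here: the lemma is essentially the well-known quantum kangaroo-jumping bound. The only thing to be slightly careful about is that the cost of each jump depends on the gap $\sigma_{j+1} - \sigma_j$, and one must use Cauchy--Schwarz (rather than the trivial sum) to save the $\sqrt{k}$ factor and obtain $\sqrt{km}$ instead of $k\sqrt{m}$. This is exactly the same observation that was already used implicitly in \cref{lem:expsearch}, so the proof can be stated as a direct analogue.
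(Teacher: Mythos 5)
Your proposal is correct and follows essentially the same route as the paper, which simply notes that the proof is "almost identical" to that of \cref{lem:expsearch}: kangaroo jumping via exponential search plus Grover search, with the $\sqrt{km}$ bound on the sum of per-jump costs. Your explicit use of Cauchy--Schwarz is in fact a cleaner justification of that bound than the paper's informal "mismatches equally spaced" argument.
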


\begin{proof}
The proof is almost identical to the proof of \cref{lem:expsearch}.
\end{proof}

\subsection{Finding a Structural Decomposition of the Pattern}
\label{sub:decomp}

Our algorithm relies on a structural lemma proved by Charalampopoulos, Kociumaka and Wellnitz \cite{ckw20} which uses concepts closely related to the notion of periodicity of strings: a string $P$ has \textit{approximate string period} $Q$ if it is identical to the periodic extensions of $Q$ up to a certain number of mismatches (this number will vary depending on the lemma), moreover, if a substring of $P$ has an approximate period, then we call it a \textit{repetitive region}. The structural result is the following:

\begin{lem}[Lemma 3.6 in \cite{ckw20}]
\label{lem:decomp}
Given a string $P \in \Sigma^m$ and a threshold $k \in [1\dd m]$, at least one of the following holds:
\begin{enumerate}
    [label=\textbf{(\alph*)}]
\item The string $P$ contains $2k$ disjoint \textnormal{breaks} $B_1, \ldots, B_{2k}$ each having period $\per(B_i) > m/128k$ and length $|B_i| = \lfloor m/8k \rfloor$. 
\item The string $P$ contains $r$ disjoint \textnormal{repetitive regions} $R_1, \dots, R_r$ of total length $\sum_{i=1}^{r} |R_i| \geq 3/8 \cdot m$ such that each region $|R_i| \geq m/8k$ and has a primitive \textnormal{approximate period} $Q_i$ with $|Q_i| \leq m/128k$ and $\delta_{H}(R_i, Q_i^*) = \lceil 8k/m \cdot |R_i| \rceil$.
\item The string $P$ has a primitive \textnormal{approximate period} $Q$ with $|Q| \leq m/128k$ and $\delta_H(P, Q^*) < 8k$.
\end{enumerate}
\end{lem}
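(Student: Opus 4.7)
The plan is to prove the lemma by a greedy extraction of breaks followed by a structural analysis of what survives. Concretely, I would scan $P$ from left to right and repeatedly locate the leftmost length-$\lfloor m/8k\rfloor$ substring whose minimum period exceeds $m/128k$; each such substring is added to the break set and the scan resumes immediately past it. If the procedure terminates having produced at least $2k$ breaks, case (a) holds and we are done. So assume strictly fewer than $2k$ breaks are extracted.

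In that case, the characters covered by breaks total at most $2k\cdot \lfloor m/8k\rfloor \le m/4$, so at least $3m/4$ characters of $P$ lie in intervals in which every length-$\lfloor m/8k\rfloor$ window has period at most $m/128k$. Two overlapping such windows share a common period by the Weak Periodicity Lemma (\cref{lem:weak-period}), so each such interval decomposes into maximal runs, each carrying a primitive Lyndon-root $Q$ with $|Q|\le m/128k$ and each of length at least $m/8k$. Let $R_1,\dots,R_s$ be the resulting collection; this is the candidate pool for case (b).

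Next I would filter the pool by the approximate-periodicity bound $\delta_H(R_i,Q_i^*) \le \lceil (8k/m)|R_i|\rceil$. A charging argument should bound the total length of discarded runs by pairing each ``excess'' mismatch with a block of length proportional to $m/k$, after which the surviving runs still cover at least $3m/8$ of $P$, giving case (b). The complementary possibility is that the filtering kills so much that case (b) fails; I would then argue that this can happen only when a single run (or a short chain of mergeable runs) spans almost all of $P$. In that scenario I extend its approximate period $Q$ across the intervening breaks and gaps, bounding the mismatch contribution of each break by its length $\lfloor m/8k\rfloor$ (times the number of breaks, which is below $2k$) and of each gap by its approximate-period defect, and show the running total stays strictly below $8k$, yielding case (c).

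The main obstacle will be the boundary analysis that mediates cases (b) and (c): when two adjacent maximal runs have primitive roots $Q_i$ and $Q_{i+1}$ and their extensions overlap sufficiently, the Weak Periodicity Lemma forces $\gcd(|Q_i|,|Q_{i+1}|)$ to be a common period of the overlap, so the two runs can be merged into one approximate-periodic block on a longer range. Iterating this merge while tracking how the aggregate Hamming defect grows, and deciding at each step whether the merged period still satisfies the tight bounds of case (c) or whether the unmerged pieces must instead be certified individually under case (b), is the crux of the proof. The constants $m/128k$, $m/8k$, $3m/8$, and $8k$ are chosen so that this mismatch bookkeeping closes up, and pinning down those constants is where the bulk of the technical effort lies.
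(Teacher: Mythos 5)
Your opening move (a greedy left-to-right extraction of length-$\lfloor m/8k\rfloor$ breaks with period exceeding $m/128k$, falling back to a periodic analysis when fewer than $2k$ are found) matches the constructive algorithm of \cite{ckw20} reproduced in Algorithm~\ref{alg:decompalgo}. But the way you build the repetitive regions has a genuine gap. You decompose the unbroken part into \emph{exact} maximal runs with period at most $m/128k$ and then ``filter by the approximate-periodicity bound $\delta_H(R_i,Q_i^*)\le\lceil (8k/m)\cdot|R_i|\rceil$.'' An exact run has $\delta_H(R_i,Q_i^*)=0$, so this filter is vacuous --- yet case (b) demands \emph{equality} $\delta_H(R_i,Q_i^*)=\lceil (8k/m)\cdot|R_i|\rceil$, i.e.\ each region must carry a prescribed \emph{positive} number of mismatches with its approximate period. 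This is not a technicality: the downstream use of case (b) is \cref{lem:ckwrep}, which needs $\delta_H(R,Q^*)=d\ge 2k'$ as a \emph{lower} bound in order to cap the number of $k'$-mismatch occurrences; exactly periodic regions give no such cap. The actual construction achieves the equality by growing each region rightward from its periodic seed \emph{past} mismatches, stopping precisely when the accumulated mismatch count first reaches the target density $\lceil (8k/m)\cdot|R|\rceil$; your runs stop at the first mismatch, which is the opposite behaviour.

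The mediation between cases (b) and (c) also fails as proposed. You suggest extending the approximate period across breaks and ``bounding the mismatch contribution of each break by its length $\lfloor m/8k\rfloor$ (times the number of breaks, which is below $2k$)''; that budget is up to $2k\cdot\lfloor m/8k\rfloor\approx m/4$ mismatches, far exceeding the $<8k$ required by case (c) whenever $m\gg k^2$. No per-break charging is needed in the cited argument: when the rightward extension of the current periodic seed never reaches the target mismatch density, one attempts a leftward (suffix) extension, and if that also fails then in particular the suffix of length $m$ --- all of $P$, breaks and earlier regions included --- satisfies $\delta_H(P,\rot^{j}(Q)^*)<\lceil 8k/m\cdot m\rceil=8k$ directly; the invariant that the scan position stays below $5m/8$ is what makes this trichotomy close. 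The gcd-based merging of adjacent runs that you identify as the crux is not actually needed and, as you concede, you have not closed its bookkeeping.
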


\cite{ckw20} proved this structural lemma in a constructive way: they gave an algorithm for determining either a set $\mathcal{B}$ of $2k$ breaks, or a set $\mathcal{R}$ of repetitive regions of total length at least $3/8 \cdot m$, or an approximate string period $Q$. In this section, we will describe a direct quantum speedup of their algorithm.
 For the sake of convenience, we include here their pseudocode (see Algorithm~\ref{alg:decompalgo}). 

\begin{algorithm2e}
\caption{The constructive proof of \cref{lem:decomp} given by \cite{ckw20}}
\label{alg:decompalgo}

\Begin{
$\mathcal{B} \leftarrow \{\}$\;
$\mathcal{R} \leftarrow \{\}$\;
\While{\textbf{true}}{
    Consider the fragment $P' = P[j\dd j+\lfloor m/8k\rfloor)$ of the next $\lfloor m/8k \rfloor$ unprocessed characters of $P$\;
    \If{$\text{per}(P') > m/128k$}{
        $\mathcal{B} \leftarrow \mathcal{B} \cup \{P'\}$\;
        \If{$|\mathcal{B}| = 2k$}{
            \KwRet{$\mathcal{B}$}
        }
    }\Else{
        $Q \leftarrow P[j\dd j+\per(P'))$\;
        Search for a prefix $R$ of $P[j \dd m)$ with $|R| > |P'|$ and $\delta_H(R, Q^*) = \lfloor 8k/m \cdot |R| \rfloor$\;
        \If{such $R$ exists}{
            $\mathcal{R} \leftarrow \mathcal{R} \cup \{(R, Q)\}$\;
            \If{$\sum_{(R, Q) \in \mathcal{R}} |R| \geq 3/8 \cdot m$}{
                \KwRet{$\mathcal{R}$}}
            } \Else {
                Search for a suffix $R'$ of $P$ with $|R'| \geq m - j$ and $\delta_H(R', \rot^{|R'|-m+j}(Q)^*) = \lceil 8k/m \cdot |R'| \rceil$\;
                \If{such $R'$ exists}{
                    \KwRet {repetitive region $(R', \rot^{|R'|-m+j}(Q))$}
                } \Else {
                    \KwRet{approximate period rot$^j(Q)$}
                }
            
        }
    }
}
}

\end{algorithm2e} 

Here we provide an informal overview of Algorithm~\ref{alg:decompalgo}; its proof of correctness can be found in \cite{ckw20}. 
The algorithm maintains an index $j$ which indicates up until which position the string has been processed, and returns as soon as we have found one of the structural properties described in \cref{lem:decomp}.
 The algorithm is cleverly designed in such a way that it returns always before $j \geq 5/8 \cdot m$ holds.
  In each step, we consider the fragment $P' = P[j \dd j+\lfloor m/8k \rfloor]$. If $\per(P') \geq m/128k$, we can add $P'$ to $\mathcal{B}$.
   Otherwise, we will try to extend $P'$ to a repetitive region. Let $Q$ the string period of $P'$. We extend $P'$ by searching for a prefix of $P[j\dd m)$ longer than $P'$ which has exactly $8k/m \cdot |R|$ mismatches with $Q^*$. If we managed to find such a prefix, we can add $R$ to $\mathcal{R}$. If not, it means that there were not enough mismatches between $Q^*$ and $P[j\dd m)$. At this point, we try to extend $P[j\dd m)$ backwards to a repetitive region. If we manage to do so, we already have a long enough repetitive region to return as $j < 5/8 \cdot m$. Otherwise, it means that the whole string $P$ has few mismatches with some shift of $Q^*$, and thus $P$ has an \textit{approximate period}. A more detailed proof can be found in \cite{ckw20}.

\begin{lem}
\label{lem:quantumdecomp}
Given a string $P$, we can find a structural decomposition as described in \cref{lem:decomp} in $\tilde{O}(\sqrt{km})$ quantum time.
\end{lem}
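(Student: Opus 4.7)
The plan is to directly quantize Algorithm~\ref{alg:decompalgo} of Charalampopoulos, Kociumaka, and Wellnitz by replacing each classical subroutine with its quantum counterpart. Correctness is then inherited from \cite{ckw20}, so the entire argument reduces to a running-time analysis. The two primitives executed inside the main loop are (i) computing $\per(P')$ on the length-$\lfloor m/8k\rfloor$ fragment $P'$, and (ii) searching for a prefix $R$ of $P[j\dd m)$ (or, in the exceptional branch, a suffix $R'$ of $P$) whose mismatch count against $Q^*$ equals $\lfloor 8k/m\cdot |R|\rfloor$.

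For the period step I would invoke \cref{cor:per}, which solves the task on a length-$\lfloor m/8k\rfloor$ string in $\tilde O(\sqrt{m/k})$ quantum time. For the prefix/suffix search I would enumerate mismatches between $P[j\dd m)$ and $Q^*$ one at a time using \cref{lem:expsearch}. If the relative positions of the first $i$ mismatches are $\sigma_1<\dots<\sigma_i$, then the mismatch-count function $f(\ell)=\delta_H(P[j\dd j+\ell),Q^*)$ equals $i$ exactly on the interval $[\sigma_i,\sigma_{i+1})$, so the target equality $f(\ell)=\lfloor 8k\ell/m\rfloor$ holds in this gap iff the closed-form interval $[\lceil im/(8k)\rceil,(i+1)m/(8k))$ intersects $[\sigma_i,\sigma_{i+1})$. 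This test can be performed in $O(1)$ time between successive mismatches, and I would fetch $\sigma_{i+1}$ only when the intersection is nonempty; once $\sigma_i$ exceeds $m-j$ we report that no valid $R$ exists and enter the suffix/approximate-period branch, which is symmetric.

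For the time analysis, each iteration advances $j$ by at least $\lfloor m/8k\rfloor$ (a break contributes exactly that, a repetitive region $R$ contributes $|R|\ge m/8k$), so the loop runs $O(k)$ times in total. The cumulative cost of the period computations is therefore $O(k)\cdot\tilde O(\sqrt{m/k}) = \tilde O(\sqrt{km})$. A repetitive region of length $|R|$ triggers a mismatch enumeration with $O(k|R|/m)$ mismatches inside a window of length $|R|$, costing $\tilde O(\sqrt{(k|R|/m)\cdot |R|}) = \tilde O(|R|\sqrt{k/m})$ by \cref{lem:expsearch}. Summing over all repetitive regions and invoking $\sum_i |R_i|\le m$ yields $\tilde O(\sqrt{k/m}\cdot m) = \tilde O(\sqrt{km})$. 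The one-shot suffix-search and the approximate-period branch each enumerate at most $O(k)$ mismatches over at most $m$ characters, contributing another $\tilde O(\sqrt{km})$.

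The main obstacle, in my view, is the bookkeeping around the termination condition: we must decide whether the target equality is attained inside $[\sigma_i,\sigma_{i+1})$ \emph{before} paying to locate $\sigma_{i+1}$, because otherwise a naive implementation would spend $\tilde O(\sqrt{k|R|})$ per mismatch and break the budget by a factor of $\sqrt{k}$. The closed-form intersection test above circumvents this issue, after which all remaining pieces slot in mechanically and the overall quantum running time is $\tilde O(\sqrt{km})$.
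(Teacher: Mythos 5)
Your proposal is correct and follows essentially the same route as the paper: quantize Algorithm~\ref{alg:decompalgo}, compute each $\per(P')$ via \cref{cor:per} in $\tilde O(\sqrt{m/k})$ over the $O(k)$ loop iterations, and locate the prefix/suffix boundaries by enumerating mismatches against $Q^*$ one at a time as in \cref{lem:expsearch}, with the same $\sum_i |R_i|\le m$ accounting. The only place you diverge is the termination test, and there your interval-intersection criterion is circular as stated (deciding whether $[\lceil im/(8k)\rceil,(i+1)m/(8k))$ meets $[\sigma_i,\sigma_{i+1})$ already requires knowing $\sigma_{i+1}$); the issue is moot, however, because the paper's observation that $\Delta_\rho=\lceil 8k\rho/m\rceil-\delta_H(P[j\dd j+\rho],Q^*)$ starts at $1$, decreases only at mismatch positions, and only by $1$ there shows the target equality is first attained exactly at a mismatch position, so performing the $O(1)$ check after fetching each $\sigma_{i+1}$ is already within budget and the look-ahead you were worried about is unnecessary.
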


\begin{proof} 
We show how to implement Algorithm \ref{alg:decompalgo} in $\tilde{O}(\sqrt{km})$ quantum time. There is no need to modify its pseudocode. We will go through line by line, and show how we can turn it into a quantum algorithm by using suitable quantum subroutines.

First, notice that the while loop at line 4 requires at most $O(k)$ iterations, as we process at least $\lfloor m/8k \rfloor$ characters in each iteration. Since it possible to find the period of $P'$ in $\tilde{O}(\sqrt{m/k})$ (\cref{cor:per}), we can conclude that line 5 summed over all iterations runs in $\tilde{O}(k\cdot \sqrt{m/k}) = \tilde{O}(\sqrt{km})$ time.

To find the prefix $R$ at line 12 we use a similar approach as in \cref{lem:expsearch}. The key insight is the following: if the prefix $R$ exists, then its last character is a mismatch with $Q^*$. This motivates us to find one after another the mismatches between $P[j\dd m)$ and $Q^*$ similarly to \cref{lem:expsearch}, and to end prematurely the subroutine if we manage to find such a prefix $R$. To see why this works, consider the difference $\Delta_\rho : = \lceil 8k/m \cdot \rho \rceil - \delta_H(P[j\dd j+\rho], Q^*)$ for $\rho \in [|P'|\dd m-j]$. Every time we encounter a new mismatch at a position $\rho$ we have that $\Delta_{\rho}$ decreases at most by one. For any other position $\Delta_{\rho}$ cannot decrease. As we start with $\Delta_{|P'|} = 1$ and we find the prefix $R$ as soon as $\Delta_\rho =0$, we conclude that the last character of $R$ has to be necessarily a mismatch with $Q^*$. Hence, if there exists such a prefix $R$, we can find it in $\tilde{O}(\sqrt{k/m} \cdot |R|)$ time. Summed over all $R \in \mathcal{R}$ this results in $\sum_{(R, Q)} \tilde{O}(\sqrt{k/m} \cdot |R|) = \tilde{O}(\sqrt{k/m}) \sum_{(R, Q)} |R| \leq \tilde{O}(\sqrt{km})$ time since it holds that $\sum_{(R, Q)} |R| \leq m$. 

Conversely, if we did not manage to find such a prefix $R$, we will return either at line 20 or at line 22. Before returning, we still have to determine if $R'$ exists. Similarly, we can argue that the search for $R'$ will require no more than $\tilde{O}(\sqrt{km})$ time. Lastly,  returning an \textit{approximate period} at line 22 means not having found enough mismatches with $Q^*$ at line 12 and 18; more specifically, we encountered at most $8k$ of them. From \cref{lem:expsearch} follows that the subroutine required $\tilde{O}(\sqrt{km})$ time.
\end{proof}

\subsection{Finding a \texorpdfstring{$k$}{k}-Mismatch Occurrence in \texorpdfstring{$\tilde{O}(k\sqrt{n})$}{tilde{O}(k*sqrt(n))} Time}
\label{sub:matching}

Each of the three cases described in \cref{lem:decomp} gives us useful structure that can be used to find a $k$-mismatch occurrence of $P$ in $T$ in $\tilde{O}(k\sqrt{n})$ time. We consider each case separately in \cref{lem:breakcase}, \cref{lem:repregcase} and \cref{lem:approxcase}.

\begin{lem}
\label{lem:breakcase}
Let $P \in \Sigma^m$, let $T\in \Sigma^n$, and let $k \in [1\dd m]$ be a threshold. Further, assume $P$ contains $2k$ disjoint \textnormal{breaks} $B_1, \ldots, B_{2k}$ each having period $\per(B_i) > m/128k$ and the length $|B_i| = \lfloor m/8k \rfloor$. 

Then, we can verify the existence of a $k$-mismatch occurrence of $P$ in $T$ (and report its starting position in case it exists) in $\tilde{O}(k\sqrt{n})$ time.
\end{lem}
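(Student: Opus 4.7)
The plan is to exploit the $2k$ breaks as anchors for locating candidate positions of a $k$-mismatch occurrence of $P$ in $T$. First, I would establish two structural facts. (a) Each break $B_i$, of length $\lfloor m/8k\rfloor$ and period $\per(B_i) > m/128k$, admits at most $O(nk/m)$ occurrences in $T$: two distinct occurrences at positions $j_1<j_2$ with $j_2-j_1 < |B_i|$ force $j_2-j_1$ to be a period of $B_i$, hence $j_2-j_1 \ge \per(B_i) > m/128k$, so the total count is bounded by $n/\per(B_i) = O(nk/m)$. (b) Any $k$-mismatch occurrence of $P$ at some position $j$ in $T$ has at most $k$ mismatches, and since the $2k$ breaks are disjoint substrings of $P$, at least $k$ of them must match exactly in $T$ at their corresponding positions $j+p_i$, where $p_i$ denotes the starting position of $B_i$ in $P$.

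For a fixed break $B_i$, I would design a subroutine that finds a $k$-mismatch occurrence of $P$ in $T$ whose copy of $B_i$ matches exactly. Partition $T$ into $O(nk/m)$ overlapping windows of length $|B_i|+\per(B_i)=O(m/k)$ with step $\per(B_i)$, so that every occurrence of $B_i$ in $T$ is entirely contained in some window; since $|B_i|/\per(B_i) < 16$, each window contains at most a constant number of $B_i$-occurrences. For a window $W$, I iterate \cref{thm:matching} a constant number of times to enumerate all occurrences of $B_i$ in $W$ in $\tilde O(\sqrt{m/k})$ total time, and for each such occurrence at position $j+p_i$ I apply \cref{lem:candidatepos} to verify whether $\delta_H(T[j\dd j+m),P)\le k$ in $\tilde O(\sqrt{km})$ time. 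I then invoke quantum search with variable times over these $O(nk/m)$ windows, where the evaluation time $t_W$ is $\tilde O(\sqrt{m/k})$ in the \emph{fast} case (no occurrence of $B_i$ in $W$) and $\tilde O(\sqrt{km})$ in the \emph{slow} case (at least one occurrence in $W$, which must be verified). Since the numbers of both fast and slow windows are bounded by $O(nk/m)$,
\[
\sum_W t_W^2 \;\le\; O(nk/m)\cdot (m/k) + O(nk/m)\cdot km \;=\; O(nk^2),
\]
so the variable-time search finds a marked window (if one exists) in $\tilde O(\sqrt{nk^2}) = \tilde O(k\sqrt{n})$ time.

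Finally, to cover every potential $k$-mismatch occurrence, not just those where a specific $B_i$ matches, I would randomize: run the above subroutine $O(\log n)$ times, each time with an independent uniform random $i\in [2k]$. By fact (b), any existing $k$-mismatch occurrence has at least half of the $2k$ breaks matching exactly, so each random trial covers it with probability at least $1/2$; hence $O(\log n)$ trials find one with high probability. Combined with the standard amplification of the underlying bounded-error Grover subroutines, the total running time remains $\tilde O(k\sqrt n)$.

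The main obstacle I anticipate is the variable-time search analysis itself: the per-window subroutine consists of nested quantum calls (pattern matching followed by up to $O(1)$ verifications), so one must carefully justify uniform worst-case time bounds $t_W$ before invoking the $\tilde O(\sqrt{\sum_W t_W^2})$ bound. Additional technical care is needed to make sure the overlapping-window partition accounts for every occurrence of $B_i$ without double-counting, and that the randomized covering argument composes cleanly with the inner Grover subroutine's intrinsic error.
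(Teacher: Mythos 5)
Your proposal is correct and follows essentially the same route as the paper: pick a random break (at least half match exactly in any $k$-mismatch occurrence), cover $T$ by $O(nk/m)$ blocks each containing $O(1)$ exact occurrences of the break, locate them via exact pattern matching, verify each candidate in $\tilde O(\sqrt{km})$ time, and search over blocks. The only cosmetic difference is that you invoke variable-time quantum search, whereas the paper uses plain Grover search with the uniform worst-case per-block cost $\tilde O(\sqrt{m/k}+\sqrt{km})$, which already yields $\tilde O(\sqrt{nk/m}\cdot\sqrt{km})=\tilde O(k\sqrt n)$; variable-time search is only genuinely needed in the repetitive-region case.
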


\begin{proof}
We select u.a.r. a \textit{break} $B$ among the $2k$ disjoint \textit{breaks} $B_1, \ldots, B_{2k}$. In a $k$-mismatch occurrence of $P$ in $T$, there are at least $k$ \textit{breaks} among $B_1, \ldots, B_{2k}$ where no mismatch occurs. Therefore, the probability that no mismatch occurs in \textit{break} $B$ is at least $k/2k = 1/2$. Further on, assume that we have selected w.h.p. a break $B$ where no mismatch is located starting at position $\beta$ in the pattern.

Since it holds that $\per(B) > m/128k$ we know that $B$ can match exactly in $T$ at most $|T| / \per(B) \le O(nk/m)$ times. Therefore, if we find the exact matches of $B$ in $T$, we obtain $O(nk/m)$ candidate positions that we can verify using \cref{lem:candidatepos}. 

For that purpose, we partition the text $T$ into blocks of length $|B|/4 \geq m/32k$, obtaining $O(n/|B|) = O(nk/m)$ blocks. For each of these blocks, we search for an exact match of $B$ with starting position in the block in $\tilde{O}(\sqrt{m/k})$ time using \cref{thm:matching}. From $\per(B) > m/128k$ it follows that there are at most $4$ exact matches of $B$ in every block.
 If we have found a perfect match at position $\gamma$, then our candidate position will be $\gamma - \beta$. We can verify a constant number of candidate positions using \cref{lem:candidatepos} in $\tilde{O}(\sqrt{km})$. Therefore, we can use Grover search over all $O(nk/m)$ blocks to find a $k$-mismatch occurrence in $\tilde{O}(\sqrt{nk/m} \cdot (\sqrt{m/k} + \sqrt{km})) = \tilde{O}(k\sqrt{n})$ time.
\end{proof}

In both the second and the third case of \cref{lem:decomp}, we work with strings having an \textit{approximate period}.
Before proceeding to devise algorithms for these two cases, we show how to match efficiently with few mismatches a \textit{repetitive region} $R$ having an \textit{approximate period} in a text $T$.

\begin{lem}
\label{lem:approxmatching}
Let $R$ and $T$ be strings over the alphabet $\Sigma$ such that $|R| \leq |T| \leq 3/2 \cdot |R|$, and let $m, k$ be integers such that $m/8k \leq |R| \leq m$ and $k\leq m$. Moreover, assume $R$ has a primitive \textnormal{approximate period} $Q$ with $|Q| \leq m/128k$ and $\delta_{H}(R, Q^*) \leq \lceil 8k/m \cdot |R| \rceil$. 

Then, for any integer $k' \leq \min(\lfloor 4k/m \cdot |R| \rfloor, k)$ and $\gamma \in [1\dd |T|-|R|]$, after $\tilde{O}(\sqrt{k/m} \cdot |R|)$ time preprocessing, we can verify the existence of a $k'$-mismatch occurrence of $R$ at position $\gamma$ of $T$ in $\tilde{O}(k/m \cdot |R|)$ time.
\end{lem}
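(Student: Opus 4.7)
The strategy is to exploit that $R$ is close to the periodic extension $Q^*$, so that any $k'$-mismatch occurrence of $R$ in $T$ at position $\gamma$ is itself close to $Q^*$ (appropriately rotated). Concretely, let $M_R := \{i \in [1\dd |R|] : R[i] \neq Q^*[i]\}$ where $Q^*[i] = Q[((i-1) \bmod |Q|) + 1]$, and for a candidate position $\gamma$ let $M_T^{(\gamma)} := \{i \in [1\dd |R|] : T[\gamma+i-1] \neq Q^*[i]\}$. By the triangle inequality, if $\delta_H(R, T[\gamma\dd\gamma+|R|)) \le k'$ then $|M_T^{(\gamma)}| \le k' + |M_R| = O(k/m \cdot |R|)$, and at every position $i \notin M_R \cup M_T^{(\gamma)}$ the characters $R[i]$ and $T[\gamma+i-1]$ automatically agree. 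Thus exactly verifying the Hamming distance reduces to an inspection of the $O(k/m \cdot |R|)$ indices in $M_R \cup M_T^{(\gamma)}$.

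In the preprocessing step, I would apply \cref{lem:expsearch} to $R$ against the period $Q$ (with threshold $\lceil 8k/m \cdot |R|\rceil$) to compute $M_R$ together with the values $R[i]$ for $i \in M_R$ in $\tilde O(\sqrt{|M_R| \cdot |R|}) = \tilde O(\sqrt{k/m} \cdot |R|)$ quantum time, and store the result as a sorted list. At query time for a given $\gamma$, I would retrieve $M_T^{(\gamma)}$ by a corresponding application of \cref{lem:expsearch} to $T[\gamma\dd\gamma+|R|)$ against the rotation of $Q$ dictated by the residue $(\gamma-1) \bmod |Q|$, aborting and returning ``no'' as soon as the mismatch count exceeds $k' + |M_R|$. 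Given both sorted lists, a classical merge computes $\delta_H(R, T[\gamma\dd\gamma+|R|))$ exactly in time $O(|M_R| + |M_T^{(\gamma)}|) = O(k/m \cdot |R|)$, by contributing one mismatch for every index in $M_R \triangle M_T^{(\gamma)}$ and for every index in $M_R \cap M_T^{(\gamma)}$ at which $R[i] \neq T[\gamma+i-1]$.

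The main obstacle is that, as stated, the target query bound $\tilde{O}(k/m \cdot |R|)$ is strictly smaller than the $\tilde O(\sqrt{k/m} \cdot |R|)$ cost of directly invoking \cref{lem:expsearch} to locate $M_T^{(\gamma)}$. To close this gap I would push the quantum search into the preprocessing phase: compute once and for all the relevant mismatches of $T$ against the $|Q| \le m/128k$ rotations of $Q^*$, organized so that for every $\gamma$ the set $M_T^{(\gamma)}$ can be read off by a classical lookup restricted to the window $[\gamma\dd \gamma+|R|)$. Exploiting $|T| \le \tfrac{3}{2}|R|$ and the fact that only those residues along which $T$ is itself approximately $Q$-periodic can yield valid candidates, this amortized preprocessing should still fit within the $\tilde O(\sqrt{k/m} \cdot |R|)$ budget, after which each query collapses to the classical merge described above.
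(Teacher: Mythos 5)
Your reduction of the verification step to merging the mismatch lists of $R$ and of the text window against $Q^*$ is the same as the paper's, and your diagnosis of the obstacle is also correct: discovering $M_T^{(\gamma)}$ at query time via \cref{lem:expsearch} costs $\tilde O(\sqrt{k/m}\cdot|R|)$, which exceeds the $\tilde O(k/m\cdot|R|)$ per-query budget. The problem is that your proposed fix --- precomputing the mismatches of $T$ against all rotations of $Q^*$, restricted to residues along which $T$ is approximately $Q$-periodic --- is exactly the hard part of the lemma, and it is not carried out. As stated it does not obviously fit the budget: there are up to $|Q|=\Theta(m/k)$ residues, you give no bound on how many of them admit an approximately periodic window of $T$, no algorithm to identify them within $\tilde O(\sqrt{k/m}\cdot|R|)$ time, and no bound on the total size of the mismatch lists you would have to store (even against a single rotation, $T$ may have $\Theta(|T|)$ mismatches with $Q^*$ outside the region where an occurrence could lie, so even one full list can be too expensive to compute).

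The paper closes this gap with two ingredients that are missing from your sketch. First, it shows that \emph{all} $k'$-mismatch occurrences of $R$ in $T$ share a single alignment with $Q^*$: since $R$ is long ($|R|\ge m/8k$, so $|R|/(4|Q|)\ge 32k/m\cdot|R|$) and $Q$ is primitive, a counting argument via the triangle inequality shows that all but $O(k/m\cdot |R|)$ of the length-$2|Q|$ segments of $T[|R|/2\dd|R|]$ contain an exact, correctly aligned occurrence of $Q$; picking a segment uniformly at random and exact-matching $Q$ inside it (in $\tilde O(\sqrt{|Q|})$ time by \cref{thm:matching}) therefore recovers the unique relevant residue $\theta\bmod|Q|$ with constant probability, boosted to w.h.p. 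Every $\gamma$ not in this residue class is rejected for free. Second, the precomputed $T$-versus-$Q^*$ mismatch list is restricted to the maximal window $[i_L\dd i_R]$ around $\theta$ containing at most $2h$ mismatches on each side, where $h=\delta_H(R,Q^*)$; the triangle inequality guarantees every $k'$-mismatch occurrence is contained in this window, so the list has size $O(k/m\cdot|R|)$ and costs only $\tilde O(\sqrt{k/m}\cdot|R|)$ to build with \cref{lem:expsearch}. With these two points in place, your merge-based verification goes through as a purely classical $O(k/m\cdot|R|)$ scan per position; without them, the claimed query-time bound is not established.
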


\begin{proof}
Intuitively, the existence of a $k'$-mismatch occurrence implies that $T$ has also primitive \textit{approximate period} $Q$ with slightly more mismatches at the position where the $k'$-mismatch occurrence is located. Hence, a premise for a $k'$-mismatch occurrence of $R$ in $T$ is the existence of a shift of $Q^*$ with which a substring of $T$ has few mismatches. We will show that all $k'$-mismatch occurrences of $R$ in $T$ will have to align with this shift of $Q^*$, given it exists. Further on, let $h := \delta_{H}(R, Q^*) \leq \lceil 8k/m \cdot |R| \rceil$. We begin by finding the shift of $Q^*$ with which a substring of $T$ possibly aligns.

For this purpose, we divide $T[|R|/2\dd |R|]$ into segments of length $2|Q|$, obtaining $|R|/4|Q| \geq 32k/m \cdot |R|$ full segments. By selecting u.a.r one of these, and by matching $Q$ exactly in it (we look for occurrences of $Q$ that are fully contained in the segments), we ensure that it will be aligned with the \textit{approximate period} with at least constant probability. By doing so, we might also find an occurrence of $Q$ that is not necessarily aligned with the approximate period, however, our approach ensures that there are enough aligned segments such that we find the right alignment with at least constant probability. More specifically, if there is a $k'$-mismatch occurrence of $R$ in $T$, using the triangular inequality we deduce that there are at most $h + \lfloor 4k/m \cdot |R| \rfloor \leq 2h$ occurrences of $Q$ in $T[|R|/2\dd |R|]$ with a mismatch. This means, there are at least $|R|/4|Q| - 2h \geq 32k/m \cdot |R| - 20k/m \cdot|R| = 12k/m \cdot |R|$ segments where we do not have any mismatch with $Q$. Hence, by selecting u.a.r. one of these we ensure that we find the right alignment with probability at least $12k/m \cdot |R| / (32k/m \cdot |R|) = 3/8$. Using a standard boosting argument we can ensure that this happens w.h.p. Note, we choose the segment size to be $2|Q|$ because if there is no mismatch in a segment, then $Q$ is fully contained in it without errors at least once. This step requires $\tilde{O}(\sqrt{|Q|}) \leq \tilde{O}(\sqrt{m/k}) \leq \tilde{O}(\sqrt{k/m} \cdot |R|)$ as $|Q| \leq m/128k$ and $|R| \geq m/8k$. 

The \textit{approximate period} of any $k'$-mismatch occurrence of $R$ in $T$ is aligned with the occurrence of $Q$ that we have just found. To prove this, it suffices to show that the \textit{approximate periods} of any two $k'$-mismatch occurrence of $R$ in $T$ are aligned. Since we have seen that w.h.p we can select a segment with a correct alignment of $Q$, then there exists such a segment for both occurrences of $R$, hence their approximate periods must be aligned. If we assume that we matched $Q$ exactly at position $\theta$, then our candidate positions for finding a $k'$-mismatch occurrence of $R$ in $T$ are all positions in $T$ which are at distance of a multiple of $Q$ from $\theta$.

Before proceeding to verify whether there is a $k'$-mismatch occurrence of $R$ in $T$ at our candidate positions, we find the first $2h$ mismatches between $Q^*$ on the left and on the right of $\theta$ (we stop respectively at position $\max(\theta-|R|, 1)$ and $\min(\theta+|R|, |T|)$ if there are not enough mismatches). More formally, let $i_L$ be the smallest position of $T$ larger than $\max(\theta-|R|, 1)$  such that $\delta_H(\rot^{\theta - i_L}(Q)^*, T[i_L\dd \theta]) \leq 2h$ and let $i_R$ be the largest position of $T$ smaller than $\min(\theta+|R|, |T|)$ such that $\delta(Q^*, T[\theta, i_R]) \leq 2h$. Note, the hamming distance between $Q^*$ and the substring of $T$ where the $k'$-mismatch occurrence is located is at most $\lfloor 4k/m \cdot |R| \rfloor + h \leq 2h$. Hence, by extending to the right and to the left by $2h$ mismatches we ensure that all $k'$-mismatch occurrences of $R$ in $T$ are contained in $T[i_L, i_R]$.

Now, in order to verify whether there is a $k'$-mismatch occurrence of $R$ in $T$ at candidate position $\gamma$, it suffices to compare those positions in $R$ and $T[i_L\dd i_R]$ that differ from $Q^*$. If we define $M := \{j : Q^*[\gamma+j] \neq T[\gamma+j] \land Q^*[\gamma+j] \neq R[\gamma+j] \land T[\gamma+j] = R[\gamma+j]\}$, then it holds that 
\begin{align*}
\delta_H(T[\gamma\dd \gamma+|R|], R) &= \delta_H(T[\gamma\dd \gamma+|R|], Q^*) + \delta_H(Q^*, R) - |M| \\
&= \delta_H(T[\gamma\dd \gamma+|R|], Q^*) + h - |M|.
\end{align*}

Using \cref{lem:expsearch} we can find the exact positions of the mismatches between $Q^*$ and $T[i_L\dd i_R]$, and between $Q^*$ and $P$ in $\tilde{O}(\sqrt{h \cdot |R|}) = \tilde{O}(\sqrt{k/m} \cdot |R|)$. For each candidate position $\gamma$, we scan simultaneously these mismatches in $O(h)$ to find the exact values of $\delta_H(T[\gamma\dd \gamma+|R|], Q^*)$ and $|M|$. Since the value of $h$ is already known to us, we can verify if $\delta_H(T[\gamma\dd \gamma+|R|], R) \leq k'$ in $O(h) = \tilde{O}(k/m \cdot |R|)$ for each candidate position $\gamma$.
\end{proof}

We also restate here another structural result of \cite{ckw20} that will be crucial to bound the candidate positions in the second case.

\begin{lem}[Corollary 3.5 in \cite{ckw20}]
\label{lem:ckwrep}
Let $P \in \Sigma^m$, let $T \in \Sigma^n$, and let $k \in [1\dd m]$ be a threshold. If there is a positive integer $d \geq 2k$ and a primitive string $Q$ with $|Q| \leq m/8d$ and $\delta_H(P, Q^*) = d$, then the number of $k$-mismatch occurrences of $P$ in $T$ are at most $12 \cdot n/m \cdot d$.
\end{lem}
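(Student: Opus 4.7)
The plan is to prove this in two stages: first an \emph{alignment} argument that forces structure on nearby $k$-mismatch occurrences, then a \emph{counting} step that bounds the density of such occurrences in any window.

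For alignment, I would show that any two $k$-mismatch occurrences $i_1 < i_2$ of $P$ in $T$ with $i_2 - i_1 \le m/2$ must satisfy $|Q| \mid (i_2 - i_1)$. By the triangle inequality, each $T[i_j\dd i_j+m)$ has Hamming distance at most $k + d \le 3d/2$ from the appropriate shift of $Q^*$. Hence in the overlap of length at least $m/2$, the two relevant shifts of $Q^*$ disagree on at most $2(k+d) \le 3d$ positions. On the other hand, since $Q$ is primitive, $Q^*$ and any shift of it by $s \not\equiv 0 \pmod{|Q|}$ disagree on at least one position in every length-$|Q|$ window, giving at least $\lfloor m/(2|Q|)\rfloor \ge 4d$ disagreements in the overlap. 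This contradiction forces $|Q| \mid (i_2 - i_1)$.

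For counting, I would partition $T$ into at most $\lceil 2n/m \rceil$ disjoint windows of length $m/2$ and bound the number of $k$-mismatch occurrences per window. By the alignment step, all occurrences in any one window lie on a single arithmetic progression with common difference $|Q|$. To push the per-window bound down to $O(d)$, I would use that each occurrence at position $i$ forces the set of positions in $[i\dd i+m)$ where $T$ differs from the aligned $Q^*$ to coincide with $i + J$ up to symmetric-difference at most $k$, where $J \subseteq [0\dd m)$ is the fixed set of positions with $P[j] \ne Q^*[j]$ and $|J| = d$. Comparing two such occurrences at $i$ and $i + s|Q|$ within one window then imposes $|J \triangle (J + s|Q|)| \le 2k$ on the common range, a rigidity condition which (together with $|Q| \le m/(8d)$ and $|J| = d$) rules out all but $O(d)$ admissible values of $s$. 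Summing over $O(n/m)$ windows yields the claimed $O(nd/m)$ bound with room for the constant $12$.

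The main obstacle will be the counting step. The alignment alone only yields a per-window bound of $\Theta(m/|Q|)$, which is much larger than $d$ when $|Q|$ is small, so this is not enough. Tightening it requires carefully invoking the symmetric-difference constraint on $J$: intuitively, $J$ cannot be nearly invariant under too many $|Q|$-multiples because $|J| = d$ is small compared to $m/|Q|$, so most $|Q|$-aligned shifts in the window fail the $|J \triangle (J + s|Q|)| \le 2k$ test and cannot host a $k$-mismatch occurrence. Making this combinatorial rigidity quantitative is the heart of the argument.
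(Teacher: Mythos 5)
First, note that the paper does not prove this lemma at all: it is imported verbatim as Corollary~3.5 of \cite{ckw20}, so there is no internal proof to compare against. Judged on its own merits, your alignment step is correct and standard: the triangle inequality gives $\delta_H(T[i_j\dd i_j+m),Q^*)\le k+d\le 3d/2$ for each occurrence, primitivity of $Q$ forces at least $\lfloor m/(2|Q|)\rfloor\ge 4d>3d$ disagreements between misaligned shifts of $Q^*$ on the overlap, and the window decomposition into $O(n/m)$ blocks is the right outer structure. The gap is exactly where you flag it, and the mechanism you propose for closing it does not work. The pairwise condition $|(J\cap[s|Q|\dd m))\,\triangle\,((J\cap[0\dd m-s|Q|))+s|Q|)|\le 2k$ is a correctly derived necessary condition, but it cannot rule out all but $O(d)$ shifts. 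Concretely, take $d=2k$ and $J=[m-d\dd m)$ (all $d$ mismatches of $P$ with $Q^*$ clustered at the right end). For every shift with $s|Q|\ge d$, the translate $(J\cap[0\dd m-s|Q|))+s|Q|$ is empty on the common range while $J\cap[s|Q|\dd m)=J$, so the symmetric difference equals $d=2k$ and the test passes. Hence all $\approx m/(2|Q|)$ aligned shifts in a window survive your filter, and with $|Q|=1$ this is $\Theta(m)\gg d$. The elements of $J$ that fall off the edge of the overlap are never penalized, and the budget $2k$ can absorb all of $J$ when $d=2k$; no purely combinatorial rigidity statement about $J$ versus its own translates can rescue this.

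The missing idea is to keep the text's mismatch set in the argument rather than eliminating it by comparing occurrences pairwise. Fix a window and let $D$ be the set of positions where $T$ differs from the aligned $Q^\infty$ on $W=[i_1\dd i_t+m)$. Since $W$ is covered by the first and last occurrence windows, $|D\cap W|\le 2(d+k)\le 3d$. Each occurrence $i$ satisfies $|(D\cap[i\dd i+m))\triangle(i+J)|\le k$, hence $|D\cap(i+J)|\ge d-k\ge d/2$. Double-counting incidences $(i,x)$ with $x\in D\cap(i+J)$: each $x\in D\cap W$ lies in $i+J$ for at most $|J|=d$ values of $i$, so $t\cdot d/2\le 3d\cdot d$, i.e.\ $t\le 6d$ occurrences per window, and summing over roughly $2n/m$ windows gives the stated $12\cdot n/m\cdot d$. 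Your proposal reaches the correct necessary conditions but discards $D$ too early; reinstating it and counting incidences is the step you are missing.
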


We have now all the tools to show how to deal with the second case.

\begin{lem}
\label{lem:repregcase}
Let $P \in \Sigma^m$, let $T \in \Sigma^n$, and let $k \in [1\dd m]$ be a threshold. Further, assume $P$ contains $r$ disjoint \textnormal{repetitive regions} $R_1, \dots, R_r$ of total length $\sum_{i=1}^{r} |R_i| \geq 3/8 \cdot m$ such that each region $|R_i| \geq m/8k$ and has a primitive \textnormal{approximate period} $Q_i$ with $|Q_i| \leq m/128k$ and $\delta_{H}(R_i, Q_i^*) = \lceil 8k/m \cdot |R_i| \rceil$. 

Then, we can verify the existence of a $k$-mismatch occurrence of $P$ in $T$ (and report its starting position in case it exists) in $\tilde{O}(k\sqrt{n})$ time.
\end{lem}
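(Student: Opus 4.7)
The plan is to exploit the approximate-periodic structure of $P$ by sampling a single repetitive region $R_i$, lifting any prospective $k$-mismatch occurrence of $P$ in $T$ to a $k'$-mismatch occurrence of $R_i$ for a suitably smaller threshold $k'$, and then running nested Grover search on a sliding-window decomposition of $T$ using \cref{lem:approxmatching} as the window-level routine.

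First I would sample a region $R_i$ with probability proportional to $|R_i|/\sum_j |R_j|$. Suppose there is a $k$-mismatch occurrence of $P$ at some position $j^*$ in $T$, and let $m_i$ denote the number of mismatches that $R_i$ contributes to it. Since $\sum_i m_i \leq k$ and $\sum_j |R_j| \geq 3m/8$, the expected value of $m_i/|R_i|$ under the sampling is at most $8k/(3m)$, so Markov's inequality yields $m_i \leq \lfloor 4k|R_i|/m \rfloor =: k'$ with probability at least $1/3$. Boosting this to w.h.p.\ costs only an $O(\log n)$ factor by independent re-sampling; soundness is automatic since the algorithm only ever reports a verified $k$-mismatch occurrence.

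With $R_i$ fixed, let $L := |R_i|$, let $\beta_i$ be the starting position of $R_i$ in $P$, and recall that the approximate period $Q_i$ satisfies $|Q_i| \leq m/128k$. I partition $T$ into $N = O(n/L)$ overlapping windows of length between $L$ and $3L/2$, arranged so that every length-$L$ substring of $T$ lies inside some window. Inside a single window, a $k'$-mismatch occurrence of $R_i$ must land at one of the $O(kL/m)$ positions aligned with some shift of $Q_i^*$ (by \cref{lem:ckwrep} applied to $R_i$ with parameter $d = \lceil 8kL/m \rceil$), and \cref{lem:approxmatching} exposes these aligned positions after $\tilde O(\sqrt{k/m}\,L)$ preprocessing. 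Each aligned position $\gamma$ immediately determines a candidate $P$-position $\gamma - \beta_i + 1$ in $T$, which I verify by kangaroo jumping via \cref{lem:candidatepos} in $\tilde O(\sqrt{km})$ time. Using inner Grover over the $O(kL/m)$ candidates inside a window and outer Grover over the $O(n/L)$ windows gives a total running time
\[
\tilde O\!\left(\sqrt{n/L}\cdot \Big(\sqrt{k/m}\,L + \sqrt{kL/m}\cdot \sqrt{km}\Big)\right) \;=\; \tilde O\!\left(\sqrt{nkL/m} + k\sqrt{n}\right) \;\leq\; \tilde O(k\sqrt{n}),
\]
using $L \leq m$ and $k \leq m$ to bound the first summand.

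The step I expect to require most care is the timing analysis of the nested Grover: the per-window oracle glues a preprocessing phase of cost $\sqrt{k/m}\,L$ to an inner Grover whose successful branches trigger the strictly more expensive $\sqrt{km}$ $P$-verification. Following the technical overview, I would invoke the quantum search with variable times primitive from \cref{sec:primitive} wherever the naive uniform-cost bookkeeping is loose, so that windows containing no good candidate do not pay the full verification cost. The remaining details — arranging the window overlap so no candidate is missed, and converting the constant Markov success probability to a high-probability guarantee via $O(\log n)$ independent rerolls of $R_i$ — are routine.
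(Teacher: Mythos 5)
Your proposal is correct and follows essentially the same route as the paper's proof: sample a repetitive region with probability proportional to its length so that w.h.p.\ it carries at most $k'=\lfloor 4k|R|/m\rfloor$ mismatches, bound the number of $k'$-occurrences of $R$ per window via \cref{lem:ckwrep}, use \cref{lem:approxmatching} for per-window preprocessing and cheap alignment-based verification, promote surviving positions to $P$-candidates checked in $\tilde O(\sqrt{km})$, and control the non-uniform costs with quantum search with variable times (your Markov-inequality derivation of the constant success probability is a minor, equally valid variant of the paper's counting argument). The only bookkeeping caveat is that your displayed cost formula searches over the $O(kL/m)$ \emph{actual} occurrences rather than the up-to-$O(L)$ aligned candidates each paying the cheap $\tilde O(kL/m)$ rejection cost, but that omitted term contributes only $\tilde O(\sqrt{n}\cdot kL/m)\le\tilde O(k\sqrt n)$ and you correctly flag that the variable-time primitive is what makes the accounting rigorous.
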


\begin{proof}
Let us start to handle this case by selecting a \textit{repetitive region} $R$ among $R_1, \dots, R_r$ with probability $\textnormal{Pr}[R = R_i] = |R_i| / \sum_{(R_i, Q_i) \in \mathcal{R}} |R_i|$, that is, proportionally to their length. 
Then, with at least constant probability in a $k$-mismatch occurrence of $P$ in $T$, out of the at most $k$ mismatches no more than $k' := \lfloor 4k/m \cdot |R| \rfloor$ of them are located in the \textit{repetitive region} $R$. By using a standard boosting argument we ensure that this happens w.h.p. Let $\delta_H^{\pi}(R_i)$ be the number of mismatches located in $R_i$ in the $k$-mismatch occurrence of $P$ in $T$ which starts at position $\pi$. Additionally, we define the index set $I_\pi := \{i : \delta_H^{\pi}(R_i) \leq \lfloor 4k/m \cdot |R_i| \rfloor\}$. For the sake of contradiction, assume $\sum_{i \in I_\pi} |R_i| < m/16$. As a result, $\sum_{i \notin I_\pi} |R_i| = \sum_{R_i \in \mathcal{R}} |R_i| - \sum_{i \in I_\pi} |R_i| \geq 3/8 \cdot m - 1/16 \cdot m \geq 5/16 \cdot m$, and therefore $\delta_H(T[p\dd p+m], P) \geq \sum_{i \notin I_\pi} \delta_H^{p}(R_i) > \sum_{i \notin I_\pi} \lfloor 4k/m \cdot |R_i| \rfloor > k$, which is a contradiction. Hence, we have that $\sum_{i \in I_\pi} |R_i| \geq m/16$. Now, it easy to see that at most $\lfloor 4k/m \cdot |R| \rfloor$ mismatches are located in $R$ with probability $\sum_{i \in I_\pi} |R_i| / \sum_{(R_i, Q_i) \in \mathcal{R}} |R_i| \geq (m/16) / m = 1/16$.

The number of occurrences with $k' := \lfloor 4k/m \cdot |R| \rfloor$ mismatches of $R$ in $T$ can be upper bounded by $192 \cdot nk/m$. To show this, we let $Q$ be the \textit{approximate period} of $R$, and we set $d := \delta_H(R, Q^*)$. By doing so, we obtain $d \geq 2k'$ and $d = \lceil 8k/m \cdot |R| \rceil \leq 16 \cdot k/m \cdot |R|$ since $|R| \geq m/8k$. This implies $|Q| \leq m/128k \leq |R|/8d$. The assumptions of \cref{lem:ckwrep} are satisfied. Therefore, we can upper bound the $k'$-mismatch occurrences of $R$ in $T$ with $12\cdot n/|R| \cdot d \leq 192 \cdot nk/m$ since $d \leq 16 \cdot k/m \cdot |R|$. These occurrences deliver us $\mathcal{O}(nk/m)$ candidate positions as in \cref{lem:breakcase}.

We proceed by partitioning the text $T$ into blocks of length at most $|R|/2$, and use Grover’s algorithm to search for a block which contains the starting position of a $k'$-mismatch occurrence of $R$. Note, this corresponds to searching for a $k'$-mismatch occurrence of $R$ which is completely contained in a substring of $T$ of length $3/2 \cdot |R| - 1$. Fix one of these substrings, and denote it with $T'$. Using \cref{lem:approxmatching} we can preprocess $T'$ in $\tilde{O}(\sqrt{k/m} \cdot |R|)$ time and  verify the existence of a $k'$-mismatch occurrence at any position of $T'$ in $\tilde{O}(k/m \cdot |R|)$ time.

Being able to find all $k'$-mismatch occurrences of $R$ in $T'$, we can now turn our attention to finding a $k$-mismatch occurrence of $P$ in $T$. If $R$ starts in $P$ at position $\rho$, our candidate position for a $k$-mismatch occurrence given a $k'$-mismatch occurrence at position $\gamma$ is position $\gamma - \rho$. However, we have to be very careful when doing this: in each block there might be more than a constant number of $k'$-mismatch occurrences of $R$ in $T'$. This forbids us to find $\textit{before}$ one by one all $k'$-mismatch occurrences of $R$ in $T'$, and to verify $\textit{afterwards}$ all candidate positions which were found for each block as we did in \cref{lem:breakcase}. As retrieving all $k'$-mismatch occurrences might be too expensive, we check for a $k$-mismatch occurrence directly in our subroutine for finding a $k'$-mismatch occurrence only in case it was successful. This complicates the analysis, as there might be subroutines that run longer than other. To solve this issue we will use quantum search with variable times.

For this purpose, let $\Pi_i$ be the running time that we spend on the $i$-th block. Moreover, let $\tau_i$ denote the number of $k'$-mismatch occurrences that we have found in the $i$-th block we partitioned $T$ into. Note, $\tau_i$ is also the number of times we execute the subroutine described in \cref{lem:candidatepos} in the $i$-th block. We have to take into account two terms in $\Pi_i$: preprocessing $T'$ in $\tilde{O}(\sqrt{k/m} \cdot |R|)$ and verifying all candidate positions by performing a quantum search with variable times. For $\tau_i$ candidate positions this takes $\tilde{O}(k/m \cdot |R| + \sqrt{km})$ time, whereas for the remaining ones, which are at most $O(|R|)$, it takes only $O(k/m \cdot |R|)$. As a result, using quantum search with variable times we have that
\begin{align*}
\Pi_i &\in \tilde{O}\left(\sqrt{k/m} \cdot |R| + \sqrt{|R| \cdot (k/m \cdot |R|)^2 + \tau_i \cdot (k/m \cdot |R| + \sqrt{km})^2}\right)   \\
&\leq \tilde{O}\left(\sqrt{k/m} \cdot |R| + \sqrt{|R| \cdot (k/m \cdot |R|)^2 + \tau_i \cdot km}\right),
\end{align*}
where the last equation follows from the fact that $(a + b)^2 \in O(a^2 + b^2)$ and $\tau_i \leq |T_i| \leq 3/2 \cdot |R|$. Now, we use again the quantum search with variable times over all blocks of length $|R|/2$ that we divided $T$ into. The overall running time $\Pi$ of the quantum algorithm becomes
\begin{align*}
\Pi \in \tilde{O}\left(\sqrt{\sum_i \Pi_i^2}\right) &\leq  \tilde{O}\left(\sqrt{\sum_{i} \left(\sqrt{k/m} \cdot |R| + \sqrt{|R| \cdot (k/m \cdot |R|)^2 + \tau_i \cdot km}\right)^2}\right) \\
&\leq \tilde{O}\left(\sqrt{\sum_{i} \left(k/m \cdot |R|^2 + |R| \cdot (k/m \cdot |R|)^2 + \tau_i \cdot km\right)}\right)\\
&\leq \tilde{O}\left(\sqrt{n/|R| \cdot k/m \cdot |R|^2 + n/|R| \cdot |R| \cdot (k/m \cdot |R|)^2 + \sum_i \tau_i \cdot km}\right)\\
&\leq  \tilde{O}\left(\sqrt{n} \cdot \sqrt{k/m \cdot |R|} + \sqrt{n} \cdot (k/m \cdot |R|) + \sqrt{\sum_i \tau_i} \cdot \sqrt{km}\right)\\
&\leq  \tilde{O}(\sqrt{kn} + k\sqrt{n} + \sqrt{nk/m} \cdot \sqrt{km}) \leq  \tilde{O}(k\sqrt{n}),
\end{align*}
where we have used that $\sum_i \tau_i \in O(nk/m)$. This concludes the proof.
\end{proof}

\begin{lem}
\label{lem:approxcase}
Let $P \in \Sigma^m$, let $T\in \Sigma^n$, and let $k \in [1\dd m]$ be a threshold. Further, assume $P$ has a primitive \textnormal{approximate period} $Q$ with $|Q| \leq m/128k$ and $\delta_H(P, Q^*) < 8k$. 

Then, we can verify the existence of a $k$-mismatch occurrence of $P$ in $T$ (and report its starting position in case it exists) in $\tilde{O}(k\sqrt{n})$ time.
\end{lem}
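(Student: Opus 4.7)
The plan is to partition the text $T$ into overlapping blocks and then reduce the search inside each block to an application of \cref{lem:approxmatching} with $R := P$. Concretely, I would cover $T$ with $\Theta(n/m)$ overlapping blocks $T_1,T_2,\dots$, each of length $\lceil 3m/2\rceil$ and shifted by $m/2$ positions from one another. This ensures that every potential $k$-mismatch occurrence of $P$ in $T$ has its starting position in the first $m/2$ indices of some block $T_i$, and that the entire occurrence lies within $T_i$.

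For each individual block $T_i$ I would invoke \cref{lem:approxmatching} with $R=P$ and with the block playing the role of $T$. The hypotheses of that lemma are met: $|P|=m$ and $|T_i|\le 3m/2$; $m/8k \le |P| \le m$; $k\le m$; $|Q|\le m/128k$; and $\delta_H(P,Q^*) < 8k = \lceil 8k/m\cdot|P|\rceil$. Taking $k'=k$ is permitted, since $k\le \min(\lfloor 4k/m\cdot m\rfloor,k)$. The lemma then provides an $\tilde O(\sqrt{km})$-time preprocessing of the block, after which any candidate starting position $\gamma$ inside the block can be tested for a $k$-mismatch occurrence of $P$ in $\tilde O(k)$ time by comparing only those indices where $P$ or $T_i$ differ from the relevant shift of $Q^*$.

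After this preprocessing I would Grover-search the first $m/2$ candidate positions of each block (positions misaligned with the chosen shift of $Q$ simply fail the test), giving a per-block running time of $\tilde O(\sqrt{km}+\sqrt{m}\cdot k)=\tilde O(k\sqrt{m})$. Finally, running an outer Grover search across the $\Theta(n/m)$ blocks yields total running time
\[
\tilde O\!\left(\sqrt{n/m}\cdot k\sqrt{m}\right)=\tilde O(k\sqrt{n}),
\]
matching the claimed bound; if a witness is found, its starting position is returned.

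The main subtlety will be the behavior of blocks that contain no $k$-mismatch occurrence. In that case the alignment of $Q$ produced by the preprocessing need not be meaningful, and one must verify that the preprocessing still terminates in its stated $\tilde O(\sqrt{km})$ budget (it does, since the mismatch search is capped at $O(k)$ mismatches) and that every in-block verification then correctly reports ``no occurrence''. A secondary concern is the nesting of two Grover searches: since after standard error amplification the inner subroutine admits a deterministic upper bound on its running time, the outer Grover remains well-defined, and we can avoid the variable-time machinery that was needed in \cref{lem:repregcase}.
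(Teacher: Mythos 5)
Your proposal is correct and follows essentially the same route as the paper: cover $T$ with $\Theta(n/m)$ blocks of length $O(m)$, apply \cref{lem:approxmatching} with $R=P$ and $k'=k$ to each block (preprocessing in $\tilde O(\sqrt{km})$, verification in $\tilde O(k)$ per position), and nest an inner Grover search over positions within a block inside an outer Grover search over blocks to get $\tilde O(\sqrt{n/m}\cdot(\sqrt{km}+k\sqrt{m}))=\tilde O(k\sqrt{n})$. Your explicit verification of the hypotheses of \cref{lem:approxmatching} and the remark that variable-time search is unnecessary here are both consistent with the paper's argument.
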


\begin{proof}
Note, the structural properties of $P$ in this case resemble the properties that we assumed for $R$ in the proof of \cref{lem:repregcase}. This allows us to use again \cref{lem:approxmatching}.

We partition the text $T$ into blocks of length $m/2$, and use Grover’s algorithm to search for a block which contains the starting position of a $k'$-mismatch occurrence of $R$. Note, this corresponds to searching for a $k$-mismatch occurrence of $P$ which is completely contained in a substring of $T$ of length $3/2 \cdot m - 1$. Let us fix one of these substrings, and let us denote it with $T'$. Using \cref{lem:approxmatching} we can preprocess $T'$ in $\tilde{O}(\sqrt{km})$ time and we can verify the existence of a $k$-mismatch occurrence at any position of $T'$ in $\tilde{O}(k)$ time. As $|T'| \in O(m)$, we can use Grover's algorithm to search for a $k$-mismatch occurrence in $T'$ in $\tilde{O}(k\sqrt{m})$. By performing another Grover search over all blocks we can find a $k$-mismatch occurrence of $P$ in $T$ in $\tilde{O}(\sqrt{n/m} \cdot (\sqrt{km} + \sqrt{m} \cdot k)) = O(k\sqrt{n})$ time.
\end{proof}

We have now covered all three cases. From these \cref{thm:mismatch} follows immediately.

    \begin{proof}[Proof of \cref{thm:mismatch}]
The first step of the algorithm consists into computing the decomposition of the string as described in \cref{lem:quantumdecomp} in $\tilde{O}(\sqrt{km})$ time. We then perform a case distinction depending on which case of \cref{lem:decomp} occurred. \cref{lem:breakcase}, \cref{lem:repregcase}, and \cref{lem:approxcase} show how to devise algorithms that verify the existence of a $k$-mismatch occurrence of $P$ in $T$ (and report its starting position in case it exists) in $\tilde{O}(k\sqrt{n})$ time for each of these cases.
\end{proof}

\subsection{Improvement via String Synchronizing Sets}
\label{sub:query}
In this subsection, we show that it is possible to slightly modify the algorithm in order to improve its query complexity (up to an $m^{o(1)}$ factor). The first step towards an improvement consist into adapting \cref{lem:candidatepos} aiming for a better query complexity. We do this by using \cref{thm:candidatepos-sync}. Another adjustment is needed when matching with few errors a string with an \textit{approximate period} in the text. For this purpose we reformulate \cref{lem:approxmatching} obtaining \cref{lem:approxmatching2}.

\begin{lem}
\label{lem:approxmatching2}
Let $R$ and $T$ be strings over the alphabet $\Sigma$ such that $|R| \leq |T| \leq 3/2 \cdot |R|$, and let $m, k$ be integers such that $m/8k \leq |R| \leq m$ and $k\leq m$. Moreover, assume $R$ has a primitive \textnormal{approximate period} $Q$ with $|Q| \leq m/128k$ and $\delta_{H}(R, Q^*) \leq \lceil 8k/m \cdot |R| \rceil$. 

Then, for any integer $k' \leq \min(\lfloor 4k/m \cdot |R| \rfloor, k)$ and $\gamma \in [1\dd |T|-|R|]$, after $\tilde{O}(\sqrt{k/m} \cdot |R|)$ query and time preprocessing, we can verify the existence of a $k'$-mismatch occurrence of $R$ at position $\gamma$ of $T$ in $\tilde{O}(k/m \cdot |R|)$ time. 

Additionally, one of the two following holds:
\begin{itemize}
    \item Verifying the existence of a $k'$ mismatch at position $\gamma$ does not require any further query complexity.  
    \item The number of candidate positions of $k'$-mismatch occurrences of $R$ in $T$ are bound by $3/2 \cdot m/k$, each of them can be verified in $\tilde{O}(k/m \cdot |R|)$ query complexity.
\end{itemize}
\end{lem}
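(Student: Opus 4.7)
The plan is to follow the proof of \cref{lem:approxmatching} almost verbatim: the preamble identifies a consistent shift of $Q^{*}$ inside $T$, and the remaining task is to certify or refute a $k'$-mismatch occurrence at each of the positions compatible with that shift. The upgrade to query complexity comes from replacing the direct primitives by either a precomputed mismatch list or the LCE data structure of \cref{thm:candidatepos-sync} at the appropriate step.

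The preamble is unchanged. Sample uniformly a length-$2|Q|$ segment of $T[|R|/2\dd|R|]$, use quantum exact pattern matching (\cref{thm:matching}) to find an occurrence $\theta$ of $Q$ inside it, and boost to high probability so that $\theta$ is aligned with the approximate period of $T$. Then extend $\theta$ leftwards and rightwards via \cref{lem:expsearch}, halting once $2h$ mismatches with $Q^{*}$ have been collected, where $h=\delta_H(R,Q^{*})=O(k/m\cdot|R|)$; this determines the interval $[i_L,i_R]$ in which every $k'$-mismatch occurrence of $R$ in $T$ must lie. Both substeps cost $\tilde O(\sqrt{k/m}\cdot|R|)$ in queries and in time, matching the preprocessing budget. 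After the preamble the remaining candidates are exactly the positions $\gamma\in[i_L,\,i_R-|R|+1]$ with $\gamma\equiv\theta\pmod{|Q|}$.

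I would then prove the dichotomy by giving two verification strategies, of which the algorithm picks whichever is advantageous. The first bullet is obtained by additionally invoking \cref{lem:expsearch} to precompute the full mismatch sets between $Q^{*}$ and $R$, and between $Q^{*}$ and $T[i_L\dd i_R]$, in $\tilde O(\sqrt{k/m}\cdot |R|)$ queries. Given those two lists the identity
\[
\delta_H(T[\gamma\dd\gamma+|R|],R)=\delta_H(T[\gamma\dd\gamma+|R|],Q^{*})+h-|M_\gamma|
\]
from the proof of \cref{lem:approxmatching} decides $\gamma$ by a purely local $O(h)=\tilde O(k/m\cdot|R|)$ scan that issues no fresh oracle queries. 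The second bullet is obtained by skipping this precomputation and instead running \cref{thm:candidatepos-sync} on $T$ with a suitably chosen $\tau$; verifying a single candidate $\gamma$ then reduces to computing $\delta_H(T[\gamma\dd\gamma+|R|],R)$, which in this regime is bounded by $O(k'+1)=O(k/m\cdot|R|)$ kangaroo-jump LCE queries against the prepared structure and resolves in $\tilde O(k/m\cdot|R|)$ queries per candidate. The candidate set is the arithmetic progression $\gamma\equiv\theta\pmod{|Q|}$ inside $[i_L,\,i_R-|R|+1]$, and combining $|T|\le 3|R|/2$ with the relations between $|R|$, $|Q|$, and $k$ caps its size by $3/2\cdot m/k$.

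The main obstacle is the second bullet. The first bullet is essentially a recap of the post-processing step of \cref{lem:approxmatching} together with the observation that it is query-free once the mismatch lists are stored. For the second, the technical work is threefold: tuning $\tau$ so that the synchronizing-set preprocessing genuinely fits in the $\tilde O(\sqrt{k/m}\cdot|R|)$ query budget while each verification stays within $\tilde O(k/m\cdot|R|)$, verifying that $O(k'+1)$ kangaroo-jump LCE queries suffice to certify $\delta_H\le k'$, and finally establishing the $3/2\cdot m/k$ candidate bound. The last item is the most delicate and rests on a careful exploitation of the size hierarchy $|Q|\le m/128k$, $|R|\ge m/8k$, $|R|\le m$, and $|T|\le 3|R|/2$ to control how many shifts of $Q$ fit inside the aligned window.
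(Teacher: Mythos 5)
Your skeleton (reuse the preamble of \cref{lem:approxmatching}, then supply one strategy per bullet) is the right shape, but the proposal is missing the single idea that makes the dichotomy true: a case distinction on the length of $Q$ against the threshold $k/m\cdot|R|$. Your candidate set is the arithmetic progression with common difference $|Q|$, so its size is $\Theta(|T|/|Q|)$, and the bound $3/2\cdot m/k$ only follows when $|Q|\ge k/m\cdot|R|$. The hypotheses give no lower bound on $|Q|$ (it could be a single character), in which case there are $\Theta(|R|)$ candidates and the second bullet's count is simply false. You flag this count as ``the most delicate'' step and defer it to ``a careful exploitation of the size hierarchy,'' but no such unconditional argument exists; the paper's proof is exactly the split: if $|Q|\ge k/m\cdot|R|$, the count $|T|/|Q|\le (3/2)|R|/(k/m\cdot|R|)=3/2\cdot m/k$ gives the second bullet, and if $|Q|<k/m\cdot|R|$, one reads all of $Q$ within the $\tilde O(k/m\cdot|R|)\le\tilde O(\sqrt{k/m}\cdot|R|)$ preprocessing budget, after which every character of $R$ and of $T[i_L\dd i_R]$ is known and verification is query-free, giving the first bullet.

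Separately, your implementation of the second bullet does not meet the stated bound. Verifying one candidate by kangaroo jumping against \cref{thm:candidatepos-sync} costs $\tilde O(k'\sqrt{\tau})$ queries, and $k'$ can be $\Theta(k/m\cdot|R|)$; to get this below $\tilde O(k/m\cdot|R|)$ you would need $\tau=\tilde O(1)$, whose preprocessing is $\tilde\Omega(|T|)=\tilde\Omega(|R|)$, exceeding the allowed $\tilde O(\sqrt{k/m}\cdot|R|)$ since $k\le m$. Conversely, any $\tau$ large enough to fit the preprocessing budget (e.g., $\tau\ge m/k$) makes each verification cost $\tilde\Omega(\sqrt{k/m}\cdot|R|)$ queries, a factor $\sqrt{\tau}$ too large. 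The LCE structure is the wrong tool inside this lemma (it is used elsewhere in \cref{thm:querymismatch} for verifying full-pattern candidates); the paper verifies each of the $O(m/k)$ candidates with the same $O(h)$-size mismatch-list scan as in \cref{lem:approxmatching}, which costs $\tilde O(k/m\cdot|R|)$ in both queries and time. Your first-bullet observation (that the scan issues no fresh queries once the mismatch lists and the characters at mismatch positions are stored) is plausible and close in spirit to the paper's first case, though the paper prefers the cleaner route of reading $Q$ outright, which is affordable precisely and only in the small-$|Q|$ case.
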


\begin{proof}
The proof is almost identical to the proof of \cref{lem:approxmatching}, therefore we will only highlight the differences. Recall, if we assume that we match $Q$ exactly at position $\theta$, then our candidate positions for finding a $k'$-mismatch occurrence of $R$ in $T$ are all positions in $T$ which are at distance of a multiple of $Q$ from $\theta$. This fact leads us to perform a case distinction depending on the length of the \textit{approximate period} $Q$. If $|Q| \geq k/m \cdot |R|$, then we have at most $|T|/|Q| \leq 3/2 \cdot |R| / (k/m \cdot |R|) = 3/2 \cdot m/k$ candidate positions, each of them can be verified in $\tilde{O}(k/m \cdot |R|)$ query and time complexity using the same approach as in the proof of \cref{lem:approxmatching}. Conversely, if $|Q| < k/m \cdot |R|$, then we can read all characters of $Q$ one by one requiring $\tilde{O}(k/m \cdot |R|) \leq \tilde{O}(\sqrt{k/m} \cdot |R|)$ query complexity. As a result, we know all characters of $R$ and $T[i_L\dd i_R]$ and we can look for all $k'$-mismatch occurrence of $R$ in $T[i_L\dd i_R]$ without spending any additional query complexity. 
\end{proof}

Note, every time we use \cref{lem:approxmatching2} in our query complexity analysis we will assume that the second case occurs as we consider only the worst case. Also, notice that the time complexity of verifying a position remained the same as in \cref{lem:approxmatching}. We now prove the main result of this section.

\begin{proof} [Proof of \cref{thm:querymismatch}]
We proceed by partitioning the text $T$ into $O(n/m)$ blocks of length $m/2$, and use Grover’s algorithm to search for a block which contains the starting position of a $k$-mismatch occurrence of $T$. Note, this corresponds to searching for a $k$-mismatch occurrence of $T$ which is completely contained in a substring of $T$ of length at most $3/2 \cdot m - 1$. Now, fix one of these blocks and denote it with $T'$. If we show that we can find a $k$-mismatch occurrence of $P$ in $T'$ in $\tilde O (k^{3/4}m^{1/2 + o(1)})$ query complexity and $\tilde O (k\sqrt{m})$ time complexity, finding a $k$-mismatch occurrence of $P$ in $T$ requires $\tilde O (\sqrt{n/m} \cdot k^{3/4}m^{1/2 + o(1)}) = \tilde O (k^{3/4}n^{1/2}m^{o(1)})$ query complexity and $\tilde O (\sqrt{n/m} \cdot k\sqrt{m}) = \tilde O (k\sqrt{n})$ time complexity.

First, we consider the case $m < k^{3/2}$. We read all characters of $T'$ and $P$. This requires $O(m) \leq \tilde O (k^{3/4}\sqrt{m})$ query and time complexity. Now, we use \cref{thm:mismatch} to find a $k$-mismatch occurrence of $P$ in $T'$. Note, this will not cost us any further query complexity as we execute the algorithm on the string we just read.

Otherwise, if $m \geq k^{3/2}$, we perform the preprocessing described in \cref{lem:quantumdecomp} in $\tilde{O}(\sqrt{km})$ time, and we build the data structure of \cref{thm:candidatepos-sync} in $\tilde{O}(k^{3/4}m^{1/2+o(1)})$ query and time complexity by setting $\tau = m/k^{3/2}$. Note, this allows us also to verify a candidate position in $\tilde{O} (k^{1/4}\sqrt{m})$ query complexity. Next, we perform a case distinction depending on which case of \cref{lem:decomp} occurred:

\begin{enumerate}[label=\textbf{(\alph*)}]
\item If $|P|$ contains $2k$ disjoint \textnormal{breaks} $B_1, \ldots, B_{2k}$ each having period $\per(B_i) > m/128k$ and the length $|B_i| = \lfloor m/8k \rfloor$, we use the same approach as in \cref{lem:breakcase}. This time, we verify all candidate position taking $\tilde{O}(\sqrt{k} \cdot k^{1/4}m^{1/2}) = \tilde{O}(k^{3/4}m^{1/2})$ query and time complexity. Note, this time the candidate position are $O(k)$ and not $O(nk/m)$ anymore, since we are working with $T'$ and not with $T$. The same holds also for the second case.

\item Next, we consider the case where $P$ contains $r$ disjoint \textnormal{repetitive regions} $R_1, \dots, R_r$ of total length $\sum_{i=1}^{r} |R_i| \geq 3/8 \cdot m$ such that each region $|R_i| \geq m/8k$ and has a primitive \textnormal{approximate period} $Q_i$ with $|Q_i| \leq m/128k$ and $\delta_{H}(R_i, Q_i^*) = \lceil 8k/m \cdot |R_i| \rceil$. We use the same algorithm as described in the proof of \cref{lem:approxcase}, replacing \cref{lem:approxmatching} and \cref{lem:candidatepos} with respectively \cref{lem:approxmatching2} and \cref{thm:candidatepos-sync}. Let us analyze the query complexity first, which is again tricky. Let $\Pi_i'$ be the query complexity that we spend on the $i$-th block, and let $\tau_i$ denote the number of $k'$-mismatch occurrences that we have found in the $i$-th block we partitioned $T'$ into. Then, we can upperbound $\Pi_i'$ in both cases of \cref{lem:approxmatching2} with 
\begin{align*}
\Pi_i' &\in \tilde{O}\left(\sqrt{k/m} \cdot |R| + \sqrt{m/k \cdot (k/m \cdot |R|)^2 + \tau_i \cdot (k/m \cdot |R| + k^{1/4}m^{1/2})^2}\right)   \\
&\leq \tilde{O}\left(\sqrt{k/m} \cdot |R| + \sqrt{k/m \cdot |R|^2 + \tau_i \cdot \sqrt{k}m}\right)\\
&\leq \tilde{O}\left(\sqrt{k/m} \cdot |R| + \sqrt{\tau_i}k^{1/4}m^{1/2}\right).
\end{align*}

The overall query complexity $\Pi'$ for finding a $k$-mismatch occurrence of $P$ in $T$ becomes
\begin{align*}
\Pi' \in \tilde{O}\left(\sqrt{\sum_i \Pi_i'^2}\right) &\leq  \tilde{O}\left(\sqrt{\sum_{i} \left(\sqrt{k/m} \cdot |R| + \sqrt{\tau_i}k^{1/4}m^{1/2}\right)^2}\right) \\
&\leq \tilde{O}\left(\sqrt{\sum_{i} \left(k/m \cdot |R|^2 + \tau_i \cdot \sqrt{k}m\right)}\right)\\
&\leq \tilde{O}\left(\sqrt{ m/|R| \cdot k/m \cdot |R|^2 + \sum_i \tau_i \cdot \sqrt{k}m}\right)\\
\end{align*}
\begin{align*}
\hspace{3cm}&\leq  \tilde{O}\left(\sqrt{k \cdot |R|} + \sqrt{\sum_i \tau_i} \cdot k^{1/4}m^{1/2}\right)\\
&\leq  \tilde{O}(\sqrt{km} + \sqrt{k} \cdot k^{1/4}m^{1/2}) \leq  \tilde{O}(k^{3/4}m^{1/2}).
\end{align*}

Similarly, we can show that the time complexity is $\tilde O (k\sqrt{m} + k^{3/4}m^{1/2})$. The calculation are exactly the same as in \cref{lem:repregcase} except for the fact that verifying a position costs $k^{3/4}m^{1/2}$ and not $\sqrt{km}$.

\item Lastly, if $|P|$ has a primitive \textit{approximate period} $Q$ with $|Q| \leq m/128k$ and $\delta_H(P,Q^*) < 8k$, then, we use the same approach as in \cref{lem:approxcase}. This time, we use \cref{lem:approxmatching2} instead of \cref{lem:approxmatching}. By doing so we obtain an algorithm requiring $\tilde{O}(\sqrt{km})$ query complexity and $\tilde O(k\sqrt{m})$ time complexity. Note, this time we only need $\tilde{O}(\sqrt{km})$ query complexity, since in the worst case we have to verify $O(m/k)$ positions requiring $\tilde{O}(\sqrt{m/k} \cdot k) = \tilde{O}(\sqrt{km})$ query complexity.

\end{enumerate}
Hence, we have shown how in all cases when $m \geq k^{3/2}$ we can find a $k$-mismatch of $P$ in $T'$ requiring no more than $\tilde O (k^{3/4}m^{1/2 + o(1)})$ query complexity and $\tilde O (k\sqrt{m})$ time complexity.
\end{proof}

\section{Open Problems}
\label{sec:open}
We conclude by mentioning several open questions related to our work.
\begin{itemize}
	\item Can we improve the extra $\tau^{o(1)}$ factors in the sparsity and the time complexity of our string synchronizing set to poly-logarithmic?
	\item Can we improve the quantum query complexity of the $k$-mismatch matching algorithm to closer to the lower bound $\sqrt{kn}$?
	\item String synchronizing sets have had many applications in classical string algorithms. Can our new result find more applications in quantum string algorithms?  An interesting question is the quantum query complexity for computing  the edit distance of two strings, provided that it is no larger than $k$.
\end{itemize}

\section*{Acknowledgement} We thank Virginia Vassilevska Williams for many useful discussions.

	\bibliographystyle{alphaurl} 
	\bibliography{main}

	\appendix
\section{The Deterministic Sampling Technique}
\label{app:ds}
In this section we include the proof of \cref{lem:ds} (restated below) given by Wang and Ying \cite{ying}, based on a modification of Vishkin's deterministic sampling \cite{DBLP:journals/siamcomp/Vishkin91,rameshvinay}. \cite{ying} also improved the logarithmic factors in the complexity, which we do not focus on here.
\lemdsrestate*

\begin{proof}
    \begin{algorithm2e}
\caption{Constructing deterministic samples given $S\in \Sigma^n$ and $\seed\in \{0,1\}^w$ }
\label{alg:ds}
\Begin{
Initialize all offsets in $[0\dd \lfloor n/2\rfloor ]$ as \emph{alive}\;
Initialize empty checkpoint list $C$, and $\mathsf{flag}\gets \textbf{false}$\;
\While{\textbf{true}}{
    $p,q \gets $ minimum and maximum offsets alive, respectively \label{line:find}\;
    Find the minimum $i \in  [1+p\dd n+q]$ such that $S[i-p]\neq S[i-q]$ \label{line:pq}\;
    \If{such $i$ exists \label{line:if}}{
        Append $i$ to list $C$, and return ``failure'' if $\mathsf{length}(C)>w$\;
        Let $c_i \gets \begin{cases}
            S[i-p], & \text{if }\seed[\mathsf{length}(C)]=0, \\ S[i-q], & \text{if }\seed[\mathsf{length}(C)]=1; \end{cases}$

             Kill all offsets $\delta'$ with $ S[i-\delta']\neq c_i$ \label{line:kill}\;
    }\ElseIf{$p=q$ or $\mathsf{flag}=\textbf{true}$ \label{line:ei}}{
            \KwRet{offset $\delta:=p$ with checkpoint list $C$ \label{line:ret}}
        }\Else {
            $\mathsf{flag}\gets \textbf{true}$\;
            Kill offset $q$\label{line:kill2}\;
        }
}
}
\end{algorithm2e} 

The algorithm is summarized in Algorithm~\ref{alg:ds}. It terminates in no more than $w+1=O(\log n)$ rounds.  To implement this algorithm, observe that: whether an offset is killed or not can be decided by checking against all checkpoints currently in list $C$ (Line~\ref{line:kill}) (and additionally Line~\ref{line:kill2} if $\mathsf{flag}=\textbf{true}$), in $\polylog(n)$ time. Hence, Line~\ref{line:find} can be implemented using quantum minimum finding, in $\tilde O(\sqrt{n})$ time. Line~\ref{line:pq} can be implemented using Grover search with binary search. Hence, the overall quantum time complexity is $\tilde O(\sqrt{n})$.

Observe that, at Line~\ref{line:kill}, over the random coin $\sigma[\mathsf{length}(C)]\in \{0,1\}$, in expectation at least half of the  offsets that are currently alive will be killed by the current checkpoint.  Hence, after finding $w$ checkpoints, the expected number of offsets that are still alive is at most $\lfloor n/2\rfloor / 2^{w}$.  When there is only one offset alive, Line~\ref{line:ei} successfully terminates. So by Markov's inequality, it reports ``failure'' with no more than $n/2^w$ probability.

It remains to show that the final property is satisfied if the algorithm successfully terminates.

First, notice that whenever an offset $\delta'$ is killed by Line~\ref{line:kill}, it holds for the checkpoint $i$ that $S[i-\delta']\neq c_i$, while $S[i-\delta'']=c_i$ holds for all the offsets $\delta''$ that are still alive, in particular for the offset $\delta$ that will be returned finally. Hence, $S[i-\delta']\neq S[i-\delta]$ for this checkpoint $i$.

Next, observe that whenever an offset $\delta'$ is killed by Line~\ref{line:kill}, all other offsets $\delta'$ with $\per(S)\mid (\delta - \delta')$ are also killed at this point. So we only need to consider congruence class of offsets modulo $\per(S)$. Note that the \textbf{if} check at Line~\ref{line:if} fails if and only if $\per(S) \mid (q-p)$.

If Line~\ref{line:ei} is reached with $\mathsf{flag} = \mathbf{false}$, then $p=q=\delta$, and all other offsets $\delta'\neq \delta$ must have been killed by Line~\ref{line:kill}, so the desired property is satisfied.

Now consider the case where $\mathsf{flag}$ is set $\mathbf{true}$ at some point. Denote the current values of $p,q$ by $p_1,q_1$, which belong to the same congruence class.  Then, consider the next time where the \textbf{if} check at Line~\ref{line:if} fails, at which point the values of $p,q$ are denoted $p_2,q_2$, which belong to the same congruence class. We know $p_1\le p_2 \le q_2 < q_1$. Hence, the congruence classes of offsets in the interval $[p_1\dd p_2)$ have been killed by Line~\ref{line:kill}, and the congruence classes of offsets in the interval $[q_2\dd q_1)$ have also been killed by Line~\ref{line:kill}. This implies that $\bar p_2= \bar q_2 \in \Z_{\per(S)}$ is the only congruence class that is alive. Hence, the desired property is satisfied.
\end{proof}

\end{document}